\newtheorem{thm}{\textbf{Theorem}}
\newtheorem{defn}{\textbf{Definition}}
\newtheorem{lem}{\textbf{Lemma}}
\newtheorem{cor}{\textbf{Corollary}}
\newtheorem{example}{\textbf{Example}}
\newtheorem{prob}{\textbf{Problem}}
\title{NetClus: A Scalable Framework for Locating Top-K Sites for Placement of Trajectory-Aware Services}
\author{
\alignauthor
Shubhadip Mitra$^1$ ~
Priya Saraf$^1$ ~
Richa Sharma$^1$ ~
Arnab Bhattacharya$^1$ ~
Sayan Ranu$^2$
	$^1$\affaddr{Dept. of Computer Science and Engineering, Indian Institute of Technology, Kanpur, India}\\
	$^2$\affaddr{Dept. of Computer Science and Engineering, Indian Institute of Technology, Delhi, India}\\
	\email{$^1$\{smitr,priyas,richash,arnabb\}@cse.iitk.ac.in ~ $^2$sayanranu@cse.iitd.ac.in}
}
\newcommand{\ipopt}{\textsc{OPT}\xspace}
\newcommand{\incg}{\textsc{Inc-Greedy}\xspace}
\newcommand{\nc}{\textsc{NetClus}\xspace}
\newcommand{\topscost}{\textsc{Tops-Cost}\xspace}
\newcommand{\topscap}{\textsc{Tops-Capacity}\xspace}
\newcommand{\fmnc}{\textsc{FM-NetClus}\xspace}
\newcommand{\eps}{\ensuremath{\gamma}\xspace}
\newcommand{\epsinc}{\ensuremath{(1 + \gamma)}\xspace}
\newcommand{\ds}{\ensuremath{d}\xspace}
\newcommand{\dr}{\ensuremath{d_r}\xspace}
\newcommand{\pref}{\ensuremath{\psi}\xspace}
\newcommand{\tops}{TOPS\xspace}
\newcommand{\topsone}{\textsc{Tops1}\xspace}
\newcommand{\topstwo}{\textsc{Tops2}\xspace}
\newcommand{\topsthree}{\textsc{Tops3}\xspace}
\newcommand{\topsfour}{\textsc{Tops4}\xspace}
\newcommand{\topsc}{TOPS-Cluster\xspace}
\newcommand{\traj}{\ensuremath{\mathcal{T}}\xspace}
\newcommand{\s}{\ensuremath{\mathcal{S}}\xspace}
\newcommand{\q}{\ensuremath{\mathcal{Q}}\xspace}
\newcommand{\jd}{\ensuremath{\alpha}\xspace}
\newcommand{\tc}{\ensuremath{TC}\xspace}
\newcommand{\cc}{\ensuremath{CC}\xspace}
\newcommand{\stc}{\ensuremath{SC}\xspace}
\newcommand{\cl}{\ensuremath{\mathcal{CL}}\xspace}
\newcommand{\tl}{\ensuremath{\mathcal{TL}}\xspace}
\newcommand{\inst}{\ensuremath{\mathcal{I}}\xspace}
\newcommand{\opt}{\textsc{Opt}\xspace}
\newcommand{\inc}{\textsc{IncG}\xspace}
\newcommand{\incfm}{\textsc{FMG}\xspace}
\newcommand{\ncfm}{\textsc{FMNetClus}\xspace}
\newcommand{\bs}{Beijing-Small\xspace}
\newcommand{\bl}{Beijing\xspace}
\newcommand{\subfigwidth}{0.48\columnwidth}
\newcommand{\figcaption}[1]{\vspace*{-3mm}\caption{#1}\vspace*{-4mm}}
\newcommand{\tabcaption}[1]{\vspace*{-3mm}\caption{#1}\vspace*{-4mm}}
\newcommand{\moveups}{\vspace*{-1mm}}
\newcommand{\moveup}{\vspace*{-2mm}}
\newcommand{\moveupb}{\vspace*{-3mm}}
\begin{document}

\maketitle

\begin{abstract} 
	Facility location queries identify the best locations to set up new
	facilities for providing service to its users. Majority of the existing
	works in this space assume that the user locations are \emph{static}. 
	Such limitations are too restrictive for planning many modern real-life
	services such as fuel stations, ATMs, convenience stores, cellphone
	base-stations, etc. that are widely accessed by mobile users.  The
	placement of such services should, therefore, factor in the \emph{mobility
	patterns} or \emph{trajectories} of the users rather than simply their
	static locations.  In this work, we introduce the \emph{\tops
	(Trajectory-Aware Optimal Placement of Services)} query that locates the
	best $k$ sites on a road network.  The aim is to optimize a wide class of
	objective functions defined over the user trajectories. We show that the
	problem is NP-hard and even the greedy heuristic with an approximation
	bound of $1-1/e$ fails to scale on urban-scale datasets.  To overcome this
	challenge, we develop a \emph{multi-resolution clustering} based indexing
	framework called \emph{\nc}. Empirical studies on real road network
	trajectory datasets show that \nc offers solutions that are comparable in
	terms of quality with those of the greedy heuristic, while having practical
	response times and low memory footprints.  Additionally, the \nc framework
	can absorb dynamic updates in mobility patterns, handle constraints such as
	site-costs and capacity, and existing services, thereby providing an
	effective solution for modern urban-scale scenarios.
\end{abstract}

\keywords{Spatio-temporal databases; Facility location queries; Optimal location queries; Road networks; Trajectory-aware services;}

\section{Introduction and Motivation}
\label{sec:Intro}

\emph{Facility Location queries}  
(or \emph{Optimal location (OL) queries}) in a road network aim to identify the best
locations to set up new facilities with respect to a given service
\cite{tops_icde,du2005optimal, ghaemi2010optimal, xiao2011optimal, chen2014efficient,li2016mining}.
Examples include setting up new retail stores, gas stations, or cellphone base
stations. OL queries also find applications in various spatial decision support
systems, resource planning and infrastructure management\cite{myinfocom,
PeopleInMotionCDR}.

\begin{figure}[t]
  \centering
  \includegraphics[width=0.60\columnwidth]{./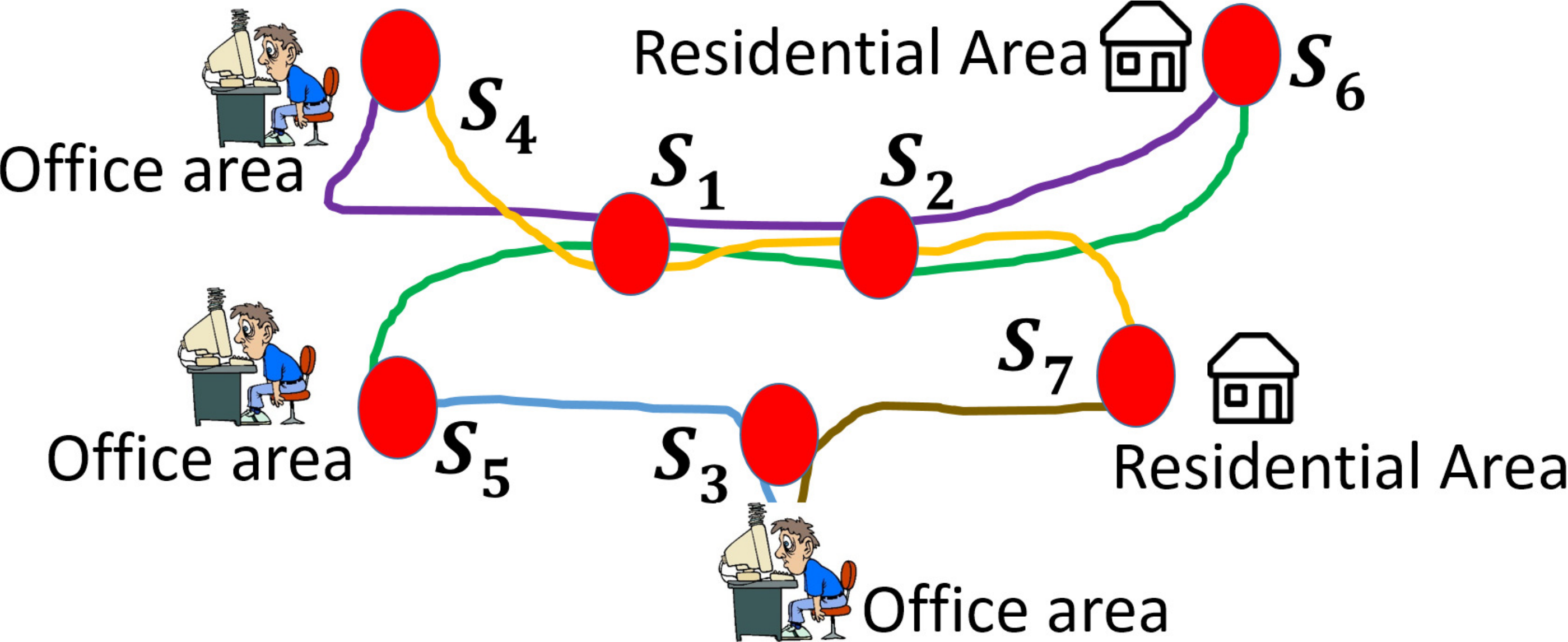}
  \figcaption{Illustration of the need of trajectory-aware querying for optimal
  locations. Each line represents a separate user trajectory.}
  \label{fig:motivation}
  \moveup
\end{figure}

With growing applications of data-driven location-based systems, the importance of OL queries is well-recognized in the database community  \cite{zhang2006progressive,xiao2011optimal,chen2014efficient}.
However, most of the existing works assume the users to be \emph{fixed} or \emph{static}.  Such an
assumption is often too prohibitive. For example, services such as gas
stations, ATMs, bill boards, traffic monitoring systems, etc. are widely
accessed by users while commuting.  Further, it is common for many users to make their daily purchases while returning from their offices. Consequently, the placement of facilities
for these services require taking into consideration the mobility patterns (or
trajectories) of the users rather than their static locations. We refer to such
services as \emph{trajectory-aware services}.  Formally, a \emph{trajectory} is
a sequence of location-time coordinates that lie on the path of a moving user. Note that trajectories strictly generalize the static
users' scenario because static users can always be modeled as trajectories with
a single user location. Thus, trajectories capture user patterns more
effectively.
Such trajectory data are commonly available from GPS traces \cite{RideSharing}
or CDR (Call Detail Records) data \cite{PeopleInMotionCDR} recorded through
cellphones, social network check-ins, etc. 
Recently, there have been many works
in the area of trajectory data analytics \cite{edwp,DTW,LCSS, DISSIM, ERP,
lee2007trajectory, kalnis2005discovering, jeung2008discovery, li2010swarm}.

To illustrate the need for trajectory-aware optimal location queries, consider
Fig.~\ref{fig:motivation}. There are 5 locations that are either homes or
offices and 5 trajectories of users commuting across them. A company wants to
open 2 new gas stations.  For simplicity, we assume that a trajectory of a user
is \emph{satisfied} if it passes through at least one gas station. If
only the static locations are considered, i.e., any two out of the five office
and residential areas are to be selected, no combination of gas stations would
satisfy all the users. In contrast, if we factor in the mobility of the users,
choosing $S_1$ and $S_3$ as the installation locations satisfies all the
trajectories of the users.

Note that it is not enough to simply look at trajectory counts in each possible
installation location and then choose the two most frequent ones ($S_1$ and
$S_2$). The combination may not be effective since a large number of
trajectories between them can be common, thereby reducing each other's
utilities.

In this work, we formalize the problem of OL queries over trajectories of users
in road networks. We refer to this as  \emph{\tops (Trajectory-aware Optimal
Placement of Services)} query. Given a set of user trajectories $\mathcal{T}$,
and a set of candidate sites $\mathcal{S}$ over a road network that can host
the services, \tops query with input parameters $k$ and preference function $\pref$ seeks to report the
best $k$ sites $\mathcal{Q} \subseteq \mathcal{S}$ maximizing  a \emph{utility function} that is defined over the preference function $\pref$ which captures  how a particular candidate site is \emph{preferred} by a given  trajectory.

Facility location queries with respect to trajectories have been studied by a number of previous
works \cite{berman1992optimal, berman1995locatingDiscretionary,
berman1995locatingMain, berman2002generalized,berman1998flow, li2013trajectory}.
Although the formulations are not identical, the common eventual goal is to
identify the best sites.  However, all these works remain limited to a
theoretical exercise and cannot be applied in a real-life scenario due to a
number of issues as explained next.

$\bullet$ \textbf{Data-based mobility model:} Existing techniques are neither
based on real trajectories nor on real road networks \cite{berman1992optimal,
berman1995locatingDiscretionary, berman1995locatingMain,
berman2002generalized,berman1998flow}. They base their
solutions on simplistic assumptions such as traveling in shortest paths and
synthetic road networks. It is well known that the shortest path assumption does
not hold in real life \cite{utraj}. In many works, the distances are \emph{not} computed over the road network but approximated using some spatial distance measure such as $L_2$ norm. Our framework is the first to study \tops on
real trajectories over real road networks.

$\bullet$ \textbf{Generic framework:} We develop the first generic framework to
answer \tops queries  across a wide family of preference functions $\pref$ that are non-increasing w.r.t. the distance between any pair of trajectory and candidate site. The proposed framework
encompasses many of the existing formulations and also considers other practical factors such as
capacity constraints, site-costs, dynamic updates, etc. 

$\bullet$ \textbf{Scalability:} The state-of-the-art technique for a basic version of \tops query
\cite{berman1995locatingMain} requires prohibitively large memory.
Consequently, it fails to scale on urban-scale datasets (Further details in Sec.~\ref{sec:exp}). Hence, a scalable
framework for \tops query is a basic necessity. In addition, all OL queries
including \tops are typically
used in an interactive fashion by varying the various parameters such as $k$ and
the coverage threshold $\tau$ \cite{chen2014efficient}. Moreover, in certain ventures, such as deployment of mobile ATM vans(\url{https://goo.gl/WjSPvx}), real-time answers are need based on current trajectory patterns. Hence, practical
response time with the ability to absorb data updates is critical. This factor has been completely ignored in the
existing
works. 

$\bullet$ \textbf{Extensive benchmarking:} Since \tops queries and their
variants are NP-hard, heuristics have been proposed. How do their effectiveness
vary across road-network topologies? Are these heuristics biased towards certain
specific parameter settings?  The existing techniques are generally silent on
these questions.  We, on the other hand, perform benchmarking that is grounded
to reality by extensively studying the performance of \tops across multiple
major city topologies and other important parameters.

To summarize, the proposed framework is the first practical and generic solution
to address \tops queries. Fig.~\ref{fig:flowchart} depicts the top-level flow
diagram of our solution.  Given the raw GPS traces of user movements, they are
map-matched \cite{map1} to the corresponding road network.  Using the
map-matched trajectories, a multi-resolution clustering of the road network is
built to construct the index structure \emph{\nc}. Indexed views of both the
candidate sites and the trajectories are maintained in a compressed format at
various granularities. This completes the offline phase.  In the online phase,
given the query parameters, the optimum clustering resolution to answer the
query is identified, and the corresponding views of the trajectories and road
network are analyzed to retrieve the best $k$ sites for facility locations.

\begin{figure}[t]
  \centering
  \includegraphics[width=0.80\columnwidth]{./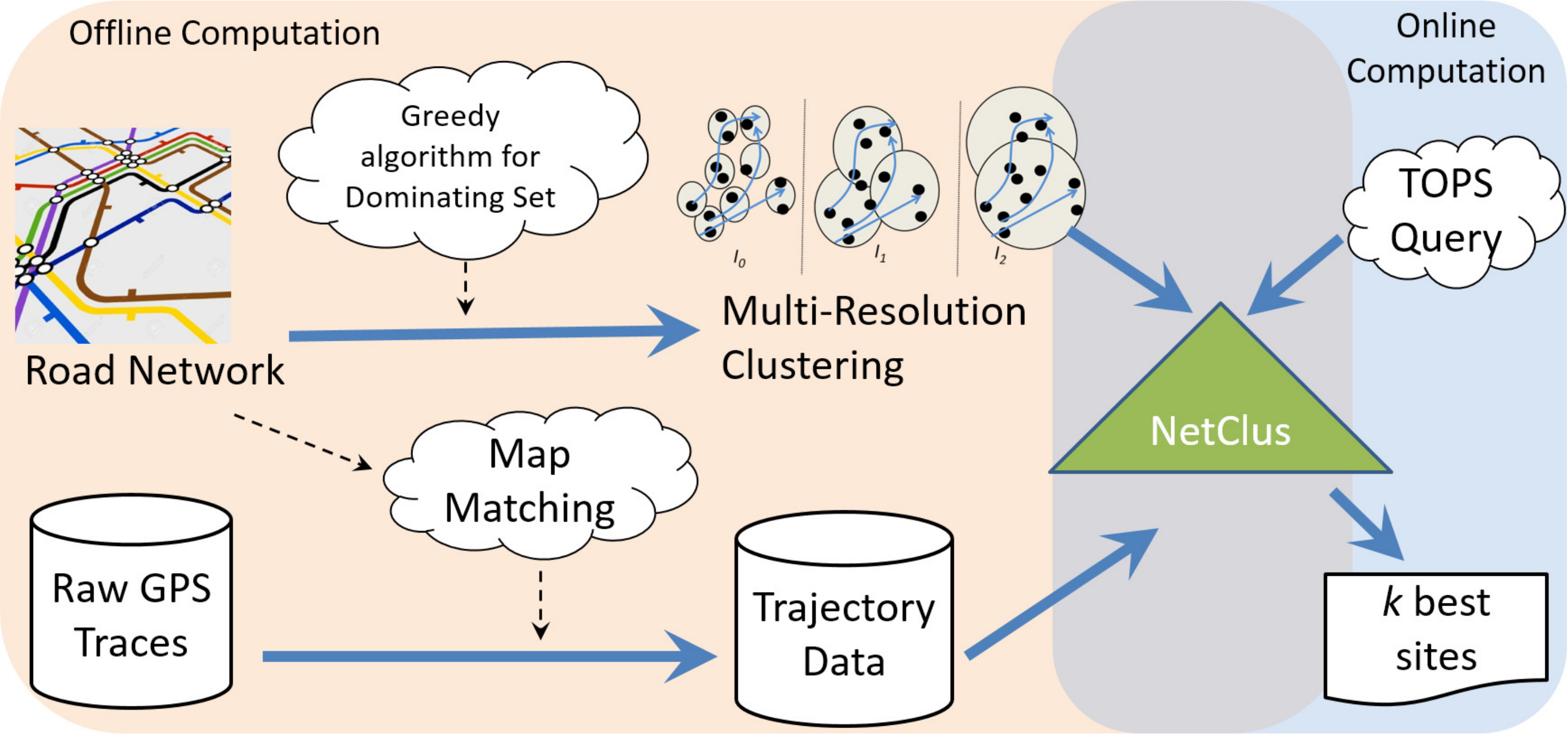}
  \figcaption{Flowchart of \tops querying framework.}
  \label{fig:flowchart}
  \moveup
\end{figure}

The major contributions of our work are as follows:

\begin{enumerate}[nosep,leftmargin=*]

	\item We propose a highly generic trajectory-aware optimal placement of
		services problem, \tops, that can handle the wide class of
 preference functions $\pref$ that are non-increasing w.r.t. the distance between any pair of trajectory and candidate site	(Sec.~\ref{sec:formulation}).
		

	\item  \tops is NP-hard, and even the greedy approach does not scale in city-scale datasets	(Sec.~\ref{sec:algo}). To overcome this bottleneck, we design a
		\emph{multi-resolution clustering-based} index structure called
		\emph{\nc} that generates a \emph{compressed} representation of the road
		network and the set of trajectories (Sec.~\ref{sec:offline} and
		Sec.~\ref{sec:online}).
The solutions returned by \nc framework have bounded quality guarantees.


	      \item The proposed \tops formulation is capable of absorbing dynamic updates (Sec.~\ref{sec:updates}) and manage realistic constraints such as cost, capacity, and existing services. 
	  (Sec.~\ref{sec:variants}).

	\item Extensive experiments on real datasets show that \nc offers solutions that are comparable in terms of quality with those of the greedy heuristic, while having practical response times and fairly low memory footprints 	
		 (Sec.~\ref{sec:exp}).
\end{enumerate}

\section{The \tops Problem}
\label{sec:formulation}

\begin{table}[t]
\scriptsize
\centering
\begin{tabular}{cl}
\hline
\bf Symbol & \bf Description \\
\hline
\hline
$G=(V,E)$, $|V|=N$ & Road network $G$ with node set $V$ and edge set $E$\\
$\mathcal{T}$, $|\mathcal{T}|= m$ & Set of trajectories\\
$\mathcal{S}$, $|\mathcal{S}|=n$ & Set of candidate sites \\
\hline
$\ds(u,v)$ & Distance of shortest path from node $u$ to $v$ \\
$\dr(u,v)$ & Round-trip distance between nodes $u$ and $v$ \\
$\dr(T_j,s_i)$ & Round-trip distance from trajectory $T_j$ to site $s_i$\\
\hline
$k$ & Desired number of service locations\\
$\tau$ & Coverage threshold \\
$\pref(T_j,s_i)$ & Preference function for trajectory $T_j$ and site $s_i$\\
\hline
$\mathcal{Q} \subseteq \mathcal{S}$, $|\mathcal{Q}| = k$ & Set of locations to set up service\\
$U_j$ & Utility of trajectory $T_j$ over the set of sites $\mathcal{Q}$ \\
$U(\mathcal{Q})=\sum_{j=1}^m U_j$ & Total utility offered by $\mathcal{Q}$ \\
\hline
\end{tabular}
\tabcaption{Symbols used in the paper.}
\label{tab:symbol}
\moveups
\end{table}

Consider a road network $G = \{V,E\}$ over a geographical area where $V$
denotes the set of road intersections, and $E$ denotes the set of road segments
between two adjacent road intersections.  The direction of the underlying
traffic that passes over a road segment is modeled by directed edges.

Assume a set of candidate sites $\mathcal{S} = \{s_1, \cdots, s_n\} \subseteq
V$ where a certain service or facility can be set up.  The choice of
$\mathcal{S}$ is generally provided by the application itself by taking into
account various factors such as availability, reputation of neighborhood, price
of land, etc.  Most of these factors are outside the purview of the main focus
of this paper and are, therefore, not studied.  We simply assume that the set
$\mathcal{S}$ is provided as input to our problem.  However, as described
later, if all the latent factors of choosing a site can be encapsulated as its
cost, we can handle it quite robustly.

The candidate sites can be located anywhere on the road network.  If it is
already on a road intersection, then it is part of the set of vertices $V$.  If
not, i.e., if it is on the middle of a road connecting vertices $u$ and $v$,
without loss of generality, we augment $V$ to include this site as a new vertex
$w$.  We augment the edge set $E$ by two new edges $(u,w)$ and $(v,w)$ (with
appropriate directions) and remove the old edge $(u,v)$.  Thus, ultimately,
$\mathcal{S} \subseteq V$.

The set of candidate sites $\mathcal{S}$ can be in addition to the set of
existing service locations $\mathcal{E_S}$.

The set of trajectories over the road network is denoted by $\mathcal{T} =
\{T_1, \cdots, T_m\}$ where each trajectory is a sequence of nodes, $T_j =
\{v_{j_1}, \cdots, v_{j_l}\}$, $v_{j_i}\in V$.  The trajectories are usually
recorded as GPS traces and may contain arbitrary spatial points on the road
network. For our purpose, each trajectory is map-matched \cite{map1} to form a
sequence of road intersections through which it passes. We assume that each
trajectory belongs to a separate user. However, the framework can easily
generalize to multiple trajectories belonging to a single user by taking union of each of these trajectories. 

Suppose $\ds(v_i,v_j)$ denotes the shortest network distance along a directed
path from node $v_i$ to $v_j$, and $\dr(v_i,v_j)$ denotes the shortest distance
of a round-trip starting at node $v_i$, visiting $v_j$, and finally returning
to $v_i$, i.e., $\dr(v_i,v_j)=\ds(v_i,v_j)+\ds(v_j,v_i)$.  In general,
$\ds(v_i,v_j) \neq \ds(v_j,v_i)$, but $\dr(v_i,v_j) = \dr(v_j,v_i)$.  With a
slight abuse of notation, assume that $\dr(T_j,s_i)$ denotes the \emph{extra}
distance traveled by the user on trajectory $T_j$ to avail a service at site
$s_i$. Formally, $\dr(T_j,s_i) = \min_{\forall v_k, v_l \in T_j} \{
\ds(v_k,s_i) + \ds(s_i,v_l) - \ds(v_k,v_l) \}$.

It is convenient for a user to avail a service only if its location is not too
far off from her trajectory.  Thus, beyond a distance $\tau$, we assume that the
utility offered by a site $s_i$ to a trajectory $T_j$ is $0$.  We call this
user-specified distance $\tau$ as the \emph{coverage threshold}.

\begin{defn}[Coverage] 
	A candidate site $s_i$ \emph{covers} a trajectory $T_j$ if the distance
	$\dr(T_j,s_i)$ is at most $\tau$, where $\tau \geq 0$ is the \emph{coverage
	threshold}.
\end{defn}

For all sites within the coverage threshold $\tau$, the user also specifies a
preference function $\pref$.  The preference function $\pref(T_j,s_i)$ assigns a
score (normalized to $[0,1]$) for a trajectory $T_j$ and a site $s_i$ that
indicates how much $s_i$ is preferred by the user on trajectory $T_j$.  Higher
values indicate higher preferences with $0$ indicating no preference. In
general, sites that are \emph{closer} to the trajectory have \emph{higher}
preferences than those farther away. Usage of such preference functions are common in location analysis literature \cite{zeng2009pickup}.

\begin{defn}[Preference Function $\pref$]
	\label{def:pref}
$\pref:(\mathcal{T},\mathcal{S})\rightarrow[0,1]$ is a real-valued preference
function defined as follows:
\begin{align}
	\pref(T_j,s_i)=
		\begin{cases}
			f(\dr(T_j,s_i)) & \text{ if } \dr(T_j,s_i) \le \tau \\
			0 & \text{otherwise}
		\end{cases}
\end{align}
where $f$ is a non-increasing function of $\dr(T_j,s_i)$. 
\end{defn}
 
\begin{table}[t]\scriptsize
	\centering
	\begin{tabular}{|c|c|c|c|}
	\hline
	\multirow{2}{*}{Trajectories} & \multicolumn{3}{|c|}{Preference scores for different sites} \\
	\cline{2-4}
		& $s_1$ & $s_2$ & $s_3$\\
			\hline 
            $T_1$ & $\pref(T_1,s_1)=0.4$ & $\pref(T_1,s_2)=0.11$ & $\pref(T_1,s_3)=0$\\
       \hline
    $T_2$ & $\pref(T_2,s_1)=0$ & $\pref(T_2,s_2)=0.5$ & $\pref(T_2,s_3)=0.6$ \\
       \hline   
		\end{tabular}
		\tabcaption{Examples of trajectories with site preferences.}
		\label{tab:example}
		\moveup
\end{table}

The goal of \tops query is to report a set of $k$ sites $\mathcal{Q}
\subseteq \mathcal{S},\ |\mathcal{Q}| = k$, that maximizes the preference score
over the set of trajectories. The preference score of a trajectory $T_j$ over a
set of sites $\mathcal{Q}$ is defined as the \emph{utility function} $U_j$ for
$T_j$, which is simply the \emph{maximum} score corresponding to the sites in
$\mathcal{Q}$, i.e., $U_j = \max_{s_i \in \mathcal{Q}}\{\pref(T_j,s_i)\}$. The various symbols used in the \tops formulation are listed in
Table~\ref{tab:symbol}.

The generic \tops query formulation is stated next.

\begin{prob}[\tops]
	Given a set of trajectories $\mathcal{T}$, a set of candidate sites
	$\mathcal{S}$ that can host the services, \tops problem with query
	parameters $(k,\tau, \pref)$ seeks to report the best $k$ sites,
	$\mathcal{Q} \subseteq \mathcal{S}, \ |\mathcal{Q}| = k$, that maximize the
	sum of trajectory utilities, i.e., $\mathcal{Q} = \arg\max \sum_{j=1}^m U_j$
	where $U_j = \max_{s_i \in Q}\{\pref(T_j,s_i)\}$.
\end{prob}

\textbf{Extensions and variants of \tops:} The \tops framework as defined above
is \emph{generic}.  It can have various extensions as described in Sec.~\ref{sec:cost} and Sec.~\ref{sec:capacity} and can work with existing facilities as well (Sec.~\ref{sec:existing}).  Finally, the preference function in Def.~\ref{def:pref} subsumes  various existing models as exemplified in Sec.~\ref{sec:other}.

\subsection{Properties of \tops}

\tops problem is NP-hard.  To show that, we first define a specific instance
of \tops where the preference function $\pref(T_j,s_i)$ is a binary function.  A
binary function is a natural choice in situations where a service provider wants
to intercept the maximum number of trajectories that pass within a vicinity of
it \cite{berman1995locatingMain}.

\begin{defn}[Binary Instance of \tops]
	\label{def:binary}
	\tops with parameters $(k,\tau,\pref)$ where the preference score is given by:

	\begin{align}
		\pref(T_j,s_i)=
			\begin{cases}
				1 & \text{ if } \dr(T_j,s_i) \le \tau \\
				0 & \text{otherwise}
			\end{cases}
	\end{align}
\end{defn}

\begin{thm}
	\label{thm:nphard}
	\tops is NP-hard.
\end{thm}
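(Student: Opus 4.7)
The plan is to prove NP-hardness via a polynomial-time reduction from a well-known NP-hard problem to the \emph{binary instance} of \tops defined in Def.~\ref{def:binary}. Since the binary instance is simply a special case of the general \tops problem (obtained by a particular choice of $\pref$), hardness of the binary instance immediately yields hardness of \tops. The natural source problem is Maximum Coverage: given a universe $U=\{e_1,\dots,e_m\}$, a family of subsets $\{S_1,\dots,S_n\}\subseteq 2^U$, and an integer $k$, find $k$ subsets whose union has maximum cardinality. Its decision version is NP-complete, and its structure aligns perfectly with binary \tops, where the ``coverage'' of a trajectory by a site plays the role of set membership.

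First I would set up the reduction. Given a Maximum Coverage instance, I construct a \tops instance as follows: introduce one candidate site $s_i\in\mathcal{S}$ for each subset $S_i$, and one trajectory $T_j\in\mathcal{T}$ for each element $e_j$. Each trajectory will be a single-node trajectory $T_j=\{v_j\}$. The road network $G=(V,E)$ is built on the vertex set $V=\{v_1,\dots,v_m\}\cup\{s_1,\dots,s_n\}$ with edge weights chosen so that the round-trip distance $\dr(T_j,s_i)$ equals a fixed small value (say $2$) whenever $e_j\in S_i$, and strictly exceeds the coverage threshold whenever $e_j\notin S_i$. A simple bipartite construction with an edge of weight $1$ (in both directions, so shortest paths agree) between $v_j$ and $s_i$ iff $e_j\in S_i$, together with $\tau=2$, achieves this; distances between non-incident pairs can be made arbitrarily large by routing only through long auxiliary edges (or by leaving the graph effectively disconnected for the purposes of coverage, with $\dr=\infty$). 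The construction is clearly polynomial in $m+n$.

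Next I would verify the equivalence of solutions. Under the construction, a site $s_i$ covers a trajectory $T_j$ (i.e., contributes preference $1$) precisely when $e_j\in S_i$. Therefore, for any $\mathcal{Q}\subseteq\mathcal{S}$ with $|\mathcal{Q}|=k$, the \tops objective $\sum_j U_j=\sum_j \max_{s_i\in\mathcal{Q}}\pref(T_j,s_i)$ counts exactly the number of elements $e_j$ covered by the corresponding family $\{S_i : s_i\in\mathcal{Q}\}$. Hence an optimal \tops solution corresponds one-to-one with an optimal Maximum Coverage solution, and any exact algorithm for binary \tops immediately solves Maximum Coverage. This establishes the reduction and thus the theorem.

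The main obstacle, and the only place care is needed, is the geometric realizability of the incidence structure within a road network that satisfies the graph-theoretic conventions used elsewhere in the paper (directed edges, shortest-path distances, round-trip distances defined via $\ds(v_i,v_j)+\ds(v_j,v_i)$). One must ensure that the distances can be simultaneously set so that incident $(v_j,s_i)$ pairs are within $\tau$ while non-incident pairs exceed $\tau$, without inadvertently creating short paths through intermediate sites or trajectory nodes. The bipartite construction avoids this pitfall: since every path from $v_j$ to any $s_i$ has length at least $1$, and any path not using the direct edge must traverse at least three edges (through some $s_{i'}$ and $v_{j'}$), choosing edge weights and $\tau$ with sufficient separation (e.g., weight $1$ for incident pairs and weight $M\gg\tau$ for all other potential shortcuts) makes the verification straightforward.
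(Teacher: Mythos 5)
Your proposal is correct and follows essentially the same route as the paper: both establish hardness of the binary instance of \tops via a reduction from a covering problem (the paper cites the set-cover reduction of Berman et al.\ rather than constructing it, while you spell out an explicit bipartite road-network reduction from Maximum Coverage). Your explicit construction, including the check that non-incident trajectory--site pairs have round-trip distance exceeding $\tau$, is sound and simply makes self-contained what the paper delegates to a citation.
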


\begin{proof}
	This follows from the fact that the binary instance of \tops is NP-hard, as set cover problem is reducible to it \cite{berman1995locatingMain}.
%
	\hfill{}
\end{proof}

Next we show that the sum of utilities $U = \sum_{j=1}^m U_j$ is a
\emph{non-decreasing sub-modular} function.
A function $f$ defined on any
subset of a set $\mathcal{S}$ is \emph{sub-modular} if for any pair of subsets
$\mathcal{Q,R} \subseteq \mathcal{S}$, $f(\mathcal{Q}) + f(\mathcal{R}) \geq
f(\mathcal{Q} \cup \mathcal{R}) + f(\mathcal{Q} \cap \mathcal{R})$
\cite{nemhauser1978analysis}.

\begin{thm}
	\label{thm:submodular}
	%
	$U = \sum_{j=1}^m U_j$ is non-decreasing and sub-modular. 
\end{thm}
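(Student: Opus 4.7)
The plan is to reduce the claim for the sum $U$ to the corresponding claim for a single trajectory's utility $U_j$, and then invoke the fact that a finite sum of non-decreasing sub-modular functions is itself non-decreasing and sub-modular. So the core work is to show that for every fixed trajectory $T_j$, the set function $U_j(\mathcal{Q}) = \max_{s_i \in \mathcal{Q}}\{\pref(T_j,s_i)\}$ (with $U_j(\emptyset) := 0$) is non-decreasing and sub-modular in $\mathcal{Q} \subseteq \mathcal{S}$. Note that the particular shape of $\pref$ coming from Def.~\ref{def:pref}, and even the coverage threshold $\tau$, play no role: all we use is that $\pref(T_j, s_i) \in [0,1]$ is a fixed non-negative scalar once $T_j$ and $s_i$ are fixed.

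Monotonicity is immediate: if $\mathcal{Q} \subseteq \mathcal{R}$, then the max defining $U_j(\mathcal{R})$ is taken over a superset, so $U_j(\mathcal{Q}) \le U_j(\mathcal{R})$; summing over $j$ gives $U(\mathcal{Q}) \le U(\mathcal{R})$. For sub-modularity I plan to use the equivalent diminishing-returns characterization: for all $\mathcal{Q} \subseteq \mathcal{R} \subseteq \mathcal{S}$ and all $s \in \mathcal{S} \setminus \mathcal{R}$,
\begin{equation*}
U_j(\mathcal{Q} \cup \{s\}) - U_j(\mathcal{Q}) \;\ge\; U_j(\mathcal{R} \cup \{s\}) - U_j(\mathcal{R}).
\end{equation*}
Write $a = U_j(\mathcal{Q})$, $b = U_j(\mathcal{R})$, and $p = \pref(T_j, s)$; monotonicity gives $a \le b$. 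The marginal gains are then $\max(a,p) - a = \max(0, p-a)$ on the left and $\max(0, p-b)$ on the right, and since $a \le b$ we have $\max(0, p-a) \ge \max(0, p-b)$. This is the only nontrivial step, and it is a short three-case check (namely $p \le a$, $a < p \le b$, and $p > b$), so there is no real obstacle.

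To finish, I would verify that sums preserve these two properties. Non-decreasingness under sums is trivial. For sub-modularity, either one checks the inclusion-exclusion inequality $f(\mathcal{Q}) + f(\mathcal{R}) \ge f(\mathcal{Q} \cup \mathcal{R}) + f(\mathcal{Q} \cap \mathcal{R})$ termwise and sums, or equivalently one sums the diminishing-returns inequalities term by term. Either route gives that $U = \sum_{j=1}^m U_j$ inherits both properties from the $U_j$, completing the argument. The main thing to be careful about is to make explicit that the choice of the coverage threshold $\tau$ and the specific decreasing shape $f$ of the preference function are irrelevant here; the result rests solely on $U_j$ being a pointwise-maximum of non-negative scalars indexed by the chosen set.
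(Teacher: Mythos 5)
Your proposal is correct and follows essentially the same route as the paper: reduce to the per-trajectory diminishing-returns inequality $U_j(\mathcal{Q}\cup\{s\})-U_j(\mathcal{Q}) \ge U_j(\mathcal{R}\cup\{s\})-U_j(\mathcal{R})$ and sum over $j$, with monotonicity coming from the max over a superset. The only cosmetic difference is that you verify the per-trajectory inequality algebraically via $\max(0,p-a)\ge\max(0,p-b)$, whereas the paper splits into cases on whether the maximizer in $\mathcal{R}\cup\{s\}$ is $s$ itself; these are the same elementary check.
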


\begin{proof}
	Consider any two pair of subsets $\mathcal{Q,R} \subseteq \mathcal{S}$ such
	that $\mathcal{Q} \subseteq \mathcal{R}$. Since $U_j = \max_{s_i \in
	\mathcal{Q}}\{\pref(T_j,s_i)\}$, $U_j(\mathcal{R})\geq U_j(\mathcal{Q})$.
	Therefore, $\sum_{j=1}^m U_j(\mathcal{R})=U(\mathcal{R})\geq \sum_{j=1}^m
	U_j(\mathcal{Q})=U(\mathcal{Q})$. Hence, $U$ is a non-decreasing function.

	To show that $U$ is sub-modular, following \cite{nemhauser1978analysis}, it
	is sufficient to show that for any two subsets $\mathcal{Q,R} \subseteq
	\mathcal{S}, \ \mathcal{Q} \subset \mathcal{R}$, for any site $s \in
	\mathcal{S} \setminus \mathcal{R}$, $U(\mathcal{Q}\cup \{s\})-U(\mathcal{Q})
	\geq U(\mathcal{R}\cup \{s\})-U(\mathcal{R})$.  For our formulation, it is,
	thus, enough if for any trajectory $T_j \in \mathcal{T}$,
	\begin{align}
		\label{eq:submodular}
		U_j(\mathcal{Q}\cup \{s\})-U_j(\mathcal{Q})
			\geq U_j(\mathcal{R}\cup \{s\})-U_j(\mathcal{R})
	\end{align}

	Suppose the site $s^* \in \mathcal{R}\cup \{s\}$ offers the maximal utility
	to trajectory $T_j$ in the set of sites $\mathcal{R}\cup \{s\}$.  There can
	be two cases: \\
	(a) $s^* = s$: $U_j(\mathcal{Q} \cup \{s\})=U_j(\{s\})=U_j(\mathcal{R} \cup
	\{s\})$. Further, $U_j(\mathcal{Q})\leq U_j(\mathcal{R})$ since
	$U_j=\{\max\{\pref(T_j,s_i)\}|s_i \in Q\}$.  These two inequalities together lead to
	Ineq.~\eqref{eq:submodular}. \\
	(b) $s^* \neq s$: Here, $U_j(\mathcal{R} \cup
	\{s\})-U_j(\mathcal{R}) = 0$.  As $U_j()$ is non-decreasing,
	$U_j(\mathcal{Q} \cup \{s\})  \geq  U_j(\mathcal{Q})$.  Thus,
	Ineq.~\eqref{eq:submodular} follows.
	%
	%
	\hfill{}
\end{proof}



\section{Algorithms for \tops}
\label{sec:algo}

\begin{table}[t]\scriptsize
	\centering
		\begin{tabular}{|c|c|c|}
			\hline
			Algorithm & Selected sites $\mathcal{Q}$ & Utility $U$ \\
			\hline
			\textsc{Optimal} &  $\{s_1,s_3\}$ & $1.0$ \\
			\incg   &  $\{s_1,s_2\}$ & $0.9$ \\
			\hline
		\end{tabular}
		\tabcaption{Utilities of different algorithms for
		Example~\ref{ex:one} with $k = 2$.}
		\label{tab:utilities}
		\moveup
\end{table}

We first state an example, using which we will explain the different algorithms
for \tops problem.

\begin{example}
	\label{ex:one}
	\emph{Consider Table~\ref{tab:example}, which lists $2$ trajectories
	$T_1,T_2$ with their preference scores for $3$ sites $s_1$, $s_2$ and $s_3$.
	We set $k=2$ sites to be selected for \tops query. The utilities
	achieved by different algorithms (discussed next) are shown in
	Table~\ref{tab:utilities}. The optimal solution consists of the sites $s_1$ and
$s_3$ with a total utility of $1.0$.}
\end{example}

\subsection{Optimal Algorithm}

We first discuss the \emph{optimal} solution to \tops problem in
the form of an \emph{integer programming problem} (ILP):
\begin{align}
	\text{Maximize } U &= \sum_{j=1}^m U_j \label{eq:obj} \\
	\text{such that } \sum_{i=1}^n x_i &\leq k, \label{eq:cardinality} \\
	\forall j=1,\dots,m, \ 
	U_j & \leq \max_{1\le i \le n} \{\pref(T_j,s_i)\times x_i\}
	\label{eq:MaxUtilityConstraint} \\
	\forall i = 1, \dots, n, \ 
	x_i &\in \{0,1\}
	\label{eq:binary}	
\end{align}

The above IP is, however, not in the form of a standard integer linear program
(ILP).  To do so, the constraints in Ineq.~\eqref{eq:MaxUtilityConstraint} can
be linearized, as discussed in Appendix~\ref{app:ip}. Since \tops is NP Hard, the cost of solving this optimal algorithm is \emph{exponential} with respect to the
number of trajectories $m$, and the number of sites $n$ and, therefore,
\emph{impractical} except for very small $m$ and $n$.  This is demonstrated through experiments in Sec.~\ref{sec:comparison with optimal}.

\subsection{Approximation Algorithms}
\label{sec:approx}

We next present a greedy approximation
algorithm. Before that, let us define a few
terms and how they are computed.



Given a site $s_i \in \mathcal{S}$, $\tc(s_i)$ denotes the set of trajectories
covered by $s_i$, i.e., $\tc(s_i)=\{T_j|\dr(T_j,s_i) \leq \tau\}$.  Similarly,
given a trajectory $T_j$, $\stc(T_j)$ denotes the set of sites that cover $T_j$,
i.e., $\stc(T_j)=\{ s_i|\dr(T_j,s_i) \leq \tau\}$.  For each site $s_i$, the
site weight $w_i = \sum_{j=1}^m \{\pref(T_j,s_i)|T_j \in \tc(s_i)\}$ denotes the
sum of preference scores of all the trajectories covered by it.
Table~\ref{tab:netclus} tabulates some of the important notations.

Since $\tau$ is available only at query time, the computation of these covering
sets need to be efficient.  Hence, for each site $s_i \in \mathcal{S}$, the
shortest path distances to all other nodes in $V$ are pre-computed.  This
requires $O(|\mathcal{S}|.|E|\log |V|)$ time \cite{algorithmsbook}.  As the road
network graphs are almost planar, i.e., $|E|=O(|V|)$, the above cost simplifies
to $O(n.N \log N)$ where $n = |\mathcal{S}|$ and $N = |V|$.

Using these distance values, the distance between each pair of site and
trajectory, $\dr(T_j,s_i)$, is computed.  Suppose there are at most $l$ nodes in
a trajectory $T_j$, i.e., $\max_{T_j \in \mathcal{T}}|T_j| = l$. Referring to
the definition of $\dr(T_j,s_i)$ in Sec.~\ref{sec:formulation},  it follows that
$\dr(T_j,s_i)$ can be computed  in $O(l^2)$ time for a given site-trajectory
pair using the above site-node distances.  Therefore, the total cost across all
$n$ sites and $m$ trajectories is $O(mnl^2)$.

Each site $s_i$ maintains the set of trajectories in an ascending order of
$\dr(T_j,s_i)$ while each trajectory $T_j$ maintains the sites in a similar
ascending order based on $\dr(T_j,s_i)$ again.  Sorting increases the time to
$O(mn(\log n + \log m))$.  The space complexity for storing all site to
trajectory distances is $O(mn)$.

At query time, when $\tau$ is available, the sets $\tc$, $\stc$, and the site
weights can be, thus, computed from the stored distance matrix in $O(mn)$ time.

Alternatively, the sets \tc and \stc can be computed at query time when $\tau$
is made available.  However, for a city-sized road network, even a small $\tau$
includes a lot of neighbors (e.g., for the Beijing dataset described in
Sec.~\ref{sec:exp}, there are more than $40$ sites on average within a distance
of $\tau = 0.8$ Km from a site).  Since this needs to be computed for all the
candidate sites, this approach is infeasible at query time.

\subsection{Inc-Greedy}

We adapt the general greedy heuristic for maximizing non-decreasing
sub-modular functions \cite{nemhauser1978analysis} to design \incg. The main
idea is based on the principle
of \emph{maximizing marginal gain}.

Then the algorithm starts with an empty set of sites $\mathcal{Q}_0=\varnothing$, and
incrementally adds the sites (one in each of the $k$ iterations) such that each
successive addition of a site produces the maximal marginal gain in the utility
$U$.  More specifically, if the set of selected sites after $\theta-1$
iterations $(\theta=1,\dots,k)$ is
$\mathcal{Q}_{\theta-1}=\{s_1,\dots,s_{\theta-1}\}$, in the $\theta^{th}$
iteration, \incg chooses the site $s_\theta \notin \mathcal{Q}_{\theta-1}$ such
that $U(\mathcal{Q}_{\theta} = \mathcal{Q}_{\theta-1}\cup \{s_{\theta}\}) -
U(\mathcal{Q}_{\theta-1})$ is \emph{maximal}.

The marginal utility gained due to the addition of $s_i$ to the set
$\mathcal{Q}_{\theta-1}$ in iteration $\theta=1,\dots,k$ is $U_{\theta}(s_i) =
U(\mathcal{Q}_{\theta-1} \cup \{s_i\}) - U(\mathcal{Q}_{\theta-1})$.  Thus,
$U(\mathcal{Q}_\theta)=\sum_{i=1}^\theta U_i(s_i)$.  
If $\alpha_{ji}$ denotes
the marginal gain in the utility of trajectory $T_j$ by addition of site $s_i$
to the existing set of chosen sites, then, at the end of each iteration $\theta
= 1, \dots, k$, $U_\theta(s_i) = \sum_{\forall T_j \in \tc(s_i)} \alpha_{ji}$.

\incg algorithm begins by computing the sets $\tc,\stc$, and the
site-weights as explained above. It then initializes the
marginal utilities of each site $s_i \in \mathcal{S}$ to its site weight, i.e.,
$U_1(s_i) = w_i$.  Further, for each trajectory $T_j \in \tc(s_i)$, it
initializes $\alpha_{ji} = \pref(T_j,s_i)$.

In iteration $\theta=1,\dots,k$, it selects the site $s_{\theta}$ with
\emph{maximal} marginal utility, i.e., $\forall s_i \notin
\mathcal{Q}_{\theta-1}, U_\theta(s_\theta) \geq U_\theta(s_i)$.  If multiple
candidate sites yield the same maximal marginal gain, it chooses the one with
maximal weight. Still, if there are ties, then without loss of generality, the
site with the highest index is selected.

For each trajectory $T_j \in \tc(s_\theta)$, first the utility $U_j$ is updated
as follows: $U_j \leftarrow \max(U_j,\pref(T_j,s_\theta))$.  In case there is a
gain in $U_j$, the marginal utility of each site $s_i \in \stc(T_j)$ is updated. 

The pseudocode of \incg is prsented in Appendix~\ref{app:inc-greedy}.

In Example~\ref{ex:one}, at iteration $1$, the site $s_2$ has the largest
marginal utility of $0.61$ and is, therefore, selected.  With
$\mathcal{Q}_1=\{s_2\}$, the marginal utility of adding the site $s_1$ is
$0.4-0.11=0.29$, and that of site $s_3$ is $0.6-0.5=0.1$.  Hence, \incg selects
the sites $s_1$ and $s_2$, and yields an utility $U=0.9$, as indicated in
Table~\ref{tab:utilities}.


We next analyze the quality of \incg.

\begin{lem}
	%
	\label{lem:inc_topso2}
	%
	The approx. bound of \incg is $(1-1/e)$.
\end{lem}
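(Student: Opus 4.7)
The plan is to invoke the classical Nemhauser--Wolsey--Fisher theorem \cite{nemhauser1978analysis}, which states that the standard greedy algorithm achieves a $(1-1/e)$ approximation for the problem of maximizing a non-negative, non-decreasing, submodular set function subject to a cardinality constraint $|\mathcal{Q}| \le k$. Since \incg is exactly this greedy procedure applied to the utility function $U(\mathcal{Q})=\sum_{j=1}^m U_j(\mathcal{Q})$, our task reduces to verifying that $U$ satisfies the three hypotheses of the Nemhauser--Wolsey--Fisher theorem and that \incg faithfully implements the marginal-gain selection rule.

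First I would note that Theorem~\ref{thm:submodular} has already established the two non-trivial hypotheses: $U$ is non-decreasing and submodular on subsets of $\mathcal{S}$. The remaining hypothesis, non-negativity with $U(\varnothing)=0$, is immediate from the definition of the preference function $\pref(T_j, s_i) \in [0,1]$ and the convention $U_j(\varnothing) = 0$ (the maximum over an empty collection of sites contributes nothing), so every $U_j$ and hence $U$ is non-negative with $U(\varnothing) = 0$.

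Next I would argue that \incg exactly matches the greedy procedure analyzed in \cite{nemhauser1978analysis}: it starts from $\mathcal{Q}_0 = \varnothing$, and in each of the $k$ iterations picks $s_\theta \in \arg\max_{s \notin \mathcal{Q}_{\theta-1}} \bigl[ U(\mathcal{Q}_{\theta-1} \cup \{s\}) - U(\mathcal{Q}_{\theta-1}) \bigr]$. The marginal gain computed inside \incg, namely $U_\theta(s_i) = \sum_{T_j \in \tc(s_i)} \alpha_{ji}$ where $\alpha_{ji}$ is the per-trajectory increment $\max(0, \pref(T_j,s_i) - U_j(\mathcal{Q}_{\theta-1}))$, is precisely $U(\mathcal{Q}_{\theta-1} \cup \{s_i\}) - U(\mathcal{Q}_{\theta-1})$, so the same element is picked. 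The tie-breaking rule has no effect on the guarantee since the bound applies to any maximizer of the marginal gain.

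With the hypotheses verified and the algorithmic equivalence in place, applying \cite{nemhauser1978analysis} yields $U(\mathcal{Q}_k) \ge (1-1/e)\,U(\mathcal{Q}^*)$, where $\mathcal{Q}^*$ is the optimal size-$k$ solution, giving the claimed $(1-1/e)$ bound. No step here is genuinely hard --- the only subtle point is checking that the per-trajectory $\alpha_{ji}$ updates in \incg do aggregate to the true marginal gain of $U$, which follows because $U_j$ decomposes as a maximum over selected sites and the greedy rule reduces coordinate-wise to the submodular marginal.
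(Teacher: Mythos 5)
Your proposal is correct and takes essentially the same route as the paper: the paper's proof likewise observes that \incg is a direct adaptation of the greedy algorithm of \cite{nemhauser1978analysis} applied to the non-decreasing submodular $U$ with $U(\varnothing)=0$ (established in Theorem~\ref{thm:submodular}), and hence inherits the $\left(1-(1-\frac{1}{k})^k\right) \geq (1-1/e)$ bound. Your additional verification that the $\alpha_{ji}$ updates aggregate to the true marginal gain is a useful elaboration but not a different argument.
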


\begin{proof}
	
	Since \incg is a direct adaptation of the generic greedy algorithm for
	maximizing non-decreasing sub-modular functions under cardinality
	constraints \cite{nemhauser1978analysis} with $U(\varnothing)=0$, it offers the approximation bound of
	$\left( 1-(1-\frac{1}{k})^k \right) \geq \left(1-\frac{1}{e}\right)$.
	%
	\hfill{}
\end{proof}

\begin{lem}
	\label{lem:inc_topso3}
	$U(\mathcal{Q}_k) \geq (k/n). U(\mathcal{S})$.
\end{lem}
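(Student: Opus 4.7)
The plan is to set up a one-step recurrence that tracks the gap $\Delta_\theta = U(\mathcal{S}) - U(\mathcal{Q}_\theta)$ between the greedy's utility after $\theta$ iterations and the utility of the full site set, and then telescope. Since $U$ is non-decreasing and submodular with $U(\varnothing)=0$ (Theorem~\ref{thm:submodular}), I can invoke the standard chain consequence of submodularity: for any $\mathcal{A} \subseteq \mathcal{S}$,
\begin{align*}
U(\mathcal{S}) - U(\mathcal{A}) \;\leq\; \sum_{s \in \mathcal{S} \setminus \mathcal{A}} \bigl[U(\mathcal{A} \cup \{s\}) - U(\mathcal{A})\bigr],
\end{align*}
which follows by adding the elements of $\mathcal{S}\setminus \mathcal{A}$ to $\mathcal{A}$ one at a time and bounding each incremental gain by the corresponding singleton marginal via submodularity.

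Applying this with $\mathcal{A} = \mathcal{Q}_{\theta-1}$, the right-hand side has $n-\theta+1$ terms, each of which is at most the marginal gain $U(\mathcal{Q}_\theta) - U(\mathcal{Q}_{\theta-1})$ by the definition of \incg (it picks the element maximizing marginal utility). Hence
\begin{align*}
\Delta_{\theta-1} \;\leq\; (n-\theta+1)\bigl[U(\mathcal{Q}_\theta) - U(\mathcal{Q}_{\theta-1})\bigr],
\end{align*}
which, after rearranging and using $\Delta_\theta = \Delta_{\theta-1} - [U(\mathcal{Q}_\theta) - U(\mathcal{Q}_{\theta-1})]$, yields
\begin{align*}
\Delta_\theta \;\leq\; \Delta_{\theta-1}\cdot\frac{n-\theta}{n-\theta+1}.
\end{align*}

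I would then telescope this recurrence from $\theta=1$ to $\theta=k$, where the product collapses cleanly to
\begin{align*}
\Delta_k \;\leq\; \Delta_0 \cdot \prod_{\theta=1}^{k} \frac{n-\theta}{n-\theta+1} \;=\; U(\mathcal{S})\cdot\frac{n-k}{n},
\end{align*}
using $\Delta_0 = U(\mathcal{S})$ since $U(\varnothing)=0$. Substituting back gives $U(\mathcal{Q}_k) \geq U(\mathcal{S}) - \tfrac{n-k}{n} U(\mathcal{S}) = \tfrac{k}{n}U(\mathcal{S})$, as required. The only subtle point is that the naive bound $\Delta_\theta \leq (1-1/n)\Delta_{\theta-1}$ is too weak (it would yield $1-(1-1/n)^k \leq k/n$, the wrong direction); the sharper denominator $n-\theta+1$, reflecting that each iteration discards one candidate from the competing pool, is essential to get the clean $k/n$ factor.
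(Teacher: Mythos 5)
Your proof is correct, and it ultimately produces exactly the same per-iteration inequality and recurrence as the paper: writing $\Delta_{\theta}=U(\mathcal{S})-U(\mathcal{Q}_{\theta})$, both arguments hinge on $\Delta_{\theta-1}\le (n-\theta+1)\bigl[U(\mathcal{Q}_{\theta})-U(\mathcal{Q}_{\theta-1})\bigr]$ and hence $\Delta_{\theta}\le\Delta_{\theta-1}\tfrac{n-\theta}{n-\theta+1}$. The difference is in how that inequality is obtained and how the recurrence is closed. The paper conceptually continues the greedy run for all $n$ iterations so that $\mathcal{Q}_n=\mathcal{S}$, uses submodularity to argue that the successive greedy marginals $U_{\theta}(s_{\theta})$ are non-increasing, bounds $U_{\theta}(s_{\theta})$ by the average of the remaining marginals $\tfrac{U(\mathcal{Q}_n)-U(\mathcal{Q}_{\theta-1})}{n-\theta+1}$, and then proves $U(\mathcal{Q}_{\theta})\ge(\theta/n)U(\mathcal{Q}_n)$ by induction. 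You instead invoke the standard submodular chain bound $U(\mathcal{S})-U(\mathcal{Q}_{\theta-1})\le\sum_{s\in\mathcal{S}\setminus\mathcal{Q}_{\theta-1}}[U(\mathcal{Q}_{\theta-1}\cup\{s\})-U(\mathcal{Q}_{\theta-1})]$ together with greedy maximality over the $n-\theta+1$ remaining singletons, and then telescope the gap. Your route avoids the virtual continuation of the greedy beyond $k$ steps and needs no separate monotone-marginals claim, which makes it slightly more self-contained; the paper's route packages the same submodularity use into the non-increasing-marginals observation. Your closing remark is also apt: with the fixed denominator $n$ one only gets $1-(1-1/n)^k\le k/n$ (Bernoulli), so the shrinking denominator $n-\theta+1$ is indeed what delivers the clean $k/n$ factor, exactly as in the paper's Ineq.~\eqref{eq:sum of marginal utilities}.
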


\begin{proof}
	From Theorem~\ref{thm:submodular} it follows that successive marginal
	utilities are non-increasing, i.e., $U_\theta(s_\theta) \geq
	U_{\theta+1}(s_{\theta+1})$. Thus,
	\begin{small}
	\begin{align}
	U_\theta(s_\theta) \geq \frac{\sum_{i=\theta}^n U_i(s_i)}{n-\theta+1}
		\geq \frac{U(\mathcal{Q}_n)-U(\mathcal{Q}_{\theta-1})}{n-\theta+1}
		\label{eq:sum of marginal utilities}
	\end{align}
	 \end{small}
	 
	Next, we claim that $\forall \theta=1,2,\dots,n$,
	\begin{small}
	\begin{align*}
	U(\mathcal{Q}_{\theta}) \geq (\theta / n). U(\mathcal{Q}_n)
	\end{align*}
\end{small}
	We will prove this by induction on $\theta$.
	
	Consider the base case $\theta=1$.  Using Ineq.~\eqref{eq:sum of marginal
	utilities}, $U(\mathcal{Q}_1) = U_1(s_1) \geq U(\mathcal{Q}_n)/n$ since
	$U(\mathcal{Q}_0)=0$. Thus, the induction hypothesis is true for the base
	case.
	
	Next, we assume it to be true for iteration $\theta$.
	For iteration $\theta+1$,
	\begin{small}
	\begin{align*}
	U(\mathcal{Q}_{\theta+1})&= U(\mathcal{Q}_\theta) + U_{\theta+1}(s_{\theta+1})\\
	&\ge  U(\mathcal{Q}_\theta) + \frac{U(\mathcal{Q}_n)-U(\mathcal{Q}_\theta)}{n-\theta}\\
	&\ge \left(\frac{n-\theta-1}{n-\theta}\right)U(\mathcal{Q}_\theta) + \left(\frac{1}{n-\theta}\right)U(\mathcal{Q}_n)
	\end{align*}
	\end{small}

	Using the induction hypothesis for iteration $\theta$,
	\begin{small}
	\begin{align*}
	U(\mathcal{Q}_{\theta+1})&\ge \left(\frac{n-\theta-1}{n-\theta}\right)\left(\frac{\theta}{n}\right) U(\mathcal{Q}_n) + \left(\frac{1}{n-\theta}\right)U(\mathcal{Q}_n)\\
	&\ge \left(\frac{\theta+1}{n}\right). U(\mathcal{Q}_n) 
	\end{align*}
	\end{small}
	
	Thus, the induction hypothesis holds true for any $\theta=1,\dots,n$.
	
	Since $\mathcal{Q}_n=\mathcal{S}$, after $k$ iterations, $U(\mathcal{Q}_k)
	\geq (k/n). U(\mathcal{S})$.	 
	\hfill{}
\end{proof}

\begin{lem}
	\label{lem:inc-greedy topso}
	The approximation bound of \incg is $(k/n)$.
\end{lem}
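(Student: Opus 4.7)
The plan is to observe that this lemma is essentially a one-line consequence of Lemma~\ref{lem:inc_topso3} once we invoke the non-decreasing property of $U$ established in Theorem~\ref{thm:submodular}. So the heart of the argument was already done; what remains is only to convert the bound ``$U(\mathcal{Q}_k) \ge (k/n)\,U(\mathcal{S})$'' into a bound against the optimal feasible solution.

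First I would let $\mathcal{Q}^* \subseteq \mathcal{S}$ with $|\mathcal{Q}^*| = k$ denote an optimal solution to the \tops instance, so that the quantity we need to lower-bound is $U(\mathcal{Q}_k)/U(\mathcal{Q}^*)$. Since $\mathcal{Q}^* \subseteq \mathcal{S}$ and $U$ is non-decreasing (Theorem~\ref{thm:submodular}), we immediately have $U(\mathcal{S}) \ge U(\mathcal{Q}^*)$. Next, I would chain this with Lemma~\ref{lem:inc_topso3}, which already gives $U(\mathcal{Q}_k) \ge (k/n)\,U(\mathcal{S})$. Combining the two inequalities yields
\begin{align*}
U(\mathcal{Q}_k) \;\ge\; \frac{k}{n}\,U(\mathcal{S}) \;\ge\; \frac{k}{n}\,U(\mathcal{Q}^*),
\end{align*}
which is exactly the claimed $(k/n)$ approximation bound.

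There is no real obstacle here, since all the technical work sits in Lemma~\ref{lem:inc_topso3} (the inductive argument using submodularity/non-increasing marginal gains). The only thing worth being careful about is noting that $\mathcal{Q}_n = \mathcal{S}$ was used implicitly in Lemma~\ref{lem:inc_topso3} to equate $U(\mathcal{Q}_n)$ with $U(\mathcal{S})$, and that the non-decreasing property (not submodularity) is what is actually needed for the final comparison with $U(\mathcal{Q}^*)$. It is also worth remarking, though not part of the proof itself, that this bound is complementary to Lemma~\ref{lem:inc_topso2}: the true guarantee of \incg is $\max\{1-1/e,\; k/n\}$, so the $(k/n)$ bound can be informative precisely when $k$ is a non-trivial fraction of $n$.
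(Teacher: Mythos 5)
Your proposal is correct and follows exactly the paper's own argument: invoke the non-decreasing property of $U$ (Theorem~\ref{thm:submodular}) to get $U(\mathcal{Q}^*) \leq U(\mathcal{S})$ for the optimal set $\mathcal{Q}^* \subseteq \mathcal{S}$, then chain this with Lemma~\ref{lem:inc_topso3}. Your additional remarks about $\mathcal{Q}_n = \mathcal{S}$ and the complementarity with Lemma~\ref{lem:inc_topso2} are accurate but not needed beyond what the paper states.
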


\begin{proof}
	Since $U$ is non-decreasing (from Th.~\ref{thm:submodular}) and $OPT
	\subseteq \mathcal{S}$, therefore  $U(OPT) \leq U(\mathcal{S})$. The rest follows from Lem.~\ref{lem:inc_topso3}. 
	\hfill{}
\end{proof}

\begin{thm}
	\label{thm:inc_topso}
	The approximation bound of \incg for \tops is $\max\{1-1/e, k/n\}$.
\end{thm}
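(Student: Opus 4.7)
The plan is to simply combine the two approximation guarantees already established for \incg. Both Lemma~\ref{lem:inc_topso2} and Lemma~\ref{lem:inc-greedy topso} apply to the \emph{same} solution $\mathcal{Q}_k$ returned by \incg on \emph{any} \tops instance; they are not alternative algorithms, but two independent lower bounds on the quality of a single greedy output.

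Concretely, let $OPT \subseteq \mathcal{S}$ with $|OPT| = k$ denote an optimal solution. First, Lemma~\ref{lem:inc_topso2}, which specializes the classical Nemhauser--Wolsey--Fisher analysis of the greedy algorithm for monotone sub-modular maximization under a cardinality constraint (valid here because Theorem~\ref{thm:submodular} established that $U$ is non-decreasing and sub-modular, and $U(\varnothing)=0$), yields
\begin{align*}
U(\mathcal{Q}_k) \;\geq\; \left(1 - \frac{1}{e}\right) U(OPT).
\end{align*}
Second, Lemma~\ref{lem:inc-greedy topso} (whose proof used the non-increasing nature of successive marginal gains together with $U(OPT) \leq U(\mathcal{S})$) gives
\begin{align*}
U(\mathcal{Q}_k) \;\geq\; \frac{k}{n}\, U(OPT).
\end{align*}
Since both inequalities hold for the same $\mathcal{Q}_k$ on every instance, we may take the larger of the two coefficients on the right-hand side, obtaining
\begin{align*}
U(\mathcal{Q}_k) \;\geq\; \max\!\left\{1 - \frac{1}{e},\; \frac{k}{n}\right\} U(OPT),
\end{align*}
which is the claimed bound.

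There is no real obstacle here: the only subtlety is to notice that the two earlier lemmas are \emph{simultaneously} valid rather than applying under different regimes, so pointwise taking the maximum is legitimate. This combined bound is meaningful because when $k$ is a significant fraction of $n$ the ratio $k/n$ can exceed $1 - 1/e \approx 0.632$, strengthening the classical guarantee; in particular it becomes tight as $k \to n$, where $\mathcal{Q}_k \to \mathcal{S} \supseteq OPT$ and the approximation ratio approaches $1$.
\hfill{}
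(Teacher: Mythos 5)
Your proof is correct and follows essentially the same route as the paper, which simply combines Lemma~\ref{lem:inc_topso2} and Lemma~\ref{lem:inc-greedy topso}; your only addition is making explicit the (valid) observation that both lower bounds hold simultaneously for the same greedy output, so taking the pointwise maximum is legitimate.
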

\begin{proof}
	The result follows from Lem.~\ref{lem:inc_topso2} and
	Lem.~\ref{lem:inc-greedy topso}.
	\hfill{}
\end{proof}

The next result establishes the complexity of \incg.

\begin{thm}
	\label{thm:complexity of incg}
	The time and space complexity bounds for \incg are $O(k.mn)$ and $O(mn)$
	respectively.
\end{thm}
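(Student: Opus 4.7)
The plan is to account separately for the cost of the one-time initialization performed by \incg and the cost incurred in each of the $k$ greedy iterations, showing that the former is $O(mn)$ and the latter is $O(mn)$ per iteration, so the total is $O(k \cdot mn)$. The space bound will follow by accounting for all the auxiliary structures maintained by the algorithm.

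First I would argue the initialization cost. Given the precomputed pairwise distances $\dr(T_j, s_i)$ (treated as input to \incg), constructing the covering sets $\tc(s_i)$ and $\stc(T_j)$ once $\tau$ is available, together with the site weights $w_i$ and the initial marginal utilities $U_1(s_i) = w_i$ and the initial gains $\alpha_{ji} = \pref(T_j,s_i)$, requires a single pass over the $mn$ site--trajectory pairs, i.e.\ $O(mn)$ time.

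Next I would analyze one iteration of the main loop. Selecting $s_\theta = \arg\max_{s_i \notin \mathcal{Q}_{\theta-1}} U_\theta(s_i)$ is a linear scan over at most $n$ candidate sites, costing $O(n)$. The update step walks over every trajectory $T_j \in \tc(s_\theta)$, and for each such $T_j$ whose utility $U_j$ strictly increases, it revisits every site $s_i \in \stc(T_j)$ to refresh the stored marginal utility $U_{\theta+1}(s_i)$ and the marginal gain $\alpha_{ji}$. In the worst case $|\tc(s_\theta)| \le m$ and $|\stc(T_j)| \le n$, so the update step costs $O(mn)$; the selection step is absorbed into this bound. Summing over $\theta = 1,\ldots,k$ and adding the $O(mn)$ initialization yields total time $O(k \cdot mn)$.

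For space, the dominant structure is the distance matrix $\dr(T_j,s_i)$ of size $mn$; the covering sets $\tc$ and $\stc$ together store at most $2mn$ entries; the marginal gains $\alpha_{ji}$ contribute another $O(mn)$; and the site weights, marginal utilities, and per-trajectory utilities contribute only $O(m+n)$. Hence the overall space complexity is $O(mn)$. The only mildly delicate point, and the one I would spell out carefully, is the worst-case bound on the update step: if one tried to charge the update against only the sites whose marginals actually change, one would still have to allow for $O(mn)$ changes in a single iteration (every covered trajectory can shift every neighboring site's gain), so the $O(mn)$ per-iteration bound is tight and cannot be improved by a simple amortization argument.
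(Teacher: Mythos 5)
Your proposal is correct and follows essentially the same route as the paper: it charges $O(mn)$ to the one-time construction of $\tc$, $\stc$, the weights, and the $\alpha_{ji}$ values, bounds each greedy iteration by $O(n)$ for selection plus $O(mn)$ for the nested $\tc(s_\theta)$/$\stc(T_j)$ updates (using $|\tc|\le m$ and $|\stc|\le n$), and observes that all stored structures are $O(mn)$. Your closing remark about the per-iteration bound being tight under this accounting is a nice addition but not needed for the stated bounds.
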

\begin{proof}
	As discussed earlier in Sec.~\ref{sec:approx}, the computation of the sets
	$\tc$, $\stc$ and the weights require $O(mn)$ time and storage.
	
	Initializing the marginal utilities $U_\theta(s_i)$ of all the sites and
	$\alpha_{ji}$ values for all trajectory site pairs require $O(mn)$ time.
	
	In any iteration $\theta=1,\dots,k$, selecting the site with maximal
	marginal utility requires at most $O(n)$ time.  The largest size of any \tc
	set is at most $m$ and that of any \stc set is at most $n$.  Hence, scanning
	the set of trajectories $\tc(s_\theta)$ 
	and updating the utilities requires $O(m)$ time.  If $U_j$ gets updated,
	then scanning the set of sites in $\stc(T_j)$
	and updating the values of $U_\theta(s_i)$ and $\alpha_{ji}$ requires $O(n)$
	time.  Thus, the time complexity of any iteration $\theta$ is $O(mn)$.
	Hence, the total time complexity of $k$ iterations is $O(kmn)$.

	Since no other space is required than that for storing \tc, \stc and site
	weights, the total space complexity is $O(mn)$.
	\hfill{}
\end{proof}

\subsection{Limitations of Inc-Greedy}
\label{sec:limitations}

Although \incg provides a constant factor approximation, it is not scalable to
large real-life datasets.  The reasons are:

\noindent
\textbf{High query cost}: 
	The input parameters for \tops query,  $(k,\tau, \pref)$, are available only at query time.  Hence, the
	covering sets $\tc$, $\stc$, and the site weights (that depend on the value of $\tau$)  can be generated \emph{only}
	at query time.  Even if all pairwise site-to-trajectory distances are pre-computed, this step requires a  high computation cost of $O(mn)$ (both in terms of time
	and memory) where $m,n$ denote the number of trajectories, and candidate sites, respectively. Due to this reason, for real city-scale datasets
	(such as the Beijing dataset \cite{cab1} used in our experiments that has
	over 120,000 trajectories and 250,000 sites),  \incg is not scalable. Fig.~\ref{subfig:tautime} shows that \incg takes about 2000 sec. to complete for $\tau=1.2$ Km. and $k=5$, and goes out of memory for $\tau > 1.2$ Km.
	
\noindent
\textbf{High storage cost:}
	As discussed above, to facilitate faster computation of covering sets \tc, we need to pre-compute all pairwise site-to-trajectory distances. 		
	However, for any city-scale dataset, this storage requirement is
	prohibitively large.  For example, the Beijing dataset \cite{cab1} 
	 require close to 250 GB  of
	storage.  This is unlikely to fit in the main memory and, therefore,
	multiple expensive random disk seeks are required at run-time. Even with pre-computed distances up to 10 Km., \incg crashes beyond $\tau \> 1.2$ Km. (shown in Tab.~\ref{tab:memory}.)

\noindent
\textbf{High update cost:}
	\incg is also not amenable to updates in trajectories and sites.  If a
	new trajectory is added, its distance to all the sites needs to be computed
	and sorted.  In addition, the sorted \tc set of all the sites need to be
	updated as well.  Similarly, if a new candidate site is added, the distances
	of all the trajectories to this site will need to be computed and sorted.
Such
	costly update operations are impractical, especially at run-time.

A careful analysis of \incg reveals that there are two main stages of the
algorithm.  In the first, the sets \tc, \stc and the site weights are computed.
In the second stage, some of these sets are updated in an iterative manner.

The first stage is heavier in terms of time and space requirements.  Thus, to
make it efficient, we use an index structure, \nc, which is described in
Sec.~\ref{sec:offline} and Sec.~\ref{sec:online}.  The use of indexing reduces
the computational burden of the update (i.e., the second) stage as well.
Further, Section~\ref{sec:updates} shows how \nc allows easier handling of
additions and deletions of candidate sites and trajectories.

In addition, if the preference function $\pref$ is binary as defined in
Def.~\ref{def:binary}, the update steps for \incg can be performed quite
efficiently using FM sketches \cite{fm,krep}.  We next describe the details.

\subsection{Using FM Sketch to speed up Inc-Greedy}
\label{sec:fm}

The main use of FM sketches is in counting the number of \emph{distinct}
elements in a set or union of sets \cite{fm}.  Suppose, the maximum number of
distinct elements is $N$.  An FM sketch is a bit vector, which is initially
empty, and is of size at least $O(\log_2 N)$.  The probability of an element
from the domain hashing into the $i^\text{th}$ bit of the FM sketch is $2^{-i}$.
Thus, if a set has $\Gamma$ elements, the probability of the last bit marked in
the FM sketch is $O(\log_2 \Gamma)$.  Hence, after the elements of a set are
hashed to the FM sketch, the last set bit can be used to estimate the number of
distinct elements in the set.  (The details of how the hashing function is
chosen and the exact estimates are in \cite{fm}.) Although the FM sketch does
not count the number of distinct elements exactly, it provides a multiplicative
guarantee on the error in counting.  When more copies of the FM sketch is used,
the error decreases.

The FM sketch can be used to speed up the update stage of \incg, since selecting
a site with the largest maximal utility is the same as selecting a site that
covers the largest number of \emph{distinct} trajectories not yet covered.

For each site $s_i \in \s$, the set of trajectories that it covers, i.e.,
$\tc(s_i)$ is maintained as an FM sketch.  Thus, instead of maintaining
$O(m)$-sized lists for each site where $m$ is the total number of trajectories,
we only need to maintain $O(\log_2 m)$-sized bit vectors per site.

Suppose the count of distinct trajectories covered by a site $s_i$ is $\chi_i$.
The \emph{marginal} utility of site $s_j$ when site $s_i$ has been chosen is the
number of distinct trajectories that the two sites \emph{together} cover over
the number of trajectories that site $s_i$ \emph{alone} covers.  The estimate of
the number of trajectories covered by the union of $s_j$ and $s_i$ can be
obtained by the bitwise OR of the FM sketches corresponding to $s_j$ and $s_i$.
If this estimate is $\chi_{ij}$, the marginal utility of site $s_j$ over site
$s_i$ is $\chi_{ij} - \chi_i$.

Therefore, when there are $n$ candidate sites, to determine the site that
provides the best marginal utility over site $s_i$, $n-1$ such bitwise OR
operations are performed, and the maximum is chosen.  At the end of the
$\theta^\text{th}$ iteration, the combined number of trajectories covered by the
sites in $\mathcal{Q}_{\theta}$ is stored by the \emph{union} of the FM sketches
obtained successively in the $\theta$ iterations.  The $(\theta+1)^\text{th}$
site is chosen by using this combined FM sketch as the base.

The above brute-force algorithm can be improved in the following way.  The upper
bound of the marginal utility for any site $s_j$ is its own utility.  Thus, if
the current best marginal utility of another site $s_k$ is already greater than
that, it is not required to do the union operation with $s_j$.  If the sites are
\emph{sorted} according to their utilities, the scan can stop as soon as the
first such site $s_j$ is encountered.  All sites having a lower utility are
guaranteed to be useless as well.

In our implementation, the FM sketches are stored as $32$-bit words.  This
allows handling of roughly $2^{32}$ (which is more than $4$ billion) number of
trajectories.  The length $32$ is chosen since the bitwise OR operation of two
such regular-sized words is extremely fast in modern operating systems. 

\section{Offline Construction of NetClus}
\label{sec:offline}

As discussed in Sec.~\ref{sec:limitations}, \incg has two computationally
expensive components. While FM sketch expedites the information update
component, computation of \tc, \stc, etc. still remain a bottleneck with
$O(mn)$ time and storage complexity. To overcome this scalability issue, we
develop an index structure.  

One of the most natural ways to achieve the above objectives is to cluster the
sites in the road network to reduce the number, and then apply \incg on the
cluster representatives.  The clustering of sites can be done according to two
broad strategies.  The first is to cluster the sites that are highly similar in terms of their trajectory covering sets \tc. The similarity between these covering sets can be quantified using  the \emph{Jaccard similarity} measure (Appendix~\ref{app:Jaccard}).
 However, this
approach is not practical due to two major limitations: (1)~Since the coverage threshold $\tau$ is available \emph{only} at query time, the covering sets \tc can be computed only at query time. Hence, the clustering can be performed only at query time. This leads to impractical query time. 
(2)~Alternatively, multi-resolution clustering may be performed based on few
fixed values of $\tau$, and at query time, a clustering instance of a
particular resolution is chosen based on the value of the query parameter
$\tau$.  However, this still requires computing the similarity between each pair of sites. Owing to large number of sites, and large size of the covering sets, such computation demands impractical memory overhead. 
Hence, we adopt the second clustering option, that of
distance-based clustering. 
  
We first state one basic observation.  If two sites are close, the sets of
trajectories they cover are likely to have a high overlap.  Hence, when $k \ll
n$, which is typically the case, the sites chosen in the answer set are likely
to be distant from each other.  The index structure, \nc, is designed based on
the above observation.
  
Our method follows two main phases: offline and online.  In the offline phase,
clusters are built at multiple resolutions.  This forms the different index
instances.  A particular index instance is useful for a particular range of
query coverage thresholds.  In the online phase, when the query parameters are
known, first the appropriate index instance is chosen.  Then the \incg algorithm is
 run with the cluster representatives of that instance.
 
We explain the
offline phase in this section and the online phase in the next. The important
notations used in the \nc scheme are listed in Table~\ref{tab:netclus}. 

\subsection{Distance-Based Clustering}
\label{sec:clustering}

The clustering method is parameterized by a distance threshold $R$, which is
the \emph{maximum cluster radius}.  The round-trip distance from any node
within the cluster to the cluster-center is constrained to be at most $2R$.
The radius is varied to obtain clusters at multiple resolutions.  We describe
the significance of the choices of $R$ later.

The objective of the clustering algorithm is to partition the set of nodes in
the road network, $V$, into disjoint clusters such that the number of such
clusters is \emph{minimal}.  This leads not only to savings in index storage,
but more importantly, it results in faster query time, as \incg is run on a
smaller number of cluster representatives.  We next describe how to achieve this
objective.

\subsubsection{Generalized Dominating Set Problem (GDSP)}

Given an undirected graph $G=(V,E)$, the \emph{dominating set problem (DSP)}
\cite{haynes1998fundamentals} computes a set $D \subseteq V$ of \emph{minimal}
cardinality, such that for each vertex $v \in V - D$, there exists a vertex $u
\in D$, such that $(u,v) \in E$.  DSP is NP-hard \cite{haynes1998fundamentals}.
In \cite{chen2012approximation}, it was generalized to the \emph{measured
dominating set} problem for weighted graphs.  In this work, we propose a
\emph{generalized dominating set problem (GDSP)}, that uses a different notion
of dominance from \cite{chen2012approximation}.

\begin{prob}[GDSP]
	Given a weighted directed graph $G=(V,E,W)$ where $W: E \rightarrow
	\mathbb{N}$ assigns a positive weight for each edge in $E$, and a constant
	$R>0$, a vertex $u \in V$ is said to \emph{dominate} another vertex $v \in
	V$ if $d(u,v) + d(v,u) \leq 2R$, where $d$ denotes the directed path weight.
	The \emph{GDSP} problem computes a set $D \subseteq V$ of \emph{minimal}
	cardinality such that for any $v \in V - D$, there exists a vertex $u \in D$
	such that $u$ \emph{dominates} $v$.
\end{prob}
 
\emph{GDSP} is NP-hard due to a direct reduction from DSP where all edge weights
are assumed to be $1$, and $R = 2$.

\subsubsection{Greedy Algorithm for GDSP}
\label{sec:greedy algo for GDSP}

To solve GDSP, we adapt the greedy algorithm proposed in
\cite{chen2012approximation}.  We refer to our algorithm as \emph{Greedy-GDSP}.
The only input parameter to the clustering process is the cluster radius $R$.


First, the \emph{dominating sets} of every vertex $v$, denoted by $\Lambda(v)$,
are computed.  This is achieved by running the shortest path algorithm from a
source vertex till distance $2R$.
The dominance relationship is \emph{symmetric}, i.e., $u \in \Lambda(v)
\Leftrightarrow v \in \Lambda(u)$.

The main part of the Greedy-GDSP algorithm is iterative.  In the first
iteration, the vertex $v$ that dominates the largest number of vertices is
chosen.  The set of dominated vertices, $\Lambda(v)$, forms a new \emph{cluster}
with $v$ as the cluster center.  The vertices $v$ and $\Lambda(v)$ are not
considered for further comparisons.  In addition, the vertices in $\Lambda(v)$
are removed from the dominating sets of other non-clustered vertices.  In other
words, for each $u \in \Lambda(v)$, if $u \in \Lambda(w)$ for some non-clustered
vertex $w$ then $\Lambda(w) = \Lambda(w) - \{u\}$.  In the subsequent
iterations, the vertex with the largest \emph{incremental} dominating set as
produced from previous iterations is chosen.  The dominated vertices that are
not part of other clusters form a new cluster.  The algorithm terminates when
all the vertices are clustered.

Similar to the \incg algorithm, we use FM sketches to efficiently update the
dominating sets and choose the vertex with the largest incremental dominating
set in each iteration.  The details are  same as those described in
Sec.~\ref{sec:fm} with trajectory covering sets for each candidate site in \incg
replaced by dominating sets for each vertex in Greedy-GDSP.


\subsubsection{Analysis of Greedy-GDSP}

The next two theorems characterize the
approximation bound and time complexity of Greedy-GDSP.

\begin{thm}
	The cardinality of the dominating set computed using the proposed algorithm
	is within an approximation bound of $(1 + \epsilon').(1 + \ln n)$ of the
	optimal where $\epsilon'$ is the approximation error of FM sketches.
\end{thm}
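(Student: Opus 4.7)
The plan is to recognize Greedy-GDSP as a greedy set cover algorithm whose oracle (the count of currently-uncovered dominated vertices) is replaced by an $(1+\epsilon')$-approximate FM-sketch estimate, and then to combine the classical $\ln n$ bound with a standard propagation of the multiplicative error.

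First, I would reduce GDSP to a set-cover instance. For each vertex $v \in V$, define the set $\Lambda(v) \subseteq V$ of vertices dominated by $v$ under the round-trip condition $\ds(v,u)+\ds(u,v)\le 2R$. Every $v$ is dominated by itself (since $\ds(v,v)=0$), so $\{\Lambda(v):v\in V\}$ covers $V$. A feasible dominating set is exactly a subcollection of these sets whose union is $V$, so the cardinalities of optimal GDSP and optimal set cover coincide on this instance of universe size $n=|V|$.

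Next, I would apply the textbook analysis of the greedy set cover algorithm, which selects at each step the set that covers the largest number of still-uncovered elements and attains a guarantee of $H_n \le 1+\ln n$ times the optimum. The only wrinkle is that Greedy-GDSP does not pick the exactly-maximum incremental set; instead it picks the vertex maximizing the FM-sketch estimate of the remaining dominated set size. By the multiplicative guarantee of FM sketches, for every candidate $v$ the estimate $\widehat{|\Lambda_t(v)|}$ of its current incremental contribution $|\Lambda_t(v)|$ satisfies
\begin{align*}
\tfrac{1}{1+\epsilon'}\,|\Lambda_t(v)| \;\le\; \widehat{|\Lambda_t(v)|} \;\le\; (1+\epsilon')\,|\Lambda_t(v)|.
\end{align*}
Hence the vertex chosen by the algorithm covers at least a $1/(1+\epsilon')$-fraction of the true maximum incremental set at that iteration; equivalently, it is a $(1+\epsilon')$-approximate greedy choice.

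Finally, I would plug this approximate-greedy choice into the standard charging argument. Writing $\textsc{opt}$ for the optimum dominating set size and $U_t$ for the number of uncovered vertices after step $t$, the true-greedy step guarantees $U_{t+1}\le U_t(1-1/\textsc{opt})$; under a $(1+\epsilon')$-approximate greedy choice this weakens to
\begin{align*}
U_{t+1} \;\le\; U_t\!\left(1-\tfrac{1}{(1+\epsilon')\textsc{opt}}\right).
\end{align*}
Iterating and using $U_0=n$ together with $1-x\le e^{-x}$ shows that after $(1+\epsilon')\,\textsc{opt}\cdot\ln n$ steps fewer than one uncovered vertex remains, and a further $O(1)$ steps suffice. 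A clean way to combine the two constants is the harmonic-sum version of the proof, which yields $|D|\le (1+\epsilon')(1+\ln n)\,\textsc{opt}$, matching the claimed bound.

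The main obstacle is a careful bookkeeping of the FM-sketch error across iterations: each iteration's decision uses a fresh estimate, so one must be sure that the multiplicative $(1+\epsilon')$ factor appears only once in the final bound rather than compounding over the $O(\log n)$ rounds. I would handle this by applying the approximation slack iteration-by-iteration inside the standard $H_n$ argument (where it contributes a single multiplicative factor to the number of rounds needed to exhaust the universe), rather than by composing per-round error terms. Verifying that the dominance symmetry and the removal step $\Lambda(w)\leftarrow\Lambda(w)\setminus\{u\}$ preserve the set-cover correspondence is routine.
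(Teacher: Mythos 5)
Your proposal follows essentially the same route as the paper: the paper's proof simply invokes the greedy dominating-set/set-cover analysis (via the adaptation of Chen et al.) to get the $H_n \le 1+\ln n$ factor and then folds the FM-sketch error in as a single multiplicative $(1+\epsilon')$; you reconstruct that argument explicitly via the set-cover reduction and the approximate-greedy charging argument, and your observation that the sketch error enters once (as a degradation of each greedy step) rather than compounding over rounds is exactly the right way to justify what the paper merely asserts.

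One small quantitative slip: from the two-sided bound $\tfrac{1}{1+\epsilon'}\,|\Lambda_t(v)| \le \widehat{|\Lambda_t(v)|} \le (1+\epsilon')\,|\Lambda_t(v)|$ the vertex maximizing the estimate is only guaranteed to cover a $1/(1+\epsilon')^2$-fraction of the true best incremental set (one factor to relate the winner's estimate to its true count, another to relate the runner-up's true count to its estimate), so your argument as written yields $(1+\epsilon')^2(1+\ln n)$ rather than $(1+\epsilon')(1+\ln n)$. To land exactly on the stated constant you need either a one-sided error guarantee or a redefinition of $\epsilon'$ absorbing the square; the paper glosses over this point entirely, so your derivation is, if anything, more careful than the original while sharing its approach. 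You might also note in passing that the FM guarantee is probabilistic and should hold simultaneously for all estimates used across iterations (e.g., by a union bound), another detail the paper leaves implicit.
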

\begin{proof}
	Following the analysis in \cite{chen2012approximation}, the Greedy-GDSP
	algorithm offers an approximation bound of $H_n = 1 + \frac{1}{2} + \dots +
	\frac{1}{n} \leq 1 + \ln n$ which was shown to be tight unless $P=NP$.
	This, however, does not consider the approximation due to the use of FM
	sketches.  Incorporating that, the bound becomes $(1 + \epsilon').(1 + \ln
	n)$ where $\epsilon'$ is the approximation error of FM sketches.
	\hfill{}
\end{proof}

\begin{thm}
	\label{thm:gdsp-complexity}
	Greedy-GDSP runs in $O(|V| . (\nu \log \nu + \eta))$ time where $\nu$ is the
	maximum number of vertices that are reachable within the largest round-trip
	distance $R_{max}$ from any vertex $v$, and $\eta$ is the number of clusters
	returned by the algorithm.
\end{thm}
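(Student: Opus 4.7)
The plan is to separate the cost of Greedy-GDSP into two disjoint phases and bound each independently: (i) the pre-processing phase that materialises the dominating set $\Lambda(v)$ for every vertex, and (ii) the iterative phase that selects and commits the $\eta$ cluster centres using FM sketches. Summing the two bounds will give the claimed $O(|V|(\nu \log \nu + \eta))$.

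For phase (i), I would observe that $\Lambda(v)$ is entirely determined by the vertices lying within round-trip distance $2R_{\max}$ of $v$, and can be recovered by running a bounded-radius Dijkstra on the forward graph (to get $d(v,u)$) and a second one on the reversed graph (to get $d(u,v)$), then combining the two distance vectors. By the very definition of $\nu$, each of these two searches touches at most $\nu$ vertices; near-planarity of the road network ($|E|=O(|V|)$) ensures the induced subgraph on those vertices has only $O(\nu)$ edges, so standard heap-based Dijkstra costs $O(\nu \log \nu)$ per source. The combination step is a single linear sweep of the $O(\nu)$ candidates, which is absorbed. Summing over all $|V|$ sources gives $O(|V| \cdot \nu \log \nu)$ for the pre-processing.

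For phase (ii), I would reuse the FM-sketch machinery of Sec.~\ref{sec:fm}: maintain one global sketch $\mathcal{C}$ of already-clustered vertices, and for each non-clustered $v$ the static sketch of $\Lambda(v)$. The incremental dominance count of $v$ is then estimated as $|\mathcal{C} \cup \Lambda(v)| - |\mathcal{C}|$, which needs a single bitwise OR on fixed-width (32-bit) words followed by a population count, i.e.\ $O(1)$ per candidate. Each of the $\eta$ iterations therefore consists of an $O(|V|)$ linear scan to locate the best marginal gain, plus an $O(1)$ update to $\mathcal{C}$; the total is $O(|V| \cdot \eta)$. Adding the two phase bounds yields the theorem.

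The step I expect to require the most care is the per-source argument in phase (i): I must justify that the bounded-radius Dijkstra really does visit only $\nu$ vertices and relax only $O(\nu)$ edges, and that the pair of forward/reverse searches is sufficient (and necessary) to reconstruct the symmetric round-trip ball $\Lambda(v)$ without blowing the cost up to $O(|V| \log |V|)$ per source. Everything else -- the $O(1)$-per-candidate FM evaluation and the linear scan over unclustered vertices -- is a direct transposition of the analysis already carried out for \incg in Sec.~\ref{sec:fm}.
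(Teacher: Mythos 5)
Your proposal is correct and follows essentially the same route as the paper's proof: a per-source bounded Dijkstra costing $O(\nu \log \nu)$ under the near-planarity assumption (summed over $|V|$ sources), plus an FM-sketch-based selection phase costing $O(|V|)$ constant-time OR evaluations per each of the $\eta$ iterations. Your explicit handling of the forward and reverse searches for the directed round-trip ball is a slightly more careful statement of a step the paper treats implicitly, but the decomposition and the bounds are the same.
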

The proof is given in Appendix~\ref{app:gdsp}.

\begin{table}[t]
\scriptsize
\centering
\begin{tabular}{cl}
\hline
\textbf{Symbol} & \multicolumn{1}{l}{\textbf{Description}} \\
\hline
\hline
$\tc(s)$ & Set of trajectories covered by site $s$ \\
$\cl(g)$ & Neighbors of cluster $g$ \\
$\tl(g)$ & Set of trajectories passing through cluster $g$ \\
$\tc(g)$ & Set of trajectories passing through $g$ and its neighbors $\cl(g)$\\
$\widehat{\dr}(T,s)$ & Estimate of $\dr(T,s)$ in the clustered space\\
$\widehat{\tc}(s)$ & Estimate of $\tc(s)$ in the clustered space\\
$\eps$ & Resolution at which index instances change \\
$t$ & Number of index instances \\
$\inst_p$ & Index instance \\
$R_p$ & Cluster radius for $\inst_p$ \\
$\eta_p$ & Number of clusters for $\inst_p$ \\
$\Lambda(v)$ & Dominating set for node $v$ \\
$f$ & Number of bit vectors for FM sketch \\
$\epsilon$ & Error parameter for FM sketch \\
\hline
\end{tabular}
\tabcaption{Important notations used in the algorithms.}
\label{tab:netclus}
\vspace*{2mm}
\end{table}
\subsection{Selection of Cluster Representatives} 
\label{sec:representative}

In order to run \incg on the clusters, each cluster needs to choose a
\emph{representative} candidate site.  This may be different from the cluster
center that was used to construct the cluster.  The flexibility is needed since
the cluster representative should \emph{necessarily} be a candidate site,
although the cluster center may be any vertex in $V$.  Taking
into account the fact that the cluster representative should summarize the
information about the cluster and the trajectories that pass through it, and
use this information to compete against the other cluster representatives in
the online phase, we study two alternatives of choosing the cluster
representative:

\begin{enumerate}

	\item The \emph{most frequently} accessed candidate site, i.e., the one
		through which the largest number of trajectories pass through. 

	\item The candidate site that is \emph{closest} to the cluster center. 

\end{enumerate}

While the first option guarantees that the utility of the cluster is at least
that of its best site, the second summarizes the distribution of trajectories
better.
Empirical studies show that the utilities returned by the two alternatives are
quite similar, but the second alternative is marginally better.
Consequently, 
we adopt the second option.

\subsection{Cluster Information}
\label{sec:information}

Suppose the above clustering algorithm produces clusters of radius $R$, i.e., the maximum round-trip distance of any node within a cluster to its cluster center is at most $2R$. Then, a pair of clusters are considered as \emph{neighbors} of each other if their
centers are within a round-trip distance of $4.R(1+\eps)$, where $\eps \in
(0,1]$ is the index resolution parameter to be described in
Sec.~\ref{sec:index}. This choice of neighborhood is explained in Sec.~\ref{sec:topsc problem}.
 
As part of the index structure, every cluster $g_i$ stores the following
information:

\begin{enumerate}

	\item Cluster \emph{center}, $c_i$.

	\item Cluster \emph{representative}, $r_i$.

	\item Trajectory set, i.e., list of \emph{trajectories} passing through at
		least one site in $g_i$, along with their round-trip distance to $c_i$,
		$\tl(g_i) = \langle T_j, \dr(T_j,c_i) \rangle$.

	\item Cluster \emph{neighbors} along with the round-trip distance between
		their centers and $c_i$, $\cl(g_i) = \{\langle g_j, \dr(c_i,c_j)
		\rangle\}$, sorted by $\dr(c_i,c_j)$.

	\item Set of \emph{nodes} in the cluster and their round-trip distance to
		$c_i$, $\{\langle v_i, \dr(v_i,c_i) \rangle\}$.

\end{enumerate}


The set $\tl(g_i)$ is computed by scanning the sequence of nodes in each
trajectory $T_j$.  If $v_i \in T_j$, then $T_j$ is added to $\tl(g(v_i))$,
where $g(v_i)$ denotes the cluster in which $v_i$ resides.  Thus, a trajectory
is represented as a \emph{sequence of clusters}.  As neighboring nodes in any
trajectory are likely to fall into the same cluster, this allows a
\emph{compressed} representation of the trajectory by collapsing consecutive
copies of the same cluster into one.  This compression contributes towards the
efficiency of \nc in terms of both space and time.

\begin{figure}[t]
\centering
\subfloat[Example]{
  \includegraphics[width=0.34\columnwidth]{./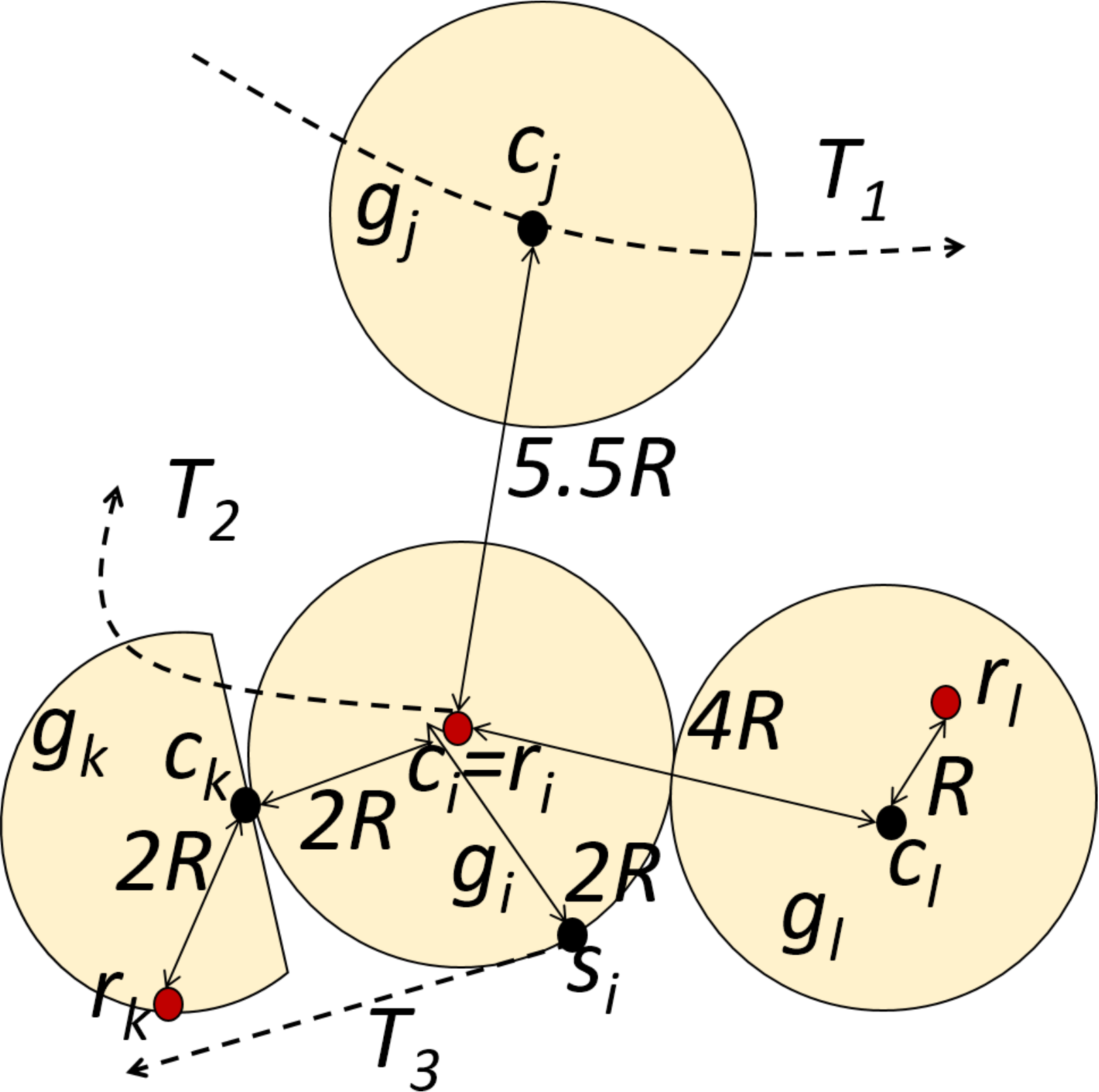}
\label{fig:clusters}
  }
  \subfloat[Covering Sets]{
  \includegraphics[width=0.58\columnwidth]{./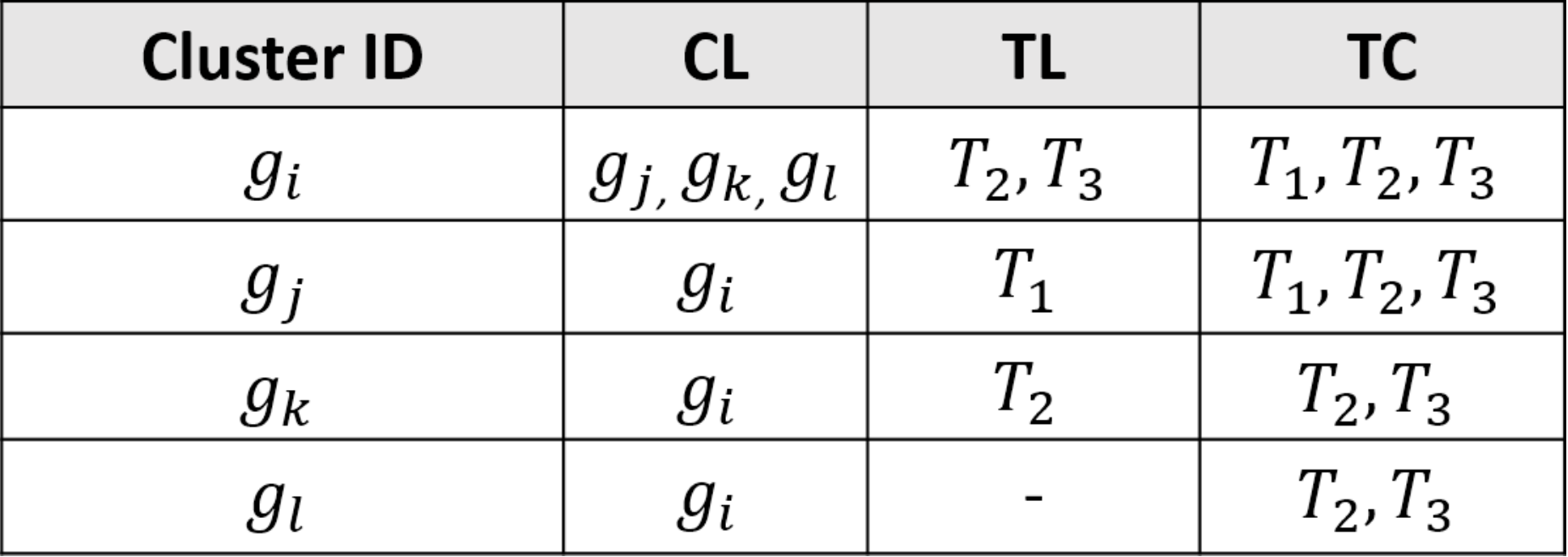}
\label{fig:coverage}
  }
\figcaption{\nc example using $\eps=0.5$.}
\moveup
\end{figure}

\begin{example}
	\emph{
	Fig.~\ref{fig:coverage} illustrates an example of \nc clustering with
	cluster radius $R$ and $\eps=0.5$.
	The clusters $g_i$, $g_j$, $g_k$ and $g_l$ have centers $c_i$, $c_j$, $c_k$ and
	$c_l$ respectively. The cluster $g_i$ has two candidate sites $r_i$ and $s_i$.
	Since $r_i$ is located at $c_i$, it is chosen as the cluster-representative.
	While cluster $g_j$ has no candidate site, each of the clusters $g_k$ and $g_l$
	have one candidate site, namely, $r_k$ and $r_l$, respectively, each of
	which is a
	cluster-representative. The distance between the cluster centers are as
	follows:
	$\dr(c_i,c_j)=5.5R$, $\dr(c_i,c_k) \gtrsim 2R$, and $\dr(c_i,c_l) = 4R$ where
	$\gtrsim$ means just greater
	than. The distance between any other pair of cluster centers is greater
	than or equal to $6R$. Given that $\eps =0.5$, the distance between any two
	neighboring cluster centers lies in the range  $[2R,6R)$. Based on this,
		the cluster neighbors, \cl, are shown in the figure. The trajectories
		$T_1$, $T_2$ and $T_3$ pass through nodes $c_j$, $r_i$ and $s_i$,
		respectively. The figure lists the trajectory sets $\tl$ for each
		cluster. Note that when
		$\tau \ge 4R$, it is guaranteed that any site covers any trajectory
		that passes through the same cluster. For example, $r_k$ covers $T_2$
		as it passes through the cluster $g_k$ that contains $r_k$.  
	}	
	\hfill{}
	$\square$
\end{example}

\subsection{Multi-Resolution Index Structure}
\label{sec:index}

We next explain how the multi-resolution index structure, \nc, is built by using
the clustering algorithm outlined above.  Assume that the normal range of query
coverage threshold $\tau$ is $[\tau_{min}$, $\tau_{max})$.  (We discuss the two
extreme cases later.) \nc maintains $t$ instances of index structures
$\inst_0,\dots,\inst_{t-1}$ of varying cluster radii.  From one instance to the
next, the radius increases by a factor of $\epsinc$ for some $\eps > 0$.  Thus,
the total number of index instances is $t = \lfloor \log_{\epsinc} (\tau_{max} /
\tau_{min}) \rfloor +1$.  For each instance, all the clusters and their
associated information are stored.

Consider a particular index instance $\inst_p$ with cluster radius $R_p$.  As
discussed above, the maximum round-trip distance from a site $s_i$ belonging to
the cluster $g_i$ to a trajectory $T_j$ that passes through $g_i$ is at most
$\dr(T_j,s_i)$ $\leq 4.R_p$.  Thus, if the coverage threshold $\tau < 4.R_p$,
then it is not guaranteed if $s_i$ covers $T_j$ or not.  Hence, the index
instance $\inst_p$ is not useful for any $\tau < 4.R_p$ and a finer instance
with a lesser cluster radius should be used.

On the other hand, if $\tau$ is too large, too many neighboring clusters may
cover a trajectory.  Therefore, intuitively, it makes sense to switch to a
higher index instance with a larger cluster radius so that less number of
clusters need to be processed.  The parameter \epsinc captures the ratio of
$\tau$ to $4.R_p$ beyond which the switch is made.  Thus, if $\tau > 4.R_p .
\epsinc$, a higher index instance is used.

Therefore, the range of useful $\tau$ for the index instance $\inst_p$ is
$[4.R_p, 4.R_p.\epsinc)$.  Hence, the lowest cluster radius is $R_0 = (\tau_{min} / 4)$,
and the successive cluster radii for instances $\inst_p, \ p = 1, \dots, t-1$
are $R_p = \epsinc^{p} R_0$.  From one index instance to the next, as the
cluster radius $R_p$ grows, the number of clusters, $|\inst_p|$, falls
exponentially.


\noindent
\textbf{Choice of $\eps$:} The number of index instances $t$ depends on $\eps$.
A smaller value of $\eps$ creates more number of instances, thereby requiring
larger storage and offline running time.  The approximation error is also
affected by $\eps$.  When $\eps$ is smaller, the range of $\tau$ handled by a
particular index instance is tighter.  Therefore, the distance approximations
are better.
Experimental results showing the empirical impact of \eps are discussed in
Sec.~\ref{sec:parameters}.

\noindent
\textbf{Extreme cases of $\tau$:} The extreme values of the range of $\tau$,
namely, $\tau_{min}$ and $\tau_{max}$, are assigned respectively as the
\emph{minimum} and \emph{maximum} round-trip distance between any two sites in
$\mathcal{S}$.  This particular choice is guided by the following analysis.  If
there is a query with $\tau < \tau_{min}$, then the method degenerates to
normal \incg as each site becomes a cluster by itself.  If, on the other hand,
$\tau \geq \tau_{max}$, then \nc reports any $k$ sites, as each site covers
every other site, and consequently, all the trajectories.  Hence, the
multi-resolution \nc is applicable to \emph{all} query coverage thresholds.

\section{Querying using NetClus}
\label{sec:online}

We next explain the online phase of querying that starts after the query
parameters $(k,\tau)$ are available.

The first important consideration is choosing the index instance $\inst_p$
that supports the given query threshold $\tau$.  The index $p$ is computed as $p
= \lfloor \log_{\epsinc} (\tau / \tau_{min}) \rfloor$.  This ensures that $4.R_p
\leq \tau < 4.R_p. \epsinc$ where $R_p$ is the cluster radius for
$\inst_p$.

We next discuss how to apply \tops on the clustered space.

\subsection{\topsc Problem}
\label{sec:topsc problem}


Consider a cluster $g_i$ with its representative $r_i$, and a trajectory $T_j$ passing through a cluster $g_j$, where $g_j$ may or may not be equal to $g_i$. Then $T_j \in
\tc(r_i)$ if and only if $\dr(T_j,r_i) \leq \tau$.  In the clustered space,
however, we only store the distances of each trajectory from the centers of the clusters that it passes through. Hence, it is not possible to compute $\dr(T_j,r_i)$ without extensive online computation. 
  Hence, an
\emph{approximate} distance $\widehat{\dr}(T_j,r_i)$ is computed and used.  The
round-trip distance estimate from $T_j$ to $r_i$ is
\begin{align}
	\label{eq:approxcover}
	\widehat{\dr}(T_j,r_i) = \dr(T_j,c_j) + \dr(c_j,c_i) + \dr(c_i,r_i)
\end{align}
It is important to note that the distance can be estimated using \emph{only} the
information computed in the offline phase. 
Since the distances are approximate, the \emph{approximate trajectory cover} of
$r_i$ is
\begin{align}
	\label{eq:approx tc}
	\widehat{\tc}(r_i)= \{T_j \in \tc(g_i) | \widehat{\dr}(T_j,r_i) \leq \tau\}
\end{align}
where $\tc(g_i) = \tl(g_i) \cup \{T_j \in \tl(g_j)|g_j \in \cl(g_i)\}$ consists of the
trajectories passing through $g_i$ and  its neighbors $\cl(g_i)$. 

For any $T_j \in \widehat{\tc}(r_i)$,  $T_j \in
\tc(g_i)$, if and only if there exists a cluster $g_j$ such that $T_j \in \tl(g_j)$ and $\dr(c_i,c_j) \leq
\tau$. This follows from the fact that $\widehat{\dr}(T_j,r_i) \leq \tau$.  For the index instance $\inst_p$, since
$\tau \leq 4R_p\epsinc$, therefore, this condition reduces to $\dr(c_i,c_j)
\leq 4R_p\epsinc$. This is the reason why the neighborhood of a cluster is
defined as those whose centers are within a round-trip distance of
$4R_p\epsinc$ in Sec.~\ref{sec:information}.

Consequently, to compute the set $\widehat{\tc}(r_i)$, it is sufficient to
examine only the trajectory sets of the neighbors of the cluster $g_i$.  For
each trajectory $T_j \in \tl(g_j)$ where $g_j$ is a neighbor of $g_i$, the
approximate distance $\widehat{\dr}(T_j,r_i)$ is computed.  The trajectory $T_j$
is included in $\widehat{\tc}(r_i)$ if $\widehat{\dr}(T_j,r_i) \le \tau$.

\begin{table}[t]
	\begin{center}
	\begin{tabular}{c|ccc}
	$\widehat{\dr}$ & $r_i$ & $r_k$ & $r_l$\\
	\hline
	$T_1$ & $5.5R$ & $\gtrsim 9.5R$ & $10.5R$\\
	$T_2$ & $0R$ & $\le 4R$ & $5R$ \\
	$T_3$	& $2R$ & $\gtrsim 6R$ & $7R$\\ 		
		\end{tabular}
		\tabcaption{Distance estimates $\widehat{\dr}(\cdot)$ for example in Fig.~\ref{fig:clusters}.}
		\label{tab:distance-estimates}
\end{center}
\end{table}

\begin{example}
	\emph{
	The $\tc$ sets for the three clusters in Fig.~\ref{fig:clusters} are shown
	in Fig.~\ref{fig:coverage}.  Using Eq.~\eqref{eq:approxcover}, the distance estimates between each pair of cluster representative and trajectory is shown in Table~\ref{tab:distance-estimates}. Since $\eps=0.5$, therefore, the supported range of $\tau$ is
	$[4R, 6R)$. Thus, if $\tau=4R$, $\widehat{\tc}(r_i)=\{T_2,T_3\}$, 
	$\widehat{\tc}(r_k)=\{T_2\}$, and $\widehat{\tc}(r_l)=\varnothing$. Similarly, if $\tau=5.75R$, $\widehat{\tc}(r_i)=\{T_1,T_2,T_3\}$, 
	$\widehat{\tc}(r_k)=\{T_2\}$, and $\widehat{\tc}(r_l)=\{T_2\}$. 
}	
	\hfill{}
	$\square$
\end{example}

If a trajectory $T_j \in \widehat{\tc}(r_i)$, then it also lies in the set
$\tc(r_i)$ since $\dr(T_j,r_i) \leq \widehat{\dr}(T_j,r_i) \le \tau$.  However, the
reverse is not true, since there may be a trajectory $T_j$ such that
$\dr(T_j,r_i) \leq \tau$, but the estimate $\widehat{\dr}(T_j,r_i) > \tau$. Therefore, $\widehat{\tc}(r_i) \subseteq \tc(r_i)$.
For example, in Fig.~\ref{fig:clusters}, $\dr(T_3,r_k) \le 4R$, but $\widehat{\dr}(T_3,r_k) \gtrsim 6R$ which exceeds any supported value of $\tau$. Thus, $T_3 \notin \widehat{\tc}(r_k)$.


Finally, based on the query preference function $\pref$, the preference score
$\pref(T_j,r_i)$ is computed between all cluster representatives $r_i$ and their
trajectory covers $T_j \in \widehat{\tc}(r_i)$.

Using these approximate covering sets, we  run the following instance of \tops
problem, called \textsc{\topsc}.

\begin{prob}[\topsc]
	Given an index instance $\inst_p$ defined over the road network $G =
	(V,E)$, suppose $\widehat{\mathcal{S}} \subseteq S$ denote the set of cluster
	representatives in $\inst_p$.  \topsc problem seeks to report a
	set of $k$ cluster representatives, $\mathcal{Q} \subseteq
	\widehat{\mathcal{S}}$, $|\mathcal{Q}| = k$, such that
	$U(\mathcal{Q})$ is \emph{maximal}.
\end{prob}

 To
solve \topsc, we employ \incg on the set of cluster representatives
$\widehat{\mathcal{S}}$ using the above covering sets $\widehat{\tc}(r_i)$.

When the preference function $\pref$ is binary, FM sketches can be employed for faster updating of marginal utilities during the execution of \incg on the cluster representatives, in the same manner as described in Sec.~\ref{sec:fm}. 

\subsection{Analysis of NetClus}
\label{sec:analysis}
\noindent
\textbf{Quality Analysis:} 
The first result is due to a direct application of Lem.~\ref{lem:inc_topso3}.

\begin{cor}
	\label{cor:approx}
	The utility of the set $\widehat{\mathcal{Q}}$ returned by the \nc framework is
	bounded as follows: $U(\widehat{\mathcal{Q}}) \geq
	(k / |\widehat{\mathcal{S}}|) U(\widehat{\mathcal{S}})$.
\end{cor}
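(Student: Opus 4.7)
The plan is to obtain the corollary as a direct transfer of Lemma~\ref{lem:inc_topso3} from the original site set $\s$ to the cluster-representative set $\widehat{\s}$. In the online phase, \nc simply invokes \incg on the ground set $\widehat{\s}$ (with the approximate covering sets $\widehat{\tc}(r_i)$ driving the preference scores), so the entire argument of Lemma~\ref{lem:inc_topso3} applies verbatim after substituting $\widehat{\s}$ for $\s$ and $|\widehat{\s}|$ for $n$.

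The first step is to verify that the hypotheses of Lemma~\ref{lem:inc_topso3} survive the substitution. Concretely, I need to check that the utility function $U$ defined over subsets of $\widehat{\s}$ is still non-decreasing and sub-modular. Since $U_j = \max_{r_i \in \mathcal{Q}} \pref(T_j, r_i)$ retains its max-of-preferences form when restricted to cluster representatives, the proof of Theorem~\ref{thm:submodular} carries over unchanged (the argument nowhere uses that the ground set is $\s$ rather than a subset of it). In particular, both cases $s^{*}=s$ and $s^{*}\neq s$ in that proof depend only on the max-operator, not on how $\pref(T_j,r_i)$ is computed (exact versus approximate $\widehat{\dr}$).

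The second step is to reuse the inductive argument of Lemma~\ref{lem:inc_topso3}. Running \incg on $\widehat{\s}$ for $k$ iterations produces a sequence $\mathcal{Q}_0\subset \mathcal{Q}_1 \subset \dots \subset \mathcal{Q}_k = \widehat{\mathcal{Q}}$, and extending greedily through all $|\widehat{\s}|$ iterations produces $\mathcal{Q}_{|\widehat{\s}|} = \widehat{\s}$. Exactly as in Lemma~\ref{lem:inc_topso3}, induction on $\theta$ gives
\[
U(\mathcal{Q}_\theta) \geq (\theta / |\widehat{\s}|)\, U(\widehat{\s})
\]
for every $\theta = 1,\dots,|\widehat{\s}|$. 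Setting $\theta = k$ yields the claim $U(\widehat{\mathcal{Q}}) \geq (k/|\widehat{\s}|)\, U(\widehat{\s})$.

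There is no genuine obstacle here; the only subtlety worth flagging is that the bound is phrased relative to $U(\widehat{\s})$ rather than $U(\s)$, i.e.\ the guarantee compares the greedy solution to the utility achievable by using \emph{all} cluster representatives, not the utility of all original sites. This is honest because \nc never examines non-representative sites at query time, and so the monotonicity argument used to replace $U(\opt)$ by $U(\widehat{\s})$ (as in the proof of Lemma~\ref{lem:inc-greedy topso}) is implicit but not needed for the statement as given.
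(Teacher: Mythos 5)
Your proposal is correct and matches the paper's argument: the paper proves this corollary precisely as a direct application of Lemma~\ref{lem:inc_topso3} with the ground set $\s$ replaced by the cluster representatives $\widehat{\s}$, which is exactly the substitution (plus the check that monotonicity and sub-modularity survive) that you spell out. Your closing remark that the bound is relative to $U(\widehat{\s})$ rather than $U(\s)$ is also consistent with how the paper states and uses the result.
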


If the index instance $\inst_p$ is used for a particular query threshold
$\tau$, then $|\widehat{\mathcal{S}}|$ is at most the number of clusters, $\eta_p$.
The next result states the approximation guarantees offered by the \nc
framework.

\begin{thm}
	\label{thm:topsbound}
	The approximation bound offered by \nc for the binary instance of \tops is
	$(k / \eta_p)$.  For a general preference function
	$\pref(T_j,s_i)=f(\cdot)$, the approx. bound is $f(\tau). (k / \eta_p)$.
\end{thm}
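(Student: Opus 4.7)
My plan is to reduce both claims to Corollary~\ref{cor:approx}, handling the binary case first and then lifting to the general preference function via the monotonicity of $f$.

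For the binary instance, I apply Corollary~\ref{cor:approx} to the \topsc instance solved by \nc. Since each of the $\eta_p$ clusters in $\inst_p$ contributes at most one representative, $|\widehat{\mathcal{S}}| \leq \eta_p$, so
\begin{align*}
U(\widehat{\mathcal{Q}}) \ \geq\ \frac{k}{|\widehat{\mathcal{S}}|}\cdot U(\widehat{\mathcal{S}}) \ \geq\ \frac{k}{\eta_p}\cdot U(\widehat{\mathcal{S}}).
\end{align*}
It then suffices to show $U(\widehat{\mathcal{S}}) \geq U(\mathcal{Q}^*)$, where $\mathcal{Q}^* \subseteq \mathcal{S}$ is the \tops optimum. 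In the binary setting this reduces to the set inclusion $\bigcup_{s \in \mathcal{Q}^*}\tc(s) \subseteq \bigcup_{r \in \widehat{\mathcal{S}}} \widehat{\tc}(r)$. Fix any $T$ with $\dr(T,s) \leq \tau$ for some $s \in \mathcal{Q}^*$; let $g_s$ be $s$'s cluster with representative $r$, and consider $\widehat{\dr}(T,r) = \dr(T,c_j) + \dr(c_j,c_s) + \dr(c_s,r)$ for a cluster $g_j$ on $T$ chosen to witness the cover. Using the cluster-diameter bound $2R_p$ (which controls the representative-to-center leg, and via the neighborhood definition $4R_p\epsinc$ the center-to-center leg) together with the operating regime $\tau \in [4R_p,\,4R_p\epsinc)$, a triangle-inequality computation yields $\widehat{\dr}(T,r) \leq \tau$, so $T \in \widehat{\tc}(r)$.

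For the general preference function I piggyback on the binary step using the non-increasing property of $f$: for every $\mathcal{Q}$ and $T_j$, the per-trajectory contribution is either $0$ or lies in $[f(\tau),1]$, and hence
\begin{align*}
f(\tau)\cdot U^{\text{bin}}(\mathcal{Q}) \ \leq\ U^{\text{gen}}(\mathcal{Q}) \ \leq\ U^{\text{bin}}(\mathcal{Q}).
\end{align*}
Chaining this with the binary bound applied to $\widehat{\mathcal{Q}}$, and using the optimality of $\mathcal{Q}^*_{\text{bin}}$ for $U^{\text{bin}}$ (which gives $U^{\text{bin}}(\mathcal{Q}^*_{\text{bin}}) \geq U^{\text{bin}}(\mathcal{Q}^*_{\text{gen}}) \geq U^{\text{gen}}(\mathcal{Q}^*_{\text{gen}})$), yields
\begin{align*}
U^{\text{gen}}(\widehat{\mathcal{Q}}) \ \geq\ f(\tau)\cdot U^{\text{bin}}(\widehat{\mathcal{Q}}) \ \geq\ \frac{f(\tau)\cdot k}{\eta_p}\cdot U^{\text{gen}}(\mathcal{Q}^*_{\text{gen}}).
\end{align*}

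The principal obstacle is the triangle-inequality step in the binary case: $T$ need not itself pass through $g_s$, since the detour witnessing $\dr(T,s) \leq \tau$ may be realized at arbitrary $v_k,v_l \in T$ that are far from $c_s$. Choosing the cluster-center $c_j$ on $T$ carefully and accounting for all three pieces of slack available in the operating regime is where I expect the bulk of the argument to lie; once the set inclusion is in hand, both the binary chain and the general-case lifting are routine bookkeeping.
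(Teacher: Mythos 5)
Your skeleton (apply Cor.~\ref{cor:approx}, then compare $U(\widehat{\mathcal{S}})$ against the optimum) matches the paper's, but the step you defer as "the principal obstacle" is precisely the crux, and the route you sketch for it does not go through. You want: if $\dr(T,s)\le\tau$ for an optimal site $s$ lying in cluster $g_s$ with representative $r$, then $T\in\widehat{\tc}(r)$. This fails already with \emph{exact} distances: all that clustering gives is $\dr(s,r)\le\dr(s,c_s)+\dr(c_s,r)\le 4R_p$, so $\dr(T,r)\le\dr(T,s)+\dr(s,r)$ can be as large as $\tau+4R_p>\tau$, i.e., $r$ may genuinely not cover $T$ at all, and since $\widehat{\tc}(r)\subseteq\tc(r)$ (the paper's own observation, cf.\ $T_3\notin\widehat{\tc}(r_k)$ in Fig.~\ref{fig:clusters}) the approximate cover certainly does not contain $T$. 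The situation is worse for the estimate you write down: $\dr(T,s)\le\tau$ constrains only the detour, not the center-to-center distance $\dr(c_j,c_s)$, which can exceed $4R_p\epsinc$, in which case $g_j\notin\cl(g_s)$ and $T$ is excluded from $\widehat{\tc}(r)$ by construction, no matter how cleverly $c_j$ is chosen. So the per-optimal-site correspondence $\bigcup_{s\in\mathcal{Q}^*}\tc(s)\subseteq\bigcup_r\widehat{\tc}(r)$ cannot be established the way you propose, and your general-case chain inherits the gap (it also quietly applies the binary-case guarantee to $\widehat{\mathcal{Q}}$ produced by greedy on the \emph{general} utilities, which is not what that guarantee covers).

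The missing idea, and what the paper does instead, is to abandon the optimal sites entirely and route every trajectory through \emph{its own} cluster: assuming every cluster has a representative (the paper assumes all nodes are candidate sites), a trajectory $T_j$ passing through $g_i$ satisfies $\widehat{\dr}(T_j,r_i)\le\dr(T_j,c_i)+\dr(c_i,r_i)\le 4R_p\le\tau$, so \emph{all} $m$ trajectories are covered by $\widehat{\mathcal{S}}$. Then no inclusion against the optimum is needed: in the binary case $U(\widehat{\mathcal{S}})=m$ while the optimum is trivially at most $m$; in the general case every $T_j$ gets $U_j\ge f(\tau)$, so $U(\widehat{\mathcal{S}})\ge f(\tau)\,m\ge f(\tau)\,U(\mathcal{Q}^*)$; Cor.~\ref{cor:approx} then yields both bounds directly, with no need for your binary-to-general transfer. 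As written, your proposal leaves the decisive step open and points it in a direction that cannot be completed.
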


\begin{proof}
	Assuming all nodes are candidate sites, we observe that each trajectory $T_j \in \traj$ is covered by the cluster representative $r_i$ of a cluster $g_i$ that it passes through. This is because the maximum round-trip distance between $T_j$ and $r_i$ is at most $4R_p$ and $\tau$ is at least
	$4R_p$.
 
	This ensures that $U(\widehat{\mathcal{S}})=\sum_{T_j \in \traj} U_j = |\traj|=m$ because  $\forall T_j \in \traj,\ U_j=1$.
	
	 Since
	the maximum utility of the optimal algorithm for \tops can be at most $m$,
	following the result in Cor.~\ref{cor:approx}, the approximation bound is at
	least $k / \eta_p$ where $\eta_p=|\widehat{\s}|$. 

	Next, consider a general preference function $\pref(T_j,s_i)=f(\cdot)$ where
	$f$ is a positive non-increasing function of $\dr(T_j,s_i)$ such that
	$f(0)=1$.  If $T_j$ passes through the cluster $g_i$, then $\dr(T_j,r_i)
	\leq \tau$. Hence, $U_j = \max\{\pref(T_j,r_i)| r_i \in \widehat{\mathcal{S}}\}
	\geq f(\tau)$ since $f$ is non-increasing.  Therefore,
	$U(\widehat{\mathcal{S}})=\sum_{T_j \in \traj} U_j \geq f(\tau).m$.  Since the
	preference scores lie in the range $[0,1]$, the utility offered by any
	optimal algorithm for \tops is at most $m$.  Therefore, following
	Cor.~\ref{cor:approx}, the approximation bound is at least $f(\tau). (k /
	\eta_p)$.
	\hfill{}
\end{proof}

To solve the binary instance of \tops problem, the FM sketches may be employed
while running the \incg algorithm on the cluster representatives. The resulting
scheme is referred to as \fmnc.  In that case, the bound is updated as follows.

\begin{thm}
	The approximation bound of \fmnc for the binary instance of \tops is $(k /
	\eta_p).(1 + \epsilon)^k$, where $\epsilon$ is the error parameter provided
	by the FM sketch.
\end{thm}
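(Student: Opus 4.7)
My plan is to decompose the approximation error of \fmnc into two multiplicative sources and compose them: (i)~the loss from restricting greedy selection to the clustered candidate set $\widehat{\mathcal{S}}$, already bounded by $k/\eta_p$ in Theorem~\ref{thm:topsbound}; and (ii)~the additional loss, incurred over $k$ iterations, from using FM sketches to estimate marginal utilities. Since the FM sketch of \cite{fm} provides multiplicative $(1+\epsilon)$ accuracy both on cardinalities of single sets and of unions (obtained via bitwise OR), each per-iteration comparison of marginal gains is perturbed by at most a factor of $(1+\epsilon)$, and $k$ iterations then yield the claimed $(1+\epsilon)^{k}$ term.

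\textbf{Key steps.} First, I would observe that for the binary instance of \tops restricted to the clustered candidate space, the marginal utility of a representative $r_i$ at iteration $\theta$ equals the number of trajectories in $\widehat{\tc}(r_i)$ not yet covered by $\mathcal{Q}_{\theta-1}$; this is precisely what bitwise-OR on FM sketches estimates. Second, I would show that the representative $r_\theta$ picked by \fmnc has a true marginal gain at least $1/(1+\epsilon)$ times the maximum true marginal gain among unchosen candidates: any rejected candidate has smaller \emph{estimated} gain than $r_\theta$, and estimates lie within $(1+\epsilon)$ of the truth, so the \emph{true} gains can differ by at most this factor. Third, I would replay the induction from Lemma~\ref{lem:inc_topso3} on $\widehat{\mathcal{S}}$, threading one factor of $(1+\epsilon)$ per iteration through the submodular recurrence (which is legitimate by Theorem~\ref{thm:submodular}) to get
\[ U(\widehat{\mathcal{Q}}) \;\geq\; \frac{k}{\eta_p}\cdot(1+\epsilon)^{-k}\cdot U(\widehat{\mathcal{S}}). \]
Finally, as in the proof of Theorem~\ref{thm:topsbound}, since every trajectory is already covered by some representative whenever $\tau \geq 4R_p$, we have $U(\widehat{\mathcal{S}}) = m \geq U(\opt)$, which yields the stated bound (i.e., $U(\opt)/U(\widehat{\mathcal{Q}}) \le (\eta_p/k)\,(1+\epsilon)^k$).

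\textbf{Main obstacle.} The technical crux is making the per-iteration FM-sketch error compound only \emph{once} per iteration rather than squared. A naive bookkeeping observes that the estimated marginal gain $\bigl|\widehat{\mathcal{Q}}_{\theta-1}\cup\{r_i\}\bigr| - \bigl|\widehat{\mathcal{Q}}_{\theta-1}\bigr|$ is the difference of two independently sketched quantities, each with $(1+\epsilon)$ error, which can inflate the comparison gap to $(1+\epsilon)^{2}$. To recover the stated $(1+\epsilon)^k$ one must exploit that, at a fixed iteration, the baseline sketch of $\mathcal{Q}_{\theta-1}$ is \emph{shared} by all candidates and enters symmetrically in every pairwise comparison, so only the union-side sketch contributes a single adversarial factor of $(1+\epsilon)$. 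Formalizing this cancellation carefully, and then folding it cleanly into the inductive step of Lemma~\ref{lem:inc_topso3} without disturbing the ``ratio $(n-\theta-1)/(n-\theta)$'' telescoping, is the step I expect to be the hardest.
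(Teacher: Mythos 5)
Your proposal follows essentially the same route as the paper: it combines the clustering bound $k/\eta_p$ from Theorem~\ref{thm:topsbound} with a per-iteration multiplicative FM-sketch error that compounds to $(1+\epsilon)^k$ over the $k$ greedy iterations. The paper's own proof is exactly this two-line composition, so your more detailed treatment (including the per-iteration error bookkeeping you flag as the main obstacle, which the paper does not address at all) is, if anything, more careful than the published argument.
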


\begin{proof}
	If the error parameter for FM sketch is $\epsilon$, running it for $k$
	iterations produces an error bound of at most $(1 + \epsilon)^k$.  In
	conjunction with Th.~\ref{thm:topsbound}, the required error bound
	is obtained.
	\hfill{}
\end{proof}


\noindent
\textbf{Complexity Analysis:} 
For a given value of $\tau$, suppose the index instance $\inst_p$ with
$|\inst_p| = \eta_p$ clusters is used. Assume that the largest number of
trajectories passing through a cluster is $\xi_p = \max\{|\tl(g_i)|\}$, and
$\lambda_p$ is the largest number of vertices in any cluster in $\inst_p$.

\begin{thm}
	\label{thm:complexity}
	The time and space complexities of \nc are $O(k.\eta_p.\xi_p)$ and
	$O(\sum_{p=1}^t (\eta_p(\xi_p + \lambda_p)))$ respectively.
\end{thm}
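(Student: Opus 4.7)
The plan is to mirror the complexity analysis of \incg (Theorem~\ref{thm:complexity of incg}), but carry it out on the clustered view exposed by the selected index instance $\inst_p$. At query time, \nc first identifies the instance $\inst_p$ corresponding to the input $\tau$, then constructs the approximate covering sets $\widehat{\tc}(r_i)$ for each of the $\eta_p$ cluster representatives using Eq.~\eqref{eq:approxcover}, and finally invokes \incg on this reduced instance. The complexities should therefore be obtained by (i) bounding the cost of building the $\widehat{\tc}(r_i)$'s, (ii) bounding the per-iteration cost of \incg restricted to the $\eta_p$ cluster representatives, and (iii) summing the offline storage across all $t$ instances.

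For the time bound, I would first argue that $\widehat{\tc}(r_i)$ can be assembled by walking only through $\tl(g_j)$ for each neighbor $g_j\in\cl(g_i)$ (plus $g_i$ itself), since Sec.~\ref{sec:topsc problem} shows that any trajectory outside this neighborhood cannot satisfy $\widehat{\dr}(T_j,r_i)\le\tau$. Reusing the accounting from Theorem~\ref{thm:complexity of incg} with $m$ replaced by the effective number of trajectories per representative (which is bounded by $\xi_p$ after collapsing repeated cluster occurrences in each trajectory) and $n$ replaced by $\eta_p$, each of the $k$ iterations performs an $O(\eta_p)$ scan to pick the representative with maximum marginal gain and an $O(\xi_p)$ sweep over the affected trajectories to propagate utility updates. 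Multiplying gives the per-iteration cost $O(\eta_p \xi_p)$ and hence the stated $O(k\,\eta_p\,\xi_p)$ total.

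For the space bound, I would simply account for the five pieces of information stored per cluster in Sec.~\ref{sec:information}. The dominating contributors are the trajectory list $\tl(g_i)$ of size at most $\xi_p$ and the vertex list of size at most $\lambda_p$; the center, the representative, and the neighbor list are all of strictly smaller order. Summing over the $\eta_p$ clusters of instance $\inst_p$ yields $O(\eta_p(\xi_p+\lambda_p))$, and summing again over the $t$ resolutions produces the claimed $O\bigl(\sum_{p=1}^{t}\eta_p(\xi_p+\lambda_p)\bigr)$.

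The main obstacle is the time bound: a loose reading of $\widehat{\tc}(r_i)$ suggests its size could be as large as $\xi_p\cdot|\cl(g_i)|$, which would blow up the per-iteration cost. The key insight I would deploy to close this gap is the compressed trajectory representation from Sec.~\ref{sec:information} together with a charging argument showing that across the $\eta_p$ representatives the total update work in any iteration is still bounded by $O(\eta_p\xi_p)$, because each of the $\xi_p$ trajectories touched by the chosen representative in a given iteration can affect only $O(\eta_p)$ other representatives in its \stc set. Once that amortization is in place, both bounds fall out from the analogue of Theorem~\ref{thm:complexity of incg} applied on the clustered space.
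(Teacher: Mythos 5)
Your overall route is the paper's route: build the approximate covers $\widehat{\tc}(r_i)$ from the neighbors' trajectory lists, rerun the \incg accounting with $n$ replaced by $\eta_p$, and obtain the space bound by summing the per-cluster storage over the $\eta_p$ clusters and then over the $t$ instances. The space half of your argument is essentially identical to the paper's and is fine. The gap is in the time bound, and you have in fact put your finger on it yourself: $|\widehat{\tc}(r_i)|$ can be as large as $\xi_p\cdot(1+|\cl(g_i)|)$, since $\tc(g_i)=\tl(g_i)\cup\{T_j\in\tl(g_j)\mid g_j\in\cl(g_i)\}$. Your two attempts to close this do not work. First, the compressed representation of Sec.~\ref{sec:information} collapses \emph{repeated cluster occurrences within a single trajectory}; it does not reduce the number of \emph{distinct} trajectories in the union over a cluster and its neighbors, so it does not bound the effective cover size by $\xi_p$. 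Second, the charging argument is circular: the per-iteration update work is $O(|\widehat{\tc}(s_\theta)|\cdot\eta_p)$ (scan the affected trajectories, and for each affected trajectory scan its \stc list of size at most $\eta_p$), so saying that ``each of the $\xi_p$ trajectories touched'' affects only $O(\eta_p)$ representatives already presupposes $|\widehat{\tc}(s_\theta)|=O(\xi_p)$, which is exactly the quantity in question. The same issue appears in your cover-construction step, which as you describe it costs $O(\xi_p\,|\cl(g_i)|)$ per representative, not $O(\xi_p)$.

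The paper closes this gap not by amortization but by an explicit modeling assumption: the number of neighboring clusters of any cluster is $O(1)$, i.e., $|\cl(g_i)|=O(1)$, justified empirically by the mean neighbor counts in Table~\ref{tab:cluster}. Under that assumption $|\widehat{\tc}(r_i)|=O(\xi_p)$, the covers are built in $O(\eta_p\,\xi_p)$ total time, each \incg iteration costs $O(\eta_p\,\xi_p)$, and the per-cluster storage (including the neighbor list) is $O(\xi_p+\lambda_p)$, giving both stated bounds. To make your proof correct you must either state and use this bounded-neighborhood assumption, or redefine $\xi_p$ as $\max_i|\widehat{\tc}(r_i)|$ --- but the theorem as stated defines $\xi_p=\max_i|\tl(g_i)|$, so without the assumption your time bound is not established.
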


The proof is provided in Appendix~\ref{app:netclus-complexity}.

The average values of $\eta_p,\lambda_p$ and $|\tl|$ are shown in
Table~\ref{tab:cluster},  for different values
of cluster radius $R_p$.

\section{Handling Dynamic Updates}
\label{sec:updates}

In this section, we discuss how the \nc framework efficiently handles dynamic
updates of trajectories and candidate sites. We assume that the underlying road
network does not change.  In each of the following cases, the updates are
processed for \emph{all} the index instances (of varying cluster radii).

\noindent
\textbf{Addition of a site:} Suppose a location $s_{add}$ is identified as a new
candidate site, i.e., $s_{add}$ gets added to $\mathcal{S}$. If $s_{add}$ is
already in $V$, its cluster $g_{add}$ is identified.  Otherwise, it is added to
the cluster $g_{add}$ whose cluster center $c_{add}$ is the closest.  To
determine the closest cluster center, the neighbors $N(s_{add})$ of $s_{add}$ in
$G$ are used.  The round-trip distance to a cluster center $c_i$ is estimated
using $\min_{s_l \in N(s_{add})} \{ \dr(s_{add}, s_l) + \dr(s_l, c_i)\}$ if
$\dr(s_l, c_i)$ is available.  Suppose $c_{near}$ is the nearest cluster
center to $s_{add}$. If the distance $\dr(s_{ad},g_{nearest})>2R_p$, then we
create a new cluster $g_{add}$ with $s_{add}$ as its center.  If the identified
cluster $g_{add}$ does not have a cluster-representative, then $s_{add}$ is
marked as its new cluster representative.  Else, it is determined if $s_{add}$
can be a better representative for $g_{add}$ as discussed in
Sec.~\ref{sec:representative}.  Finally, the exact round-trip distance to the
cluster center, $\dr(s_{add}, c_{add})$, is computed.

\noindent
\textbf{Deletion of a site:} Suppose a particular site $s_{del}$ is no longer
viable for a service and, therefore, needs to be deleted from $S$.  Suppose,
$s_{del}$ lies in the cluster $g_{del}$.  First, it is untagged as a candidate
site in $g_{del}$.
If $s_{del}$ is not the cluster representative of $g_{del}$, nothing more needs
to be done.  Otherwise, another candidate site, if available, is chosen as the
new cluster representative using the methodology described in
Sec.~\ref{sec:representative}.

\noindent
\textbf{Addition of a trajectory:} Suppose a new trajectory $T_{add}$ is added.
It is first mapped into a sequence of clusters, $g_1, \dots, g_l$.  For each
such cluster $g_i$, $T_{add}$ is added to the set $\tl(g_i)$ and its round-trip
distance to the cluster center $c_i$ of $g_i$, $\dr(T,c_i)$, is computed and
stored.  In addition, $g_i$ is added to the set $\cc(T_{add})$.  The procedure
is essentially the same one discussed in Sec.~\ref{sec:online}.
 
\noindent
\textbf{Deletion of a trajectory:} Suppose a trajectory $T_{del}$ is deleted.
Assume that the coverage set of $T_{del}$ is $\cc(T_{del}) = \{g_1,\dots,g_l\}$.
For each such cluster $g_i$, $T_{del}$ is removed from its coverage set
$\tl(g_i)$.   Finally, the set $\cc(T_{del})$ is deleted.

While multiple updates can be applied one after another, batch processing is
more efficient.  Sec.~\ref{sec:updateexp} shows that the updates are handled
quite efficiently.

\section{Extensions and Variants of \tops}
\label{sec:variants}

In this section, we present a few extensions and variants within the \tops
framework and discuss how the \incg algorithm for \tops can be adapted to solve
these problems.  As \nc essentially runs \incg on the cluster representatives,
it can also be adapted in a similar manner.

\subsection{Cost Constrained \tops}
\label{sec:cost}

In this problem, referred to as \topscost, each site $s_i \in \mathcal{S}$ has
a cost $cost(s_i)$ associated with it, and the goal is to select a set of sites
within a fixed budget $B$ such that the sum of trajectory utilities is
maximized. Formally, the problem is stated as follows. 

\begin{prob}[\topscost]
	Given a set of trajectories $\mathcal{T}$, a set of candidate sites
	$\mathcal{S}$ where each site $s_i \in \mathcal{S}$ has a fixed cost
	$cost(s_i)$, \topscost problem with query parameters $(B,\tau, \pref)$
	seeks to report a set $\mathcal{Q} \subseteq \mathcal{S}$, that maximize
	the sum of trajectory utilities, i.e., $\max_{\q \subseteq \s}
	U(\mathcal{Q}) = \sum_{T_j \in \traj} U_j$ such that $U_j = \max_{s_i \in
	\q}\{\pref(T_j,s_i)\}$,  $cost(\mathcal{Q})=\sum_{s_i \in \mathcal{Q}}
	cost(s_i) \le B$.
\end{prob}

The \tops problem reduces to \topscost by assigning unit cost to each site and
$B=k$.  In contrast to \tops, \topscost does not restrict the number of sites
selected in the answer set.  Since \tops is NP-hard, so is \topscost.

The \incg algorithm can be adapted based on the greedy heuristic for the
budgeted maximum coverage problem \cite{khuller1999budgeted} to solve
\topscost.  The algorithm starts with an empty set of sites $\mathcal{Q} =
\varnothing$ and proceeds in iterations. In each iteration, it selects a site
$s_i \in \mathcal{S} - \mathcal{Q}$ such that $\left(U(\mathcal{Q} \cup
\{s_i\}) - U(\mathcal{Q})\right)/cost(s_i)$ is \emph{maximal}. If $cost(s_i)$
is within the remaining budget $B - cost(\mathcal{Q})$, it is added to
$\mathcal{Q}$; otherwise, it is pruned from $\mathcal{S}$.  This process
continues until $\mathcal{S} = \varnothing$.

It was shown in \cite{khuller1999budgeted} that the above approach can perform
arbitrarily bad. Thus, in order to bound the approximation guarantee, the
algorithm is augmented with the following step. Assume $s_{max}$ to be a
candidate site such that $cost(s_{max}) \leq B$ and $U(\{s_{max}\})$ is
maximal.  The algorithm returns either the site $s_{max}$ or the set
$\mathcal{Q}$ whichever offers the maximum utility.  Following the analysis in
\cite{khuller1999budgeted}, this scheme is guaranteed to produce a solution
with an approximation bound of $(1-1/e)/2$.

\subsection{Capacity Constrained \tops}
\label{sec:capacity}

In this problem, referred to as \topscap, each site $s_i \in \mathcal{S}$ has a
fixed capacity $cap(s_i)$ that denotes the maximum number of trajectories it
can serve. The goal is to select a set $\q \subseteq \s$ of size $k$ such that
the sum of trajectory utilities is maximized.  Formally, the problem is stated
as follows. 

\begin{prob}[\topscap]
	Consider a set of trajectories $\mathcal{T}$, a set of candidate sites
	$\mathcal{S}$ where each site $s_i \in \mathcal{S}$ can serve at most
	$cap(s_i)$ trajectories. For any set $\q \subseteq \s$, let $x_{ji}$ be a
	boolean indicator variable such that $x_{ji}=1$ if and only if the
	trajectory $T_j$ can be served by the site $s_i \in \q$.  \topscap problem
	with query parameters $(k,\tau,\pref)$ seeks to report a set $\mathcal{Q}
	\subseteq \mathcal{S}, |\mathcal{Q}|=k$, that maximizes the sum of
	trajectory utilities, i.e., $\max_{\q \subseteq \s} U(\q) =  \sum_{T_j \in
	\traj} U_j$ such that $U_j = \max_{s_i \in \q} (\pref(T_j,s_i) . x_{ji})$,
	and $\forall s_i \in \q, \ \sum_{T_j \in \traj} x_{ji} \leq cap(s_i)$. 
\end{prob}

\tops reduces to \topscap by assigning the capacity of each site to be infinite
or more than the total number of trajectories in the dataset. Hence, \topscap
is also NP-hard.

\incg can be adapted to solve \topscap in the following manner. The algorithm
starts with an empty set $\q_0=\varnothing$ and sets $\forall T_j \in \traj,
U_j=0$. Suppose the set of sites selected after iteration $\theta=1,\dots,k$, is
denoted by $\q_\theta$.  In each iteration $\theta$, it augments $\q_{\theta-1}$
by selecting a site $s_\theta \in \s - \q_{\theta-1}$ that offers the maximal
marginal gain in utility.

It then updates the trajectory utilities $U_j$.  The utility of $T_j$ due to
$\q_{\theta}$ is $U_j(\q_\theta) = \max_{s_i \in \q_{\theta}}\pref(T_j,s_i)$.
The marginal gain in the utility of $T_j$ due to addition of $s_\theta$ is
$U_j(s_\theta) = U_j(\q_{\theta-1}\cup \{s_\theta\})-U_j(\q_{\theta-1})$.
Since any site $s_i \in \s -\q_{\theta-1}$ can serve at most
$\alpha_i=\min\{|TC(s_i)|, cap(s_i)\}$ trajectories, its marginal utility is
defined as the sum of the largest $\alpha_i$ trajectory marginal utilities
$U_j(s_i)$.

Since the objective function of the \topscap problem is identical to that of
\tops, it follows the non-decreasing sub-modular property.  Thus, \incg offers
the same approximation bound of $1-1/e$, as stated in
Lem.~\ref{lem:inc_topso2}.

\subsection{\tops with Existing Services}
\label{sec:existing}

Optimal location queries usually factor in existing service locations before
identifying new ones.  The problem is NP-hard.  The \incg algorithm can take
into account the existing service locations as follows.

Suppose $\mathcal{E_S}$ is the set of existing service locations.  On receiving
the query parameter $\tau$, the covering sets \tc and \stc over the set of
sites  $\mathcal{S} \cup \mathcal{E_S}$ are computed. \incg starts with
$\mathcal{Q}_0 = \mathcal{E_S}$ and updates the marginal utilities of the sites
in $\mathcal{S}$.  The remaining algorithm stays unaltered.  The algorithm
terminates after selecting $k$ sites from the set $\mathcal{S}$, in the same
manner as \tops. 

An advantage and important feature of \incg is that the site chosen in a given
iteration depends solely on what the existing service locations are, and not on
how they were chosen.

Since the initial utility  $U(\q_0)=U(\mathcal{E_S}) \neq 0$, the approximation
bound of $1-1/e$ is not directly applicable. However, we next show that the
same bound holds.

For any set $\q \subseteq \s$, the extra utility, defined as $U^\prime(\q) =
U(\q)-U(\mathcal{E_S})$ where $U(\mathcal{E_S})$ is the utility offered by the
existing services, is non-negative.  Now we have $U^\prime(\q)$ as a
non-decreasing sub-modular function with $U^\prime(\q_0) = 0$. Let $\opt$ and
\q denote the set of sites returned by an optimal algorithm and \incg
respectively. Then, following Lem.~\ref{lem:inc_topso2}, $U^\prime(\q) \geq
(1-1/e) U^\prime(\opt)$.  This leads to $U(\q) \geq (1-1/e) U(\opt) +
U(\mathcal{E_S})/e$. Since $U(\mathcal{E_S}) \geq 0$, hence, $U(\q) \geq
(1-1/e) U(\opt)$.

\subsection{Other \tops Variants}
\label{sec:other}

There are certain variants of trajectory-aware service location problems that
already exist in the literature. The proposed \tops formulation generalizes
these variants and, thus, the \incg algorithm can be used to solve them.  We
next discuss some of the important variations.

\noindent
\textbf{\topsone: Binary world}: This is the simple binary instance of
\tops query that is defined in Def.~\ref{def:binary}. 
The problem is still NP-hard \cite{berman1995locatingMain}. \incg offers the
same approximation bound for \topsone, as in the case of \tops
(Th.~\ref{thm:inc_topso}).  

\noindent
\textbf{\topstwo: Maximize market size:} Instead of operating in a
binary world where a trajectory is either covered or not covered, in this
formulation, the aim is to maximize the probability of a trajectory being
covered \cite{berman1995locatingMain}. The probability $p(T_j,s_i)$ of $T_j$
being covered by site $s_i$ is modeled as a convex function of the distance
$\dr(T_j,s_i)$ between them.  This problem aims to set up $k$ services that
maximizes the expected number of total trajectories served.  This is a special
case of our proposed \tops formulation with $\pref(T_j,s_i) = p(T_j,s_i)$ if
$\dr(T_j,s_i) \leq \tau$ and $0$ otherwise.  It is again NP-hard
\cite{berman1995locatingMain}. \incg offers the same approximation bound for
\topstwo, as in the case of \tops (Th.~\ref{thm:inc_topso}).

\noindent
\textbf{\topsthree: Minimize user inconvenience:} Assuming that each user on
its trajectory would necessarily avail a service, this problem aims to minimize
the expected deviation incurred by a user \cite{berman1995locatingMain}. This
can be handled through the \tops framework by setting the preference score to
$\pref(T_j,s_i) = - \dr(T_j,s_i)$, and $\tau$ to $\infty$.  Since the
trajectory utility is defined to be the maximum of the preference scores of the
selected sites, maximizing the sum of trajectory utilities would minimize the
total user deviation. \topsthree is NP-hard owing to reduction from the
$k$-medians problem \cite{berman1995locatingMain}. The approximation bound of
\incg for this problem is not yet known.


\noindent
\textbf{\topsfour: Best service locations under fixed market share:}
The aim here is to place the minimum number of services that can capture a fixed
share of the market comprising of the user trajectories
\cite{berman1992optimal}. The problem is complementary to \topsone and asks the
following: What is the smallest set of sites $\mathcal{Q} \subseteq \mathcal{S}$
such that at least $\beta$ fraction of $|\mathcal{T}|$ is covered, where $0 <
\beta \leq 1$?  This problem is NP-hard \cite{berman1992optimal}.  Since \incg
algorithm is iterative, it selects as many sites that are necessary to cover the
desired fraction of users.
Note that the set cover problem directly reduces to this problem, and so does
the greedy heuristic for the set cover problem to \incg. Hence, \incg algorithm
offers the same approximation bound of $1 + \ln n$ for \topsfour.

\subsection{Generic Framework}

From the above discussions, it is clear that the \tops framework is highly
generic and can absorb many extensions and variations with little or no
modification in the prposed algorithms \incg and \nc.  Also, importantly, the
framework also enables \emph{combining} multiple extensions and variants. For
example, \topscost and \topscap extensions can be merged to create a new version
of \tops.  In Sec.~\ref{sec:results-tops-variants}, we discuss experimental
evaluation of some of the above extensions and variants.

\section{Experimental Evaluation}
\label{sec:exp}

\begin{table}[t]\small
\centering
\begin{tabular}{ccrr}
\hline
\bf Dataset & \bf Type & \bf \#Trajectories & \bf \#Sites \\
\hline
\bs & Real & 1,000 & 50 \\
\bl & Real & 123,179 & 269,686 \\
\hline
Bangalore & Synthetic & 9,950 & 61,563 \\
New York & Synthetic & 9,950 & 355,930 \\
Atlanta & Synthetic & 9,950 & 389,680 \\
\hline
\end{tabular}
\tabcaption{Summary of datasets.}
\vspace{-0.05in}
\label{tab:datasets}
\moveup
\end{table}

In this section, we perform extensive experiments to establish
(1)~\emph{Efficiency:} that \nc is efficient, practical and scales well to
real-life datasets, and (2)~\emph{Quality:} that \nc produces solutions that are
close to those of \incg, which serves as the \emph{baseline} technique.

Most of the results are shown for the binary version of the problem since it is
easier to comprehend.  Sec.~\ref{sec:results-tops-variants} shows the results
for various extensions and variants of \tops.

The experiments were conducted using Java (version 1.7.0) platform on an
Intel(R) Core i7-4770 CPU @3.40GHz machine with 32 GB RAM running Ubuntu 14.04.2
LTS OS.

\subsection{Evaluation Methodology} 
\label{sec:methodology}

\noindent
\textbf{Algorithms:} We evaluate the performance of three different algorithms to
address \tops: \ipopt, \incg, and \nc, which are henceforth referred to
as \opt,  \inc, and \nc respectively in text and in the figures.  The
variants of \inc and \nc, based on the FM sketches, are henceforth referred to
as \incfm and \ncfm respectively.  

To the best of our knowledge, there is no existing algorithm for \tops that
works on real trajectories on city-scale road networks.  The state-of-the-art
algorithm for \topsone problem, proposed in \cite{berman1995locatingMain}, when
adapted for our case, can be reduced to \incg. Hence, \incg acts as the baseline
algorithm for evaluation on \topsone.  

\noindent
\textbf{Variants:} The specific variant of \tops on which we evaluated most of
the experiments was the binary instance defined in Def.~\ref{def:binary} or
\topsone.  We choose to evaluate this particular variant of \tops due to three
reasons: (1)~Usually, such binary optimization problems are the worst case
instances of the general integer optimization problems and are, therefore,
hardest to approximate.  (2)~The site preference function is simple. (3)~It has
several applications in transportation science and operations research.
However, we also evaluate a few other \tops variants and extensions in
Sec.~\ref{sec:results-tops-variants}.
   
\noindent
\textbf{Metrics of evaluation:} The main metrics of evaluation were (a)~total
\emph{utility} measured as a percentage of the total number of trajectories $m$,
and (b)~query \emph{running time}.  The two basic parameters studied were
(i)~number of service locations $k$, and (ii)~coverage threshold $\tau$,
whose default values were $5$ and $0.8$\,Km. respectively.

We conducted experiments on both real and synthetic datasets, whose details are
shown in Table~\ref{tab:datasets}.  For simplicity, we assume that the number of
candidate sites is the same as the number of nodes in the graph, unless
otherwise stated.

\noindent
\textbf{Real datasets:} We used GPS traces of taxis from Beijing consisting
of user trajectories generated by tracking taxis for a week \cite{cab1,cab2}.
This is the most widely used and one of the largest publicly available
trajectory datasets.  To generate trajectories as sequences of road
intersections, the raw GPS-traces were map-matched~\cite{map1} to the Beijing
road network extracted from OpenStreetMap (\url{http://www.openstreetmap.org/}).
The road network contains 269,686 nodes and 293,142 edges, with an underlying
area of $\sim$1720 sq.\,Km.


Since \tops is NP-hard, the optimal algorithm requires exponential time and,
therefore, can be run only on a very small dataset. Hence, we evaluate all the
algorithms against the optimal on \bs which is generated by randomly sampling
1000 trajectories from a fixed area, and then randomly selecting the set
$\mathcal{S}$, consisting of 50 candidate sites, from the same area.  The
sampling was conducted 10 times to increase the robustness of the results.  All
the other experiments were done on the full \bl dataset. 
 
\noindent
\textbf{Synthetic datasets:} To study the impact of city geographies, we
generated three synthetic datasets that emulate trajectories in the patterns
followed in New York, Atlanta and Bangalore.  We used an online traffic generator
tool, MNTG (\url{http://mntg.cs.umn.edu/tg/index.php}) to generate the traffic
traces, that were later map-matched to generate the trajectories in the desired
format.

\subsection{Choice of Parameters}
\label{sec:parameters}

We first run experiments to determine the choice of two important parameters:
(a)~the resolution of the index instances, $\eps$, and (b)~the number of FM
bit vectors, $f$.

As discussed earlier in Sec.~\ref{sec:index}, the choice of \eps affects both
the storage and offline run-time costs as well as quality.  Table~\ref{tab:eps}
lists the values for the \bl dataset when \eps is changed.  The error is
measured as  relative loss in utility of \nc  with that of \inc.  When \eps is too
small, there is almost no compression of the trajectories.  As a result, the
index structure size is large.  On the other hand, with a very large \eps, the
error may be unacceptable.  We fix $\eps=0.75$ for our experiments since it
offers a nice balance of a medium sized index structure that can fit in most
modern systems with an error of within 5\%.

\begin{table}[t]
\scriptsize
	\centering
		\begin{tabular}{c|c|c|c}
			\hline
			\eps	& Time (s)	& Space (GB)	& Rel. Error \% w.r.t. \inc  \\
			\hline
			0.25	& 108427	& 14.095	& 3.54 \\
			0.50	& 3216		& 4.215		& 3.97 \\
			0.75	& 1652		& 2.374		& 4.53 \\
			1.00	& 520		& 1.053		& 5.21 \\
			\hline 
		\end{tabular}
		\tabcaption{Variation across resolution of index instances, \eps.}
		\label{tab:eps}
\end{table}

%
Table~\ref{tab:f} shows how the utility and running time varies when $f$ number
of FM sketches are used as compared to the original \nc.  The error is measured as a relative loss in utility of \ncfm with that of \nc.
The values of $\tau$
and $k$ were their default values ($0.8$\,Km. and $5$ respectively).  When
$f$ is very small, the error is too large.  As $f$ increases, expectedly the
error decreases while the speed-up decreases as well.  When $f$ is extremely
high, the number of operations may overshadow the gains and using FM sketches
may be actually slower.  We fixed $f = 30$ since it produced less than $5\%$
error with a speed-up factor of more than $5$.

\begin{table}[t]
\scriptsize
	\centering
	\begin{tabular}{r|rr|r|rr|r}
	\hline
	\multirow{2}{*}{$f$}	& \multicolumn{2}{c|}{Utility}	&
	\multirow{2}{*}{Rel. Error \%}	& \multicolumn{2}{c|}{Time (ms)}	& \multirow{2}{*}{Speed-up} \\
	\cline{2-3}
	\cline{5-6}
	& \nc & FM &	& \nc & FM & \\	
	\hline
	1	& 47.23	& 26.60	& 43.67	& 846.65	& 16.32		& 51.88 \\
	2	& 47.23	& 34.27	& 27.45	& 846.65	& 21.66		& 39.09 \\
	4	& 47.23	& 39.33	& 16.73	& 846.65	& 32.65		& 25.93 \\
	10	& 47.23	& 41.27	& 12.62	& 846.65	& 65.32		& 12.96 \\
	20	& 47.23	& 43.29	& 8.34	& 846.65	& 116.32	& 7.28  \\
	30	& 47.23	& 44.96	& 4.81	& 846.65	& 161.53	& 5.24  \\
	40	& 47.23	& 45.52	& 3.63	& 846.65	& 216.62	& 3.91  \\
	50	& 47.23	& 45.89	& 2.84	& 846.65	& 272.18	& 3.11  \\
	100	& 47.23	& 46.43	& 1.69	& 846.65	& 984.17	& 0.86  \\
	\hline 
	\end{tabular}
	\tabcaption{Variation across the number of FM sketches, $f$.}
	\label{tab:f}
	\moveup
\end{table}

\subsection{Comparison with Optimal}
\label{sec:comparison with optimal}
\begin{figure}[t]
	\centering
	\moveup
	\moveup
	\subfloat[Utility.]
	{
		\includegraphics[width=\subfigwidth]{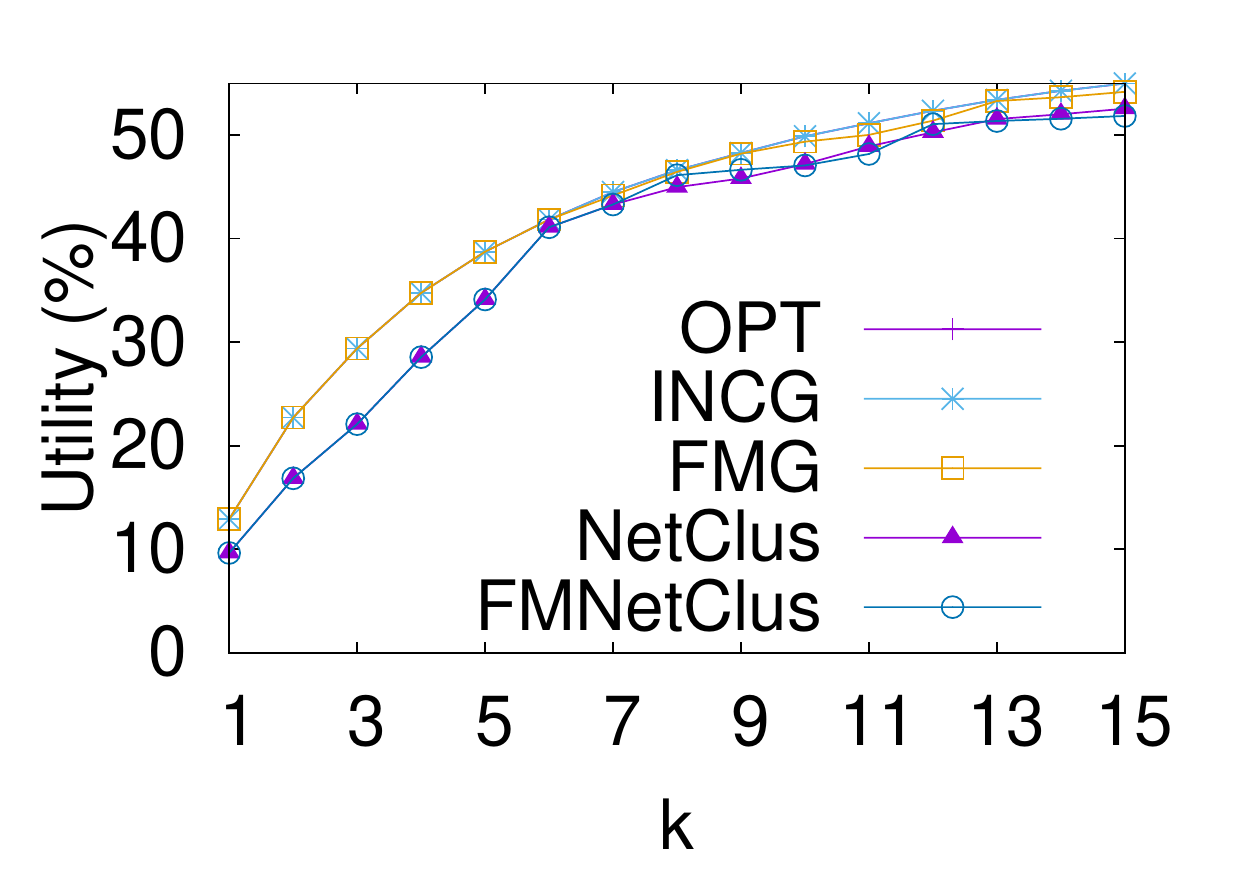}
		\label{subfig:optutil}
	}
	\subfloat[Running time.]
	{
		\includegraphics[width=\subfigwidth]{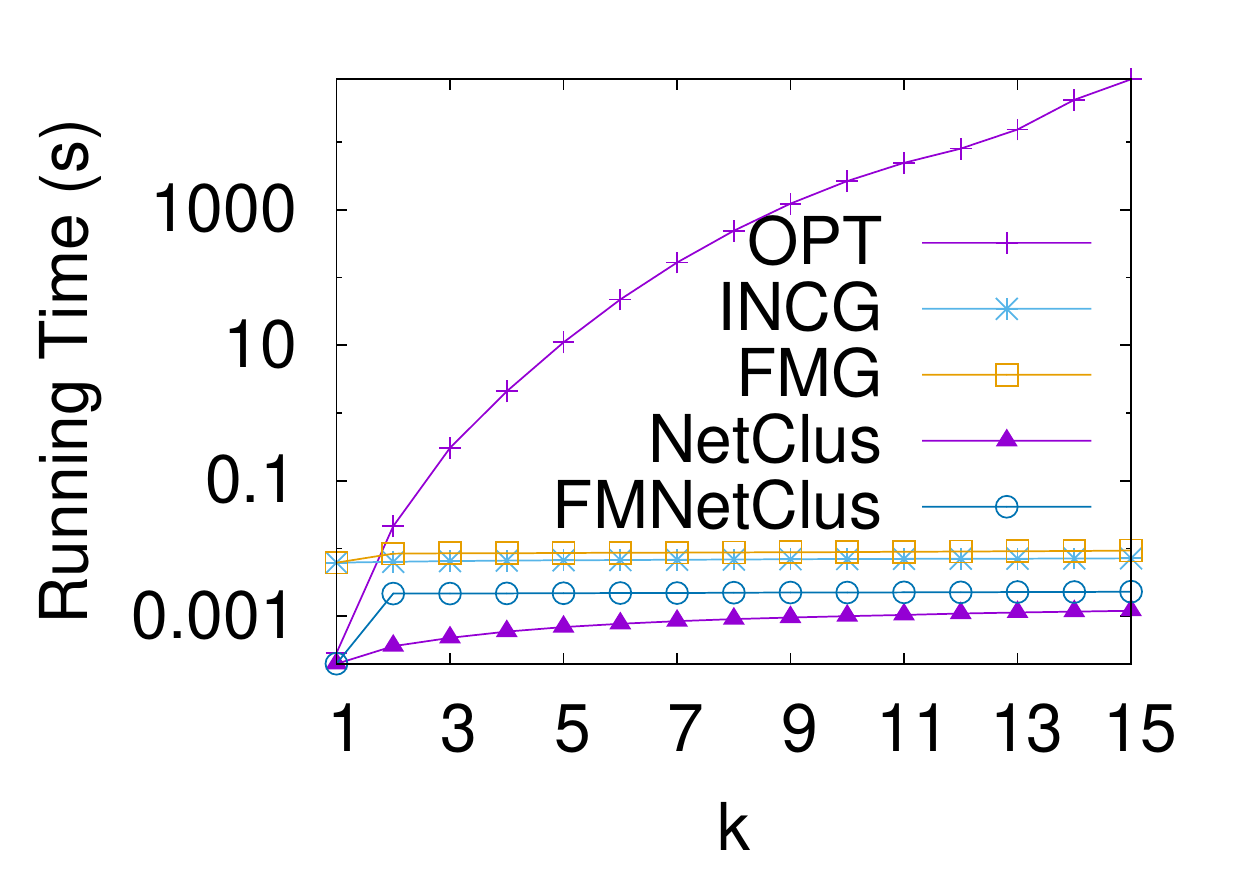}
		\label{subfig:opttime}
	}
	\figcaption{Comparison with optimal at $\tau=0.8$ Km.}
	\label{fig:optimal}
\end{figure}

Since this integer linear program based optimal algorithm requires impractical running
times, we ran it only on the \bs dataset mainly to assess the quality of the
other algorithms.  Fig.~\ref{fig:optimal} shows that the average utility of all
the algorithms are quite close to \opt although the running times are much
better.  (Note that the utilities in these and all subsequent figures are
plotted as a percentage of the total number of trajectories.)  \opt requires
hours to complete even for this small dataset and, therefore, is not practical
at all.  Consequently, we did not experiment with \opt any further.

\subsection{Quality Results}

\begin{figure}[t]
	\centering
	\moveup
	\moveup
	\subfloat[Varying $k$]
	{
		\includegraphics[width=\subfigwidth]{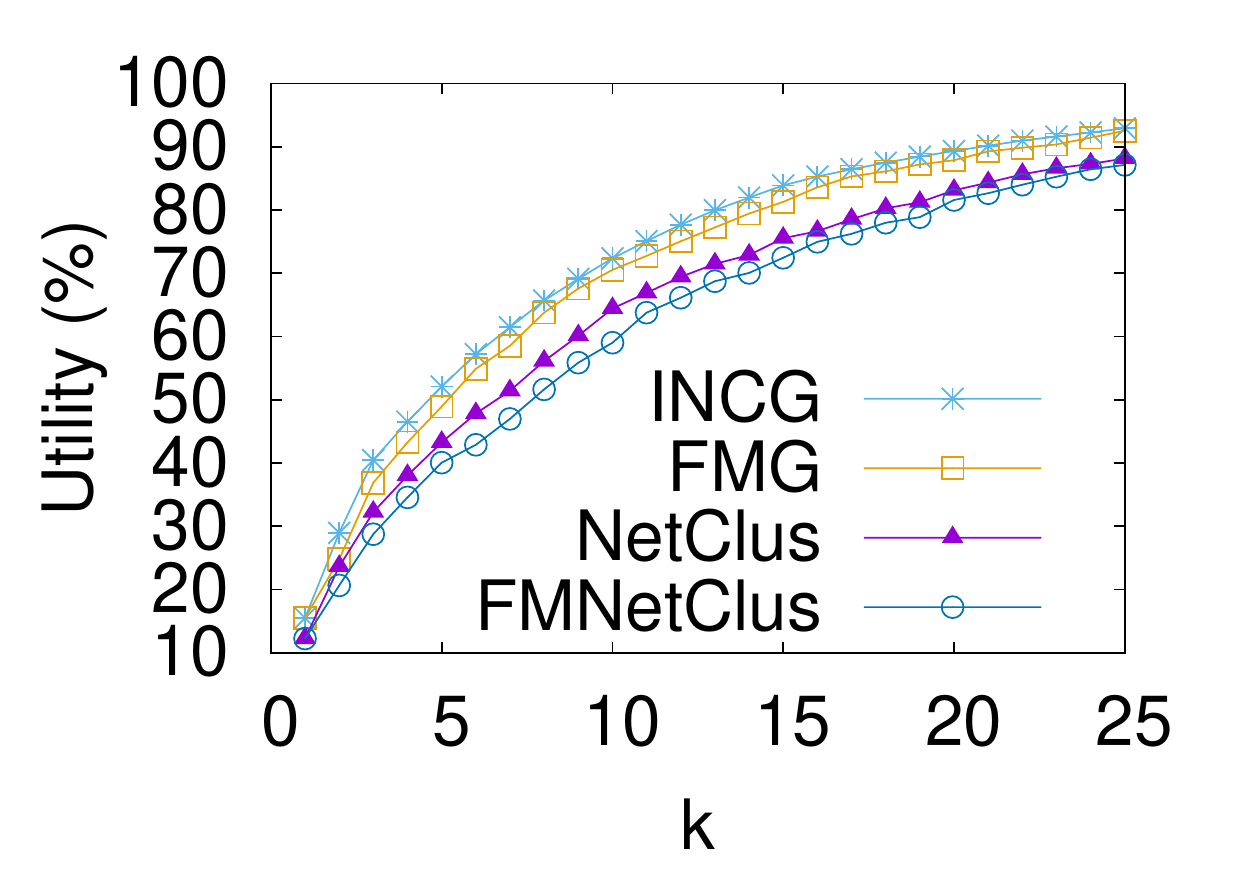}
		\label{subfig:simkutil}
	}
	\subfloat[Varying $\tau$]
	{
		\includegraphics[width=\subfigwidth]{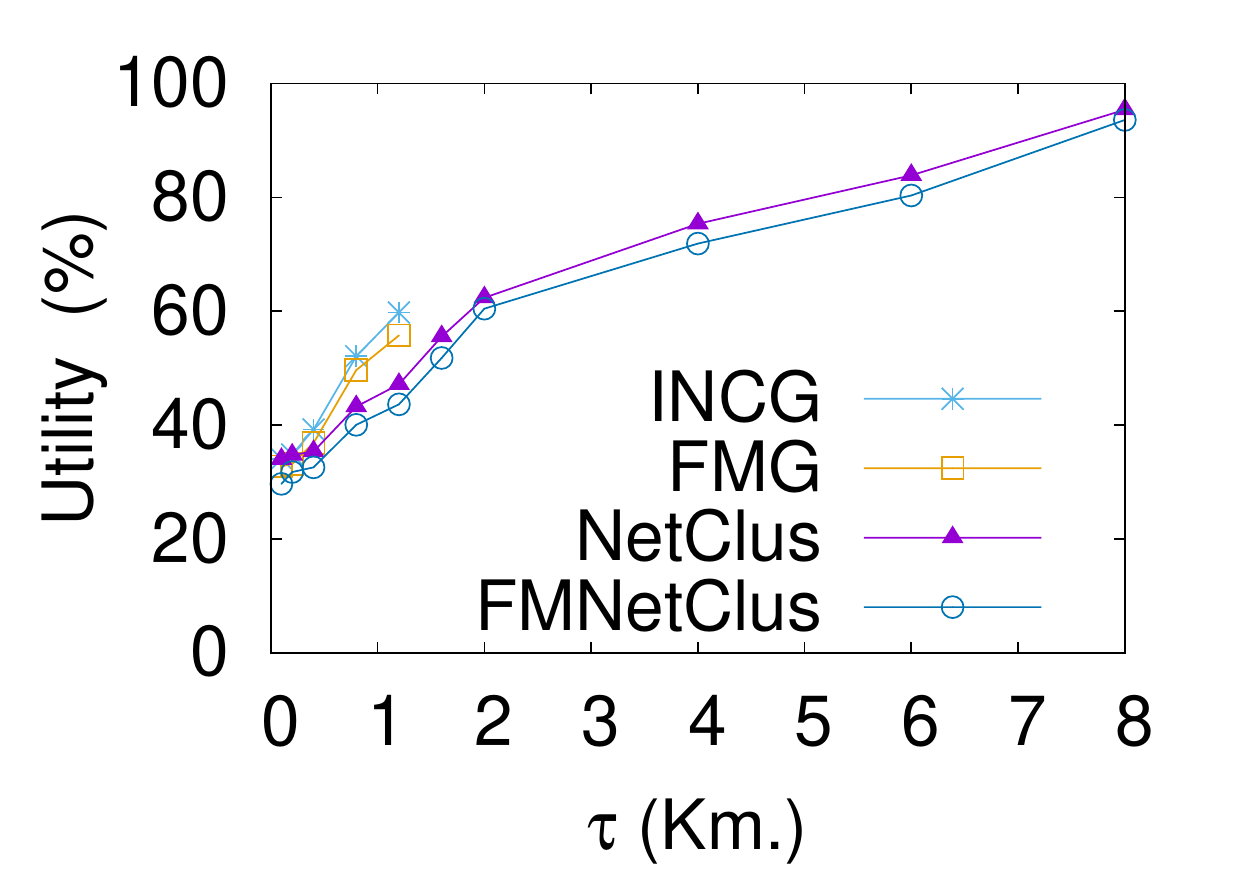}
		\label{subfig:simtauutil}
	}
	\figcaption{Quality results.}
	\label{fig:sim}
	\moveup
\end{figure}

Fig.~\ref{fig:sim} shows the
utility yields for different values of $k$ and $\tau$. The
utilities of \nc are close to that of \inc and are within 93\% of it on an average.  Owing to
high memory requirements (for reasons to be discussed in the next section),  \inc and \incfm could not run beyond $\tau=1.2$ Km.  The
utilities of \incfm and \ncfm are very close to that of \inc and \nc,
respectively.

\subsection{Memory Footprint}

\begin{table}[t]
\scriptsize
	\centering
	\begin{tabular}{|c|rr|cc|}
	\hline
	$\tau$ (in Km.)	& \inc	& \incfm & \nc & \ncfm \\
		\hline 
0.1	&	7.04	&	7.90	&	6.43	&	7.09	\\
0.2	&	9.14	&	10.00	&	4.17	&	4.81	\\
0.4	&	13.47	&	14.34	&	3.67	&	3.94	\\
0.8	&	19.58	&	20.44	&	3.22	&	3.64	\\
1.2	&	23.98	&	24.85	&	3.52	&	3.88	\\
1.6	&	\multicolumn{2}{c|}{Out of memory}	&	3.41	&	3.98	\\		
		\hline
	\end{tabular}
	\tabcaption{Memory footprint of different algorithms (in GB).}
	\label{tab:memory}
\end{table}

Table~\ref{tab:memory} shows that the memory footprints of \nc and \ncfm are
significantly less than those of \inc and \incfm.  As the coverage threshold
$\tau$ increases, the size of the covering sets, $\tc$ and $\stc$, used in \inc
and \incfm, increase sharply. Consequently, these algorithms could not scale
beyond $\tau=1.2$ Km. On the other hand, with higher $\tau$, \nc and \ncfm use
lower resolution clustering instances leading to higher data compression,
thereby resulting in lower memory footprints. The FM sketch based schemes
require slightly more memory than their counterparts, due to storage of multiple
bit vectors for each site or cluster, as applicable.

\subsection{Performance Results}

\begin{figure}[t]
	\centering
	\moveup
	\moveup
	\subfloat[Varying $k$]
	{
		\includegraphics[width=\subfigwidth]{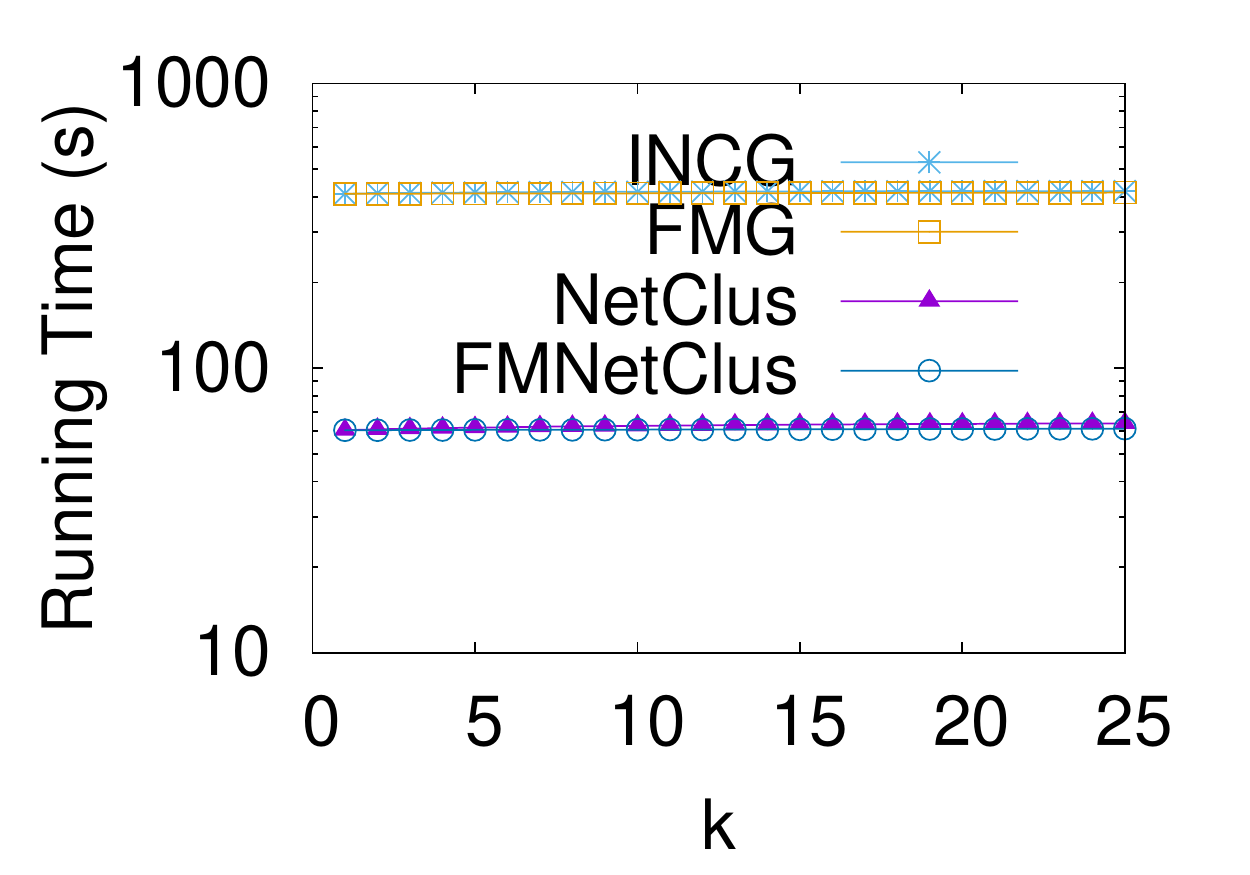}
		\label{subfig:ktime}
	}
	\subfloat[Varying $\tau$]
	{
		\includegraphics[width=\subfigwidth]{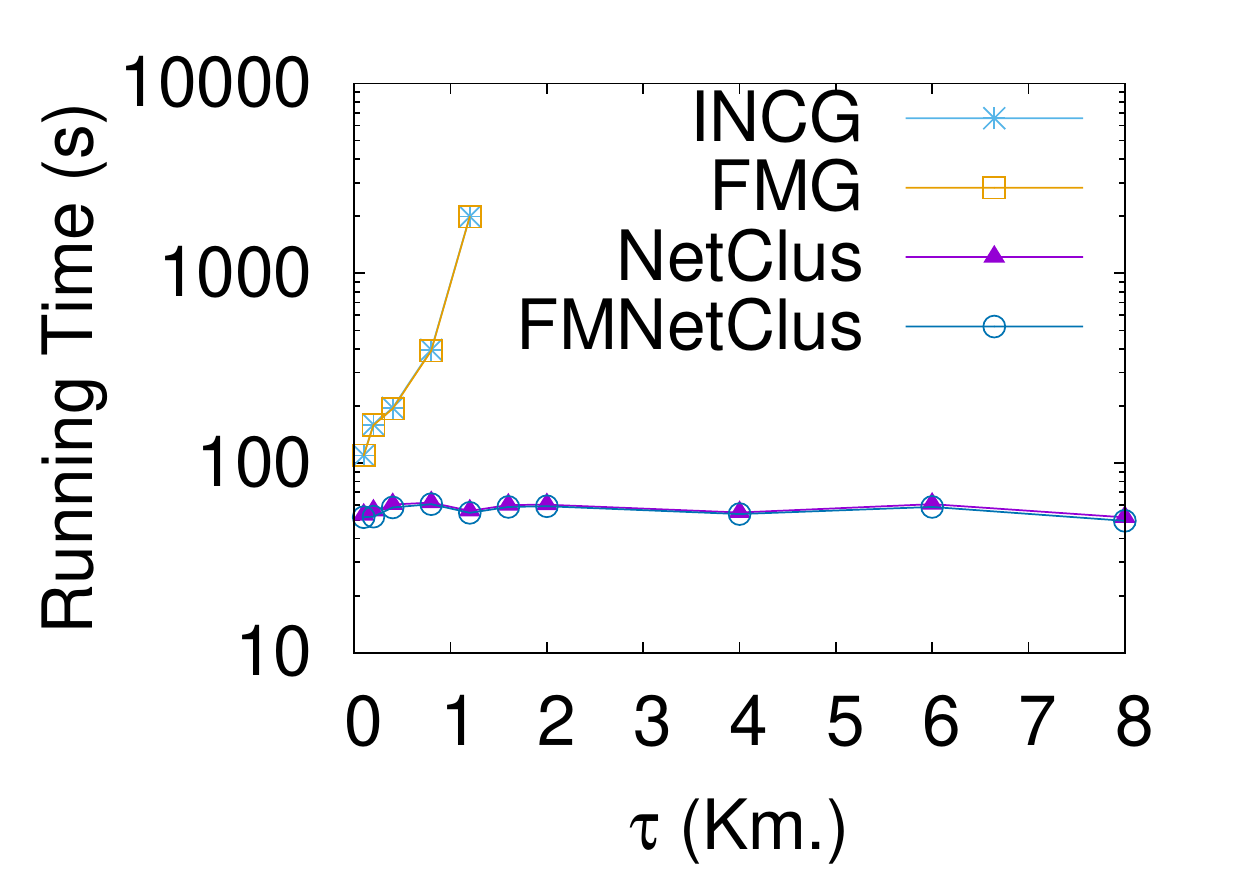}
		\label{subfig:tautime}
	}
	\figcaption{Running time performance.}
	\label{fig:time}
\vspace{-0.05in}
\moveups
\end{figure}

We next measure the performance of the algorithms for different values of $k$
and $\tau$.  Fig.~\ref{fig:time} shows that for $\tau \le 1.2$ km.,  \nc and
\ncfm are up to $36$ times faster than \inc and \incfm, respectively. For $\tau
>1.2$ km., as stated in the previous section, \inc and \incfm fail to run due to
high memory overheads.  When $\tau$ increases, \nc (and \ncfm) uses a higher
index instance having lesser number of clusters, leading to its efficiency. On
the other hand, \inc and \incfm use \tc and \stc covering sets of increasingly
larger size, resulting in poor performance.

Note that the plots in Fig.~\ref{subfig:ktime} appear to be linear w.r.t. $k$. This is because (i)~the initial cost of computing the covering sets significantly dominates the iterative phase of the algorithms, (ii)~the running times are plotted in log-scale. 
 
Although \ncfm (\incfm) offers a speed up of about $5$ times in the algorithm
running time when compared to \nc (\inc respectively), its effect is negated by
a relatively large initial pre-processing time required for computing the
covering sets. Due to this fact, we only compare the results of \nc with that of
\inc in the subsequent sections.

\subsection{Extensions and Variants of \tops}
\label{sec:results-tops-variants}

\begin{figure}[bt]
	\centering
	\moveup
	\subfloat[Cost.]
	{
		\includegraphics[width=\subfigwidth]{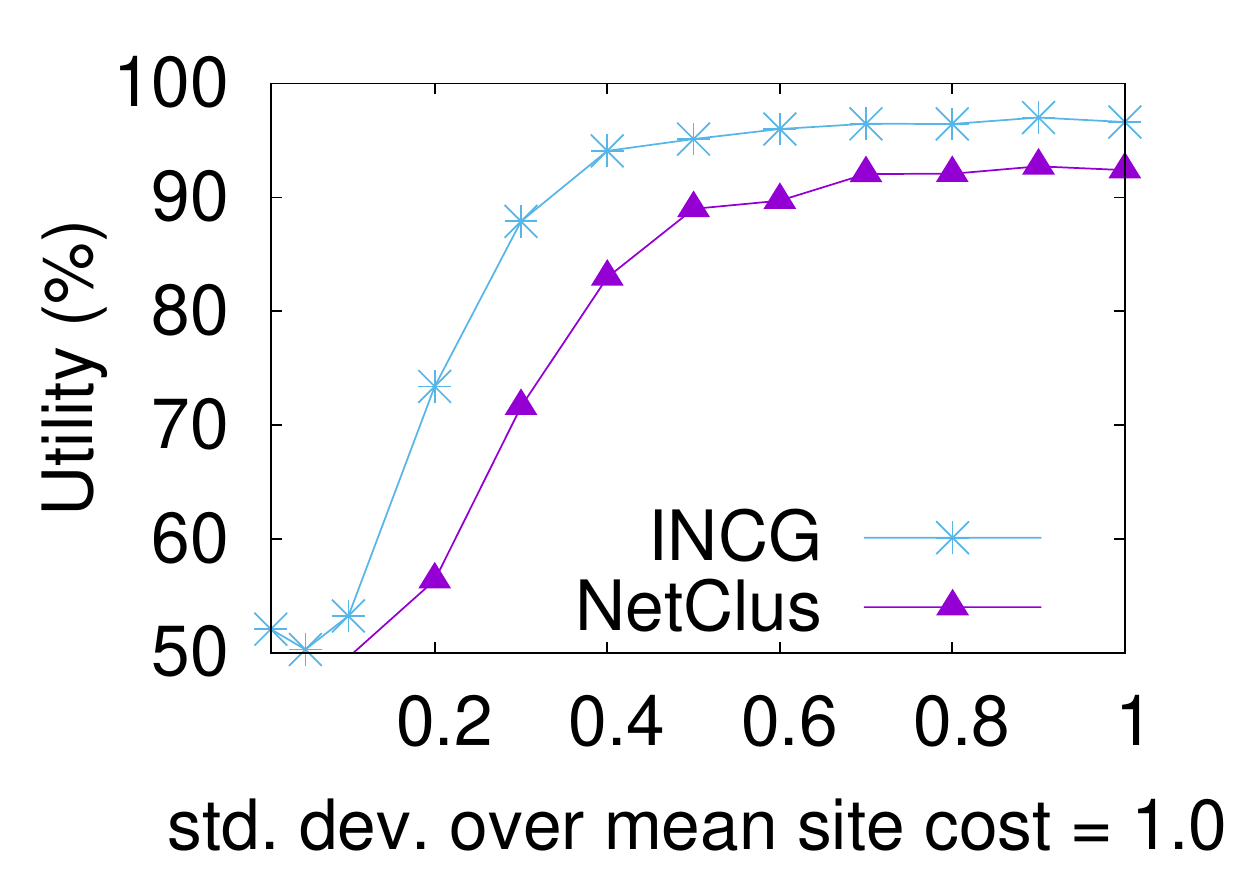}
		\label{subfig:cost_util}
	}
	\subfloat[Capacity.]
	{
		\includegraphics[width=\subfigwidth]{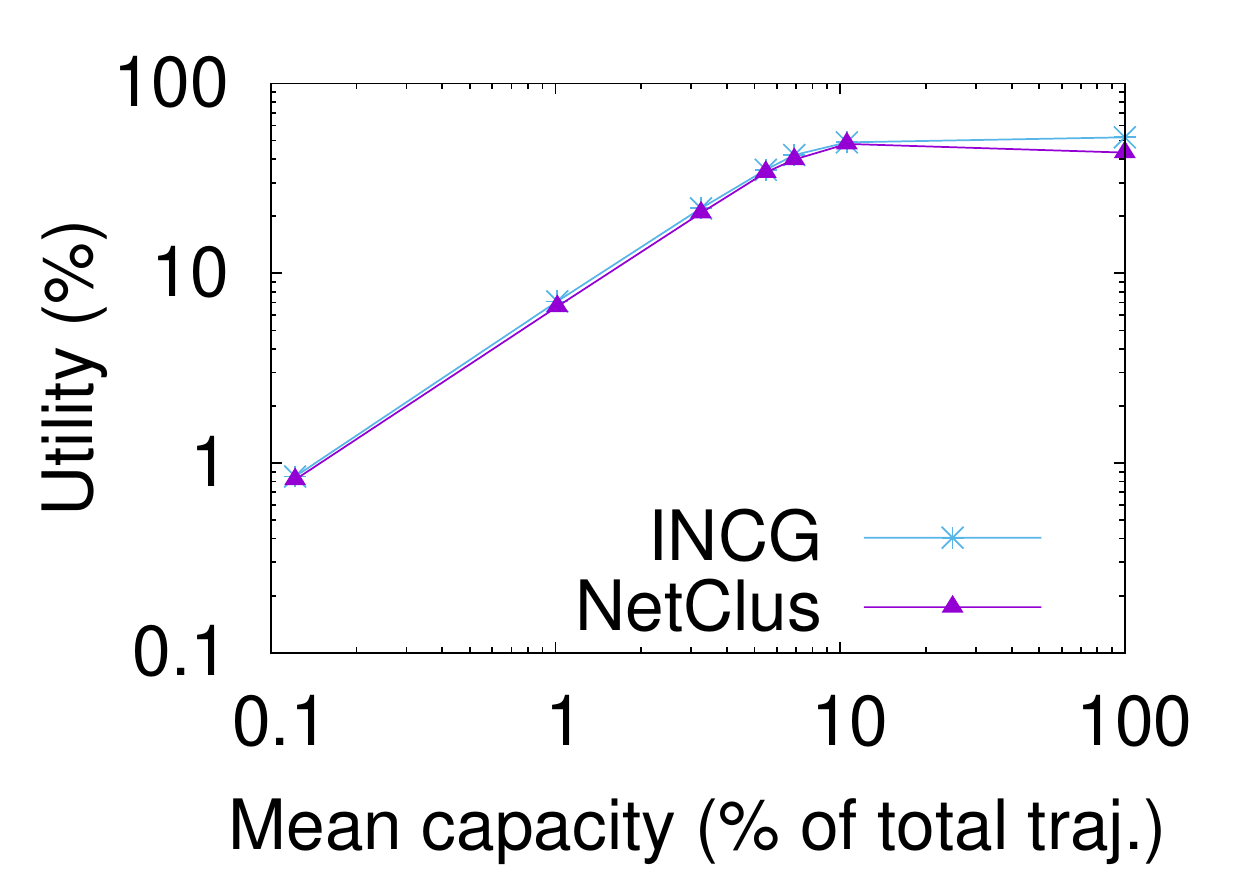}
		\label{subfig:cap_util}
	}
	\figcaption{Utilities for \tops extensions.}
	\label{fig:cost-cap-util}
\end{figure}

\begin{figure}
	\centering
	\moveups
	\subfloat[Utility.]
	{
		\includegraphics[width=\subfigwidth]{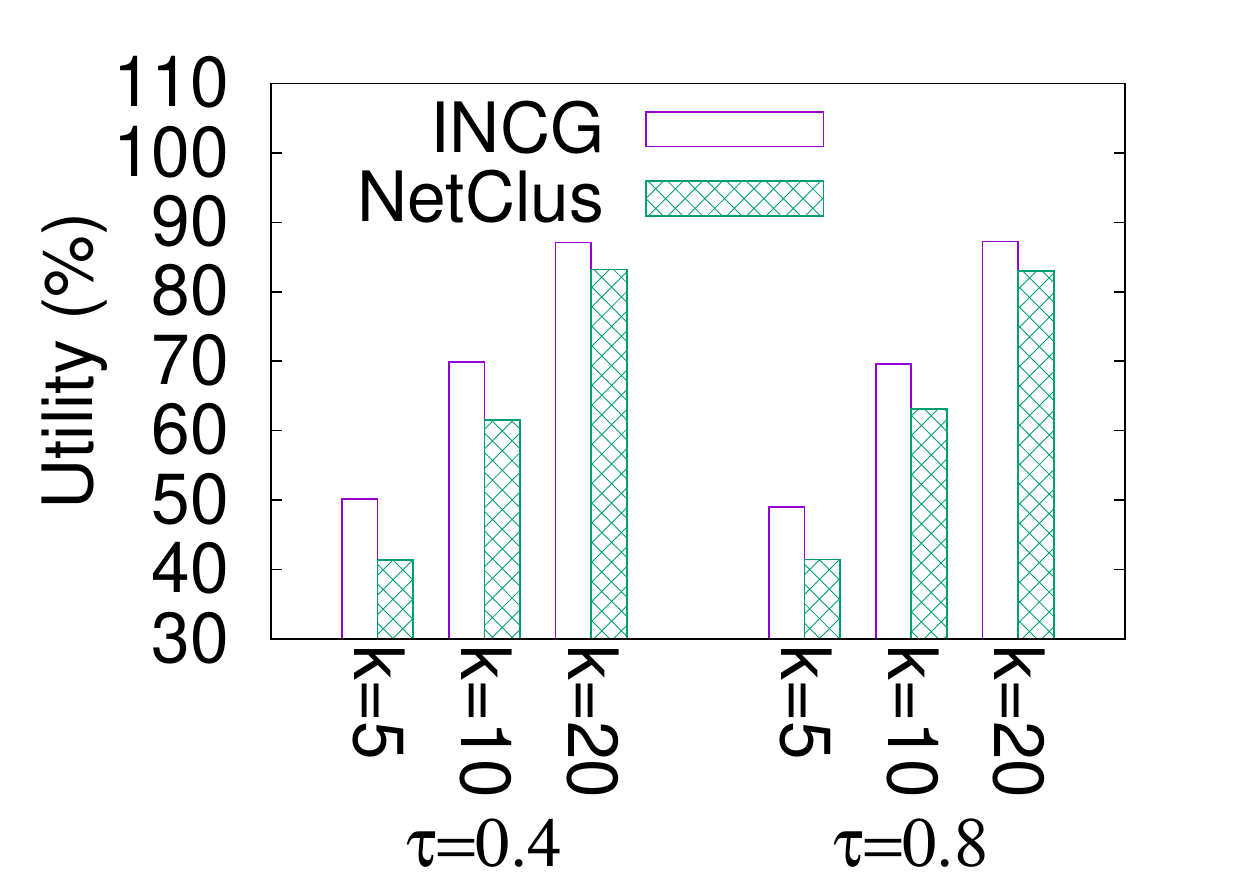}
		\label{subfig:variantutil}
	}
	\subfloat[Running time.]
	{
		\includegraphics[width=\subfigwidth]{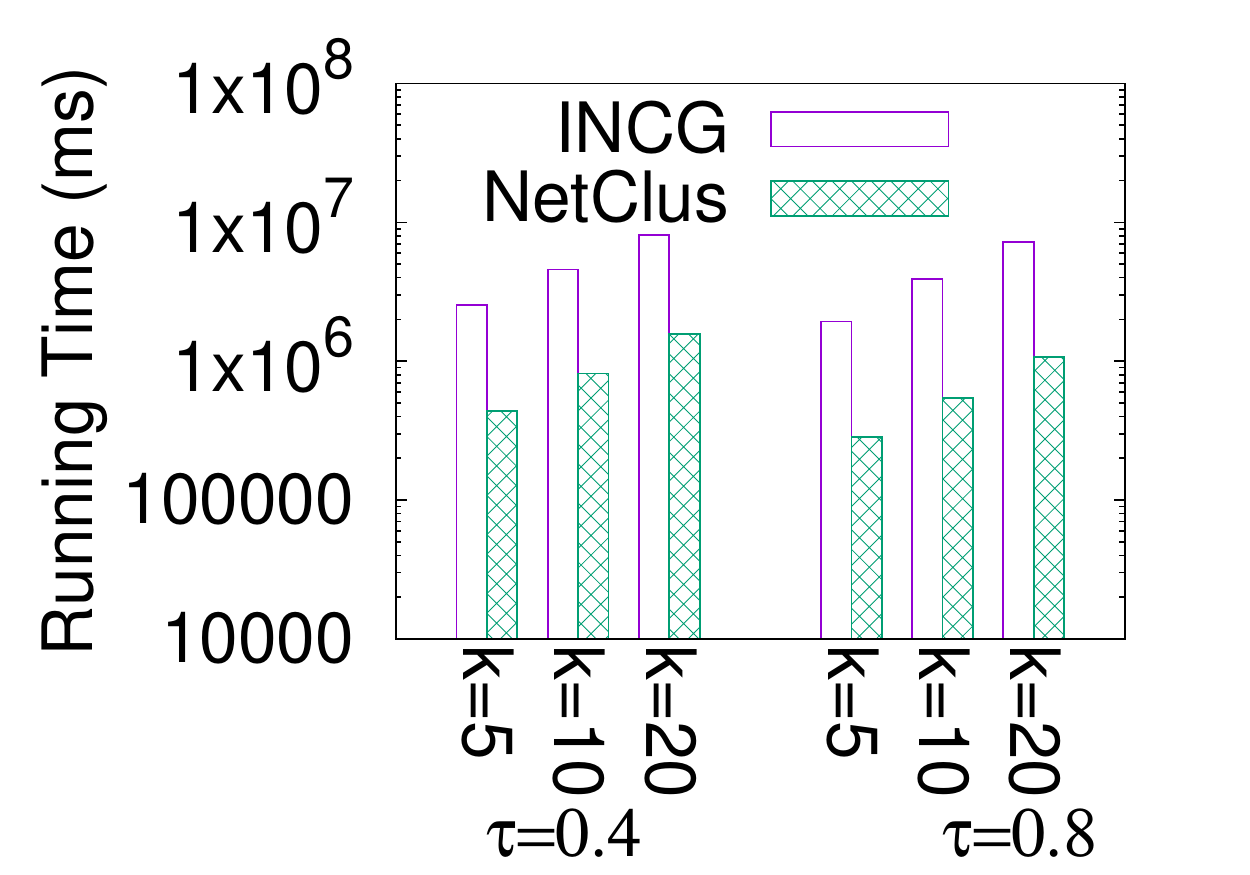}
		\label{subfig:varianttime}
	}
	\figcaption{\topstwo: A variant of \tops.}
	\label{fig:variant}
\end{figure}

We next show results of \nc on different \tops extensions and variants
(discussed in Sec.~\ref{sec:variants}) over the \bl dataset.

\textbf{\topscost:}
We consider a budget of $B=5.0$ and $\tau=0.8$ Km.  The cost of each site was
assigned using a normal distribution with mean $\mu = 1.0$ and standard
deviation varied between $\sigma \in [0,1]$ (the least cost of a site was
constrained to be $0.1$).  Fig.~\ref{subfig:cost_util} shows that the utility
increases with standard deviation (note that $\sigma = 0$ degenerates to basic
\tops).  This is due to the fact that with higher standard deviation, more
number of sites can be chosen with lower costs which ultimately leads to larger
number of trajectories being covered.  The increased number of iterations does
not increase the running time much since it is a small overhead on the initial
costs (Fig.~\ref{fig:cost2}).

\begin{figure}[t]
	\centering
	\moveups
	\subfloat[Number of sites.]
	{
		\includegraphics[width=\subfigwidth]{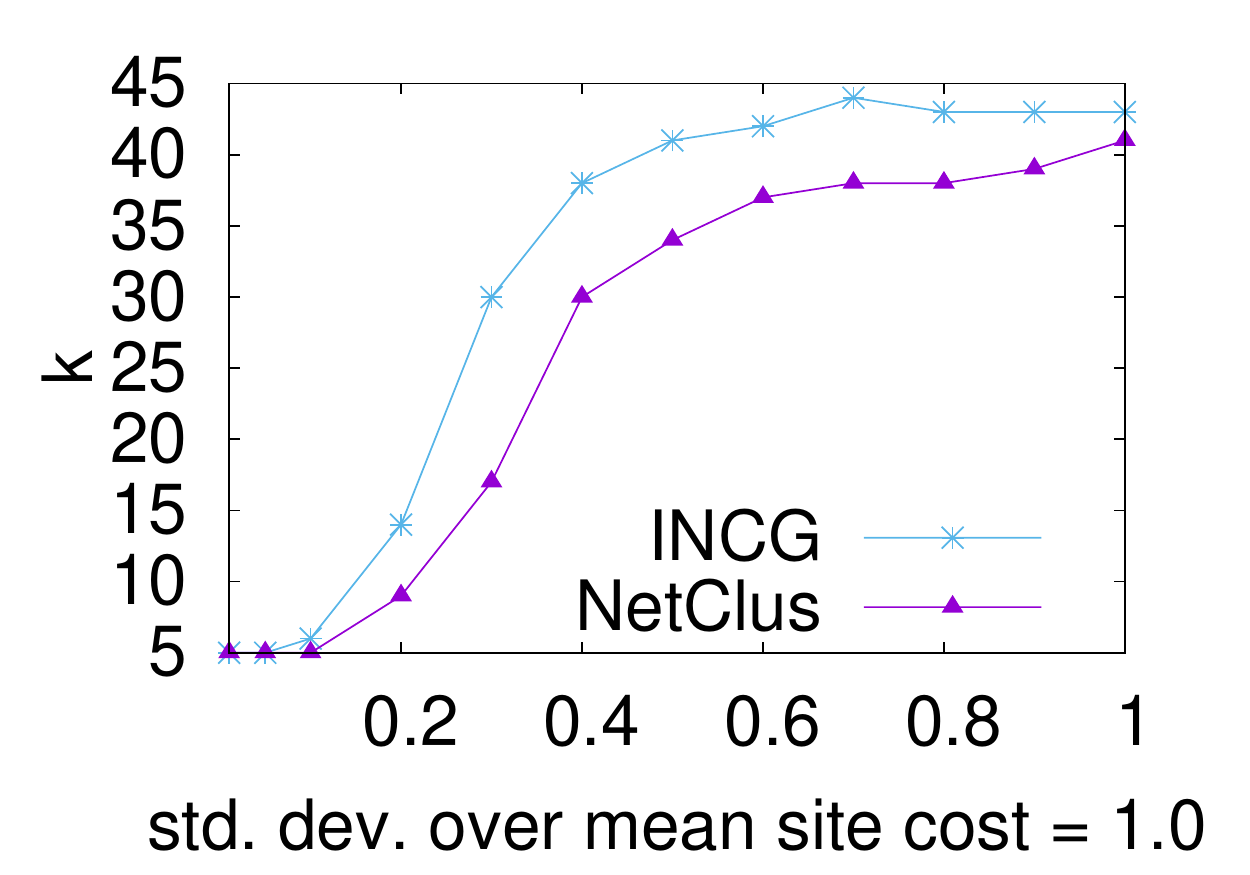}
		\label{subfig:cost_k}
	}
	\subfloat[Running time.]
	{
		\includegraphics[width=\subfigwidth]{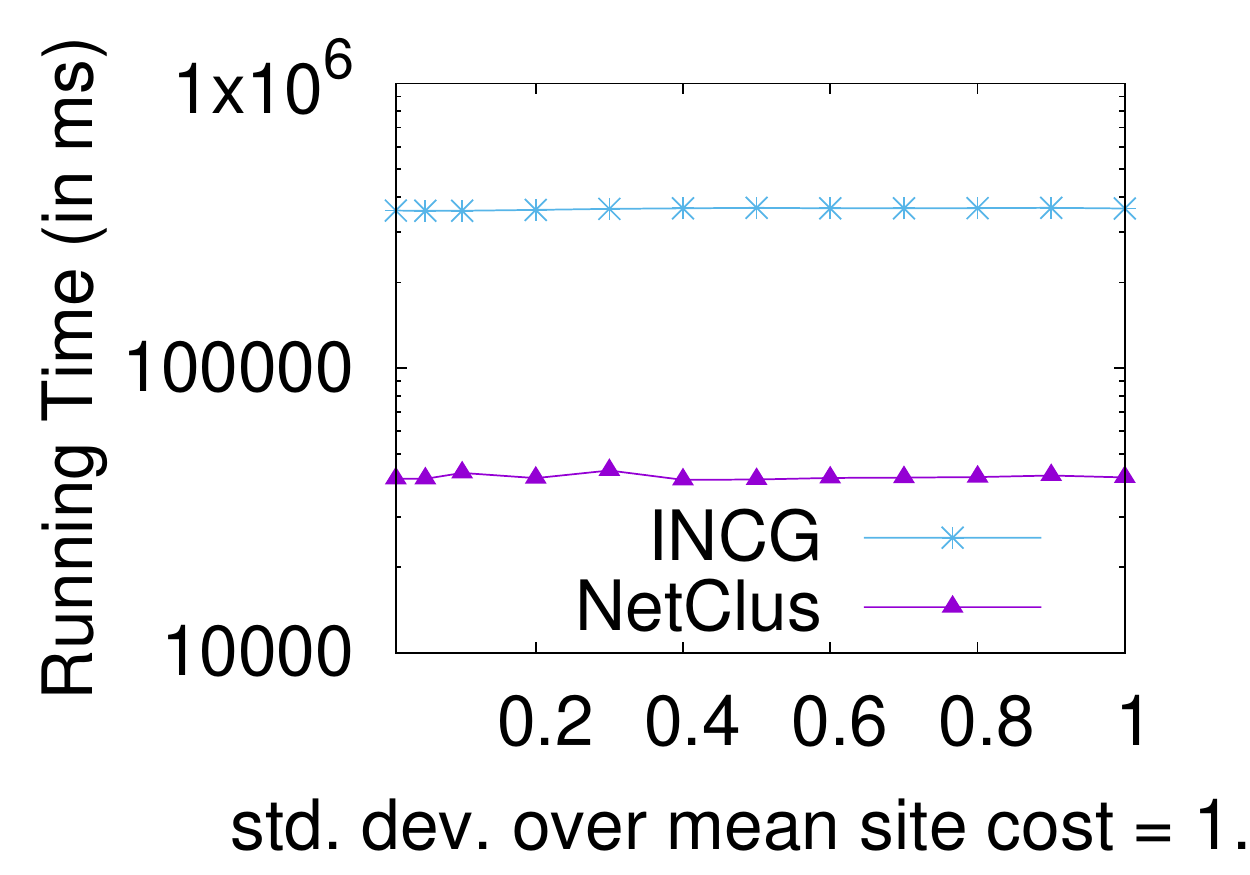}
		\label{subfig:cost_time}
	}
	\figcaption{\tops under cost constraint.}
	\vspace*{1mm}
	\label{fig:cost2}
\end{figure}

\textbf{\topscap:}
We consider $k=5$ and $\tau=0.8$ Km. The sites were assigned varying capacities
drawn from a normal distribution where the mean was varied in the range
$[0.1\%,100\%]$ of the total number of trajectories, and the standard deviation
fixed at $10\%$ of the mean. (note that mean capacity of $100\%$ corresponds to
basic unconstrained \tops).  Fig.~\ref{subfig:cap_util} shows that, as expected,
utility increases with mean capacity.  \nc has almost the same utility as that
of \inc. We do not show the running time plots, as the algorithms for \topscap
are almost the same as those for \tops and, hence, exhibit similar performance.

\textbf{\topstwo:}
Finally, we study the \topstwo variant where the preference function \pref was a
convex function of the distance between the site and trajectory.  The results,
shown in Fig.~\ref{fig:variant}, portray that \nc has utility close to that of
\inc, while being about an order of magnitude faster.

\subsection{Updates of Sites and Trajectories}
\label{sec:updateexp}

\begin{table}[t]
\scriptsize
\centering
\begin{tabular}{|c|r||c|r|}
\hline
\# Trajectories added & Update time & \# Candidate sites added & Update time \\
\hline
10000	&	 22.83 s	&	10000	&	1.26 s	\\
20000	&	 44.58 s	&	20000	&	1.56 s	\\
30000	&	 74.07 s	&	30000	&	1.76 s	\\
40000	&	 92.06 s	&	40000	&	1.85 s	\\
50000	&	122.69 s	&	50000	&	2.10 s	\\
\hline
\end{tabular}
\tabcaption{Index update cost.}
\label{tab:update}
\moveupb
\end{table}

Table~\ref{tab:update} shows that \nc efficiently processes additions of trajectories and candidate sites over the index structure 
(Sec.~\ref{sec:updates}). Adding a trajectory requires more
time than that for a candidate site since a trajectory passes through multiple
clusters in general and the covering sets, etc. of all those clusters need to be
updated.  Adding a site, on the other hand, requires simply finding the cluster
it is in and updating the cluster representative, if applicable.

\subsection{Robustness with Parameters}


\begin{figure}[t]
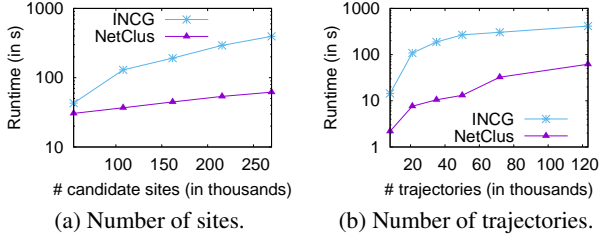

	\centering
	\moveups
	\subfloat[Number of sites.]
	{
		\includegraphics[width=\subfigwidth]{site/time/output}
		\label{subfig:site}
	}
	\subfloat[Number of trajectories.]
	{
		\includegraphics[width=\subfigwidth]{traj/time/output}
		\label{subfig:traj}
	}
	\figcaption{Scalability results ($k=5$ and $\tau=0.8$ Km).}
	\label{fig:scale}
\end{figure}

\begin{figure}[tb]
	\centering
	\moveups
	\subfloat[Utility.]
	{
		\includegraphics[width=\subfigwidth]{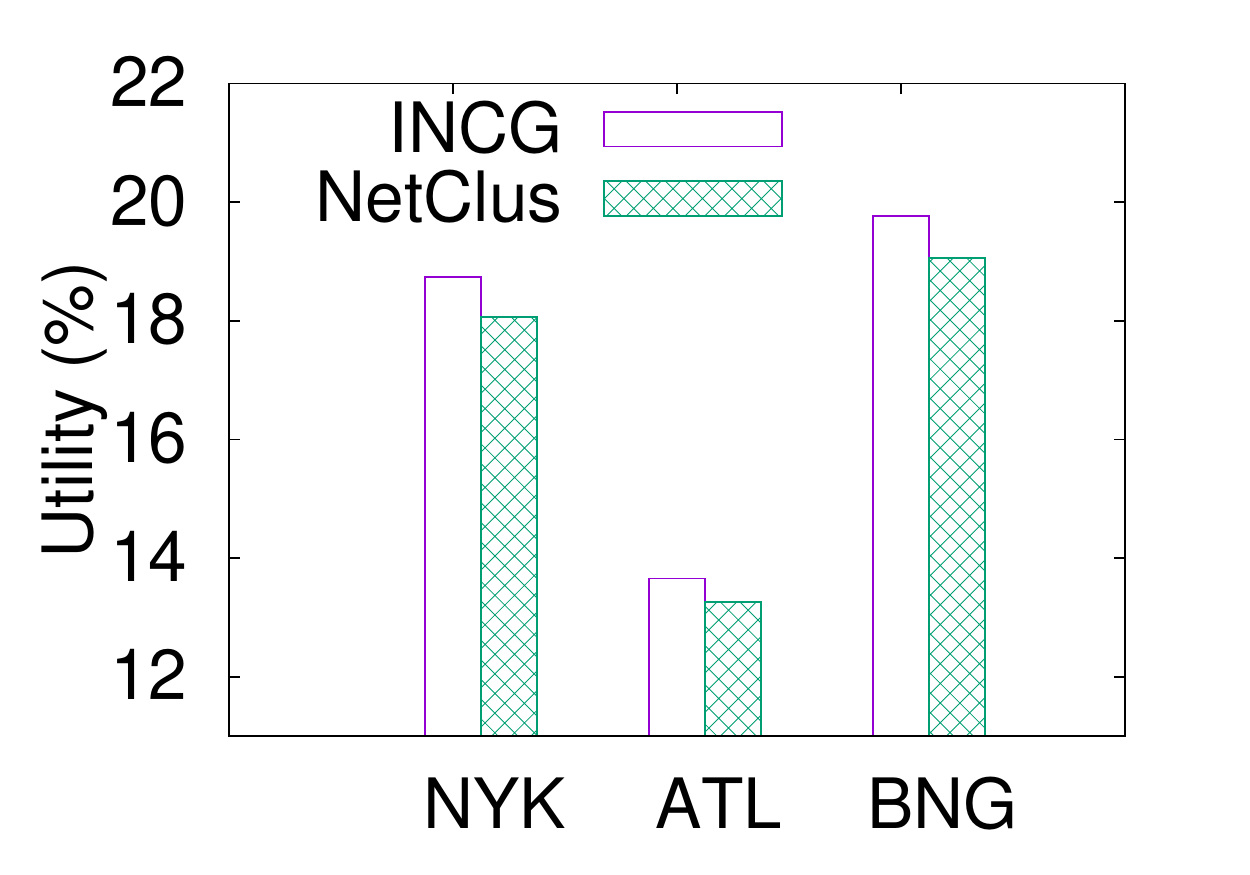}
		\label{subfig:synutil}
	}
	\subfloat[Running time.]
	{
		\includegraphics[width=\subfigwidth]{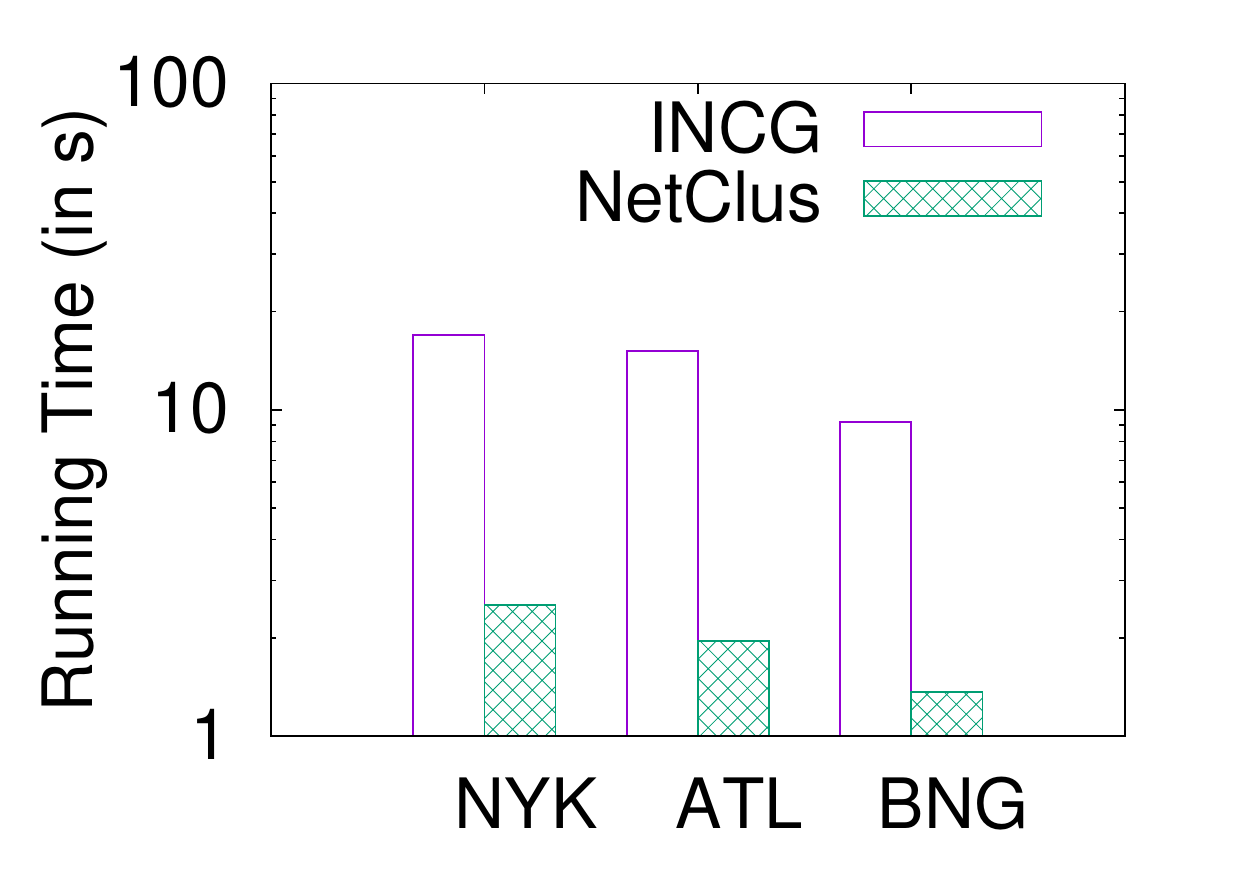}
		\label{subfig:syntime}
	}
	\figcaption{Effect of city geometries ($k=5$ and $\tau=0.8$ Km).}
	\label{fig:synthetic}
\end{figure}

\begin{figure}[tb]
	\centering
	\moveups
	\subfloat[Utility.]
	{
		\includegraphics[width=\subfigwidth]{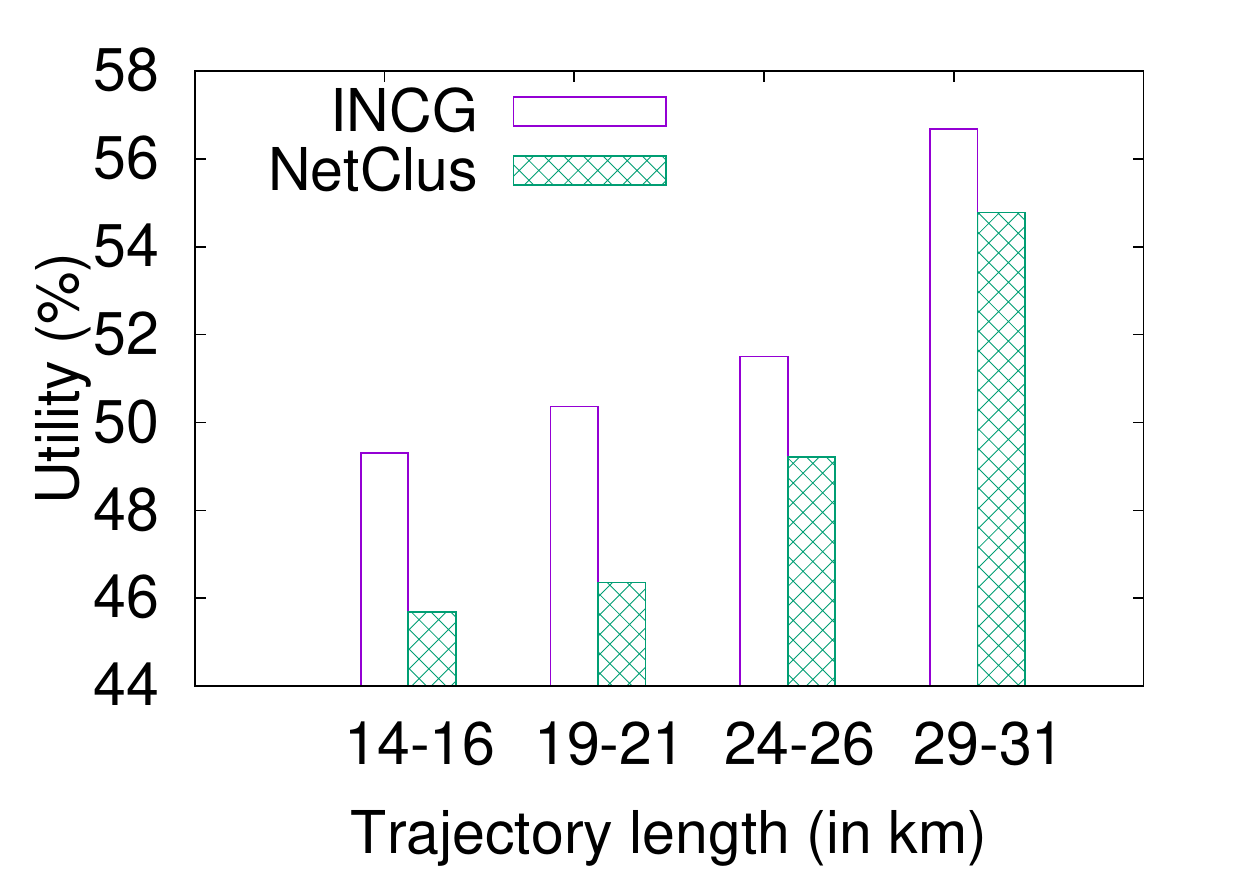}
		\label{subfig:lengthutil}
	}	
	\subfloat[Running time.]
	{
		\includegraphics[width=\subfigwidth]{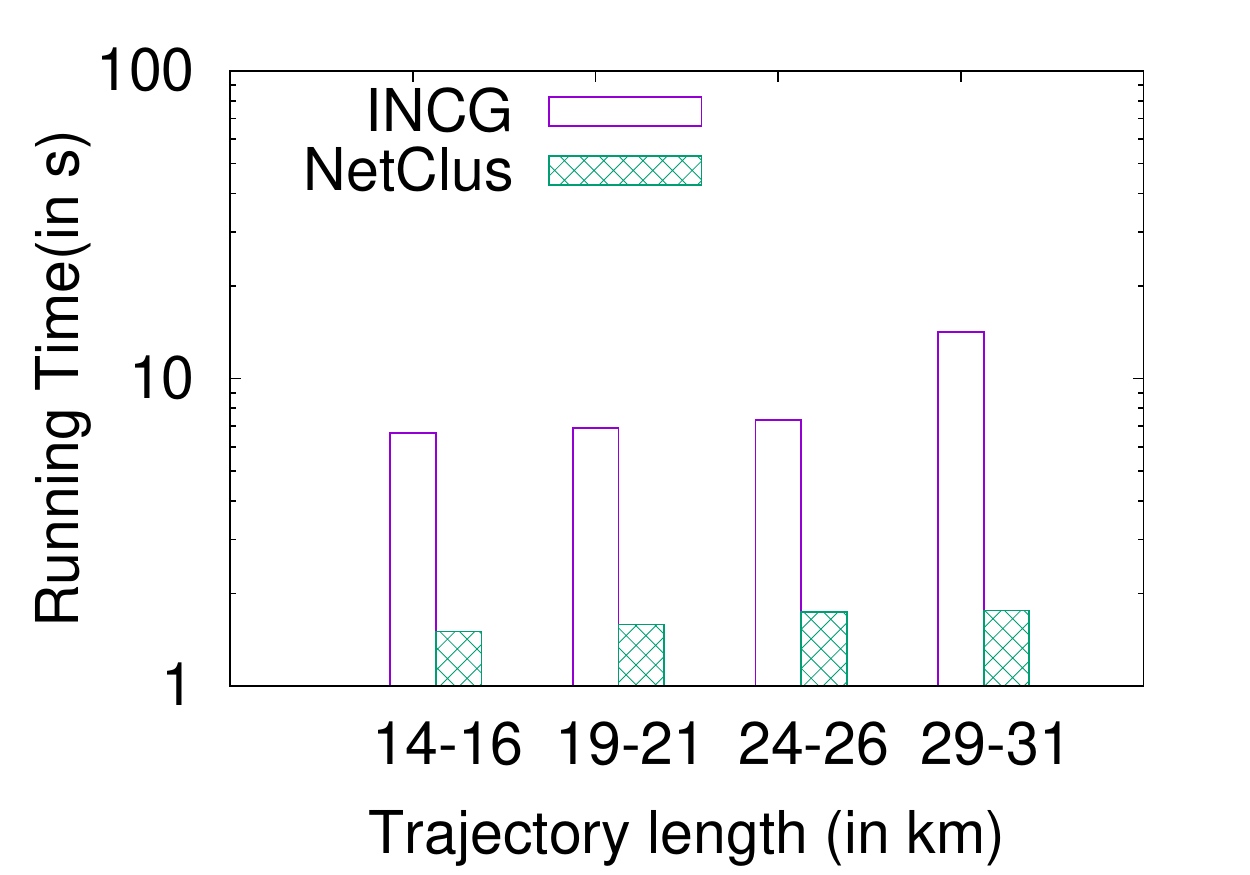}
		\label{subfig:lengthtime}
	}
	\figcaption{Effect of trajectory length ($k=5$ and $\tau=0.8$ Km).}
	\label{fig:length}
\end{figure}

\textbf{Number of Sites and Trajectories:}
Fig.~\ref{fig:scale} shows scalability results with varying number of candidate sites and trajectories on the \bl dataset. \nc is about an order of magnitude faster than \inc.


\textbf{City Geometries:}
We experimented with three typical city geometries, Atlanta, New York, and
Bangalore (Fig.~\ref{fig:synthetic}).  New York has a star
topology while Bangalore is poly-centric.  Consequently, Bangalore has a larger
utility percentage.  Since Atlanta has a mesh structure with trajectories
distributed all over the city, its utility is lowest.  There is not much
difference in the running times, though. Bangalore has least running time due to its smaller road network.


\textbf{Length of Trajectories:}
To determine the effect of length of trajectories, the trajectories were divided
into four classes based on their lengths and from each of the classes, 5,000
trajectories were sampled (Fig.~\ref{fig:length}).  Longer
trajectories are easier to cover since they pass through more number of
candidate sites over a larger area and, therefore, exhibit higher utility than
the shorter ones.  The running time also increases with trajectory length due to
more number of update operations of the marginal utilities. 

\subsection{Index Construction}


\begin{table}[hbt]
\scriptsize
	\begin{center}
	\begin{tabular}{rrrrrr}
	$R_p$ (Km) & $\kappa$ & $\bar{|\Lambda|}$ & $\bar{|\tl|}$ & $\bar{|\cl|}$ & Run-time (s)	\\
	\hline
0.0093	&	258340	&	1.04	&	561.88	&	4.29	&	269.35	\\
0.0286	&	195910	&	1.15	&	571.41	&	6.43	&	239.58	\\
0.0163	&	233729	&	1.38	&	592.45	&	12.62	&	255.60	\\
0.0500	&	153210	&	1.76	&	626.18	&	19.44	&	221.79	\\
0.0875	&	112223	&	2.40	&	675.60	&	26.04	&	204.38	\\
0.1531	&	76836	&	3.51	&	757.03	&	45.38	&	192.12	\\
0.2680	&	48288	&	5.58	&	895.19	&	64.27	&	188.25	\\
0.4689	&	28510	&	9.46	&	1162.93	&	53.74	&	205.74	\\
0.8207	&	15775	&	17.10	&	1525.09	&	42.73	&	281.62	\\
1.4361	&	8258	&	32.66	&	2162.01	&	35.62	&	537.27	\\
2.5133	&	4202	&	64.18	&	3092.64	&	21.73	&	1300.12	\\
4.3982	&	2024	&	133.24	&	4148.03	&	12.49	&	3231.92	\\
7.6968	&	938	&	287.51	&	7537.33	&	3.51	&	7333.60	\\
	\hline 
	\end{tabular}
	\tabcaption{Details of indexing for Beijing road network comprising of 269,686 nodes with $\eps=0.75$.}
	\label{tab:cluster}
\end{center}
\end{table}

Referring to Table~\ref{tab:cluster}, we observe that as the cluster radius
$R_p$ increases, the number of clusters $\eta_p$ decreases as the average
dominating set sizes $\bar{|\Lambda|}$ increases.  Therefore, the average
number of trajectories passing through a cluster $\bar{|\tl|}$ also increases.
The average number of neighbors of a cluster $\bar{|\cl|}$ initially increases
but finally decreases.  
Importantly, we observe that the
\emph{offline} index construction times across different cluster radii are
practical. 


\section{Related Work}
\label{sec:related}


The related work falls in two main classes, \emph{optimal location} queries
\cite{du2005optimal, ghaemi2010optimal, xiao2011optimal,
chen2014efficient,li2013trajectory}, and \emph{flow based facility location}
problems \cite{berman1992optimal, berman1995locatingDiscretionary,
berman1995locatingMain,berman1995locating,
berman2002generalized,berman1998flow}.  

An optimal location (OL) query has three inputs: a set of facilities, a set  of
users, and a set of candidate sites.  The objective is to determine a candidate
site to set up a new facility that optimizes an objective function based on the
distances between the facilities and the users.  Comparing OL query with \tops
query, we note that: (a)~While \emph{fixed} user locations are considered for OL
queries, \tops uses \emph{trajectories} of users. (b)~OL queries report only a
single optimal location, while \tops reports $k$ locations.  (c)~Unlike OL
queries that are solvable in polynomial time, \tops is NP-hard (it is
polynomially solvable only for $k=1$).

Recently, \cite{li2013trajectory} studied OL queries on trajectories over a road
network.  Two algorithms were proposed to compute the optimal road segment to
host a new service. Their work has quite a few limitations and differences when
compared with our work: (a)~Since a single optimal road segment is reported,
their problem is polynomially solvable.  (b)~Their work identifies the optimal
road segment, rather than the optimal site.  (c)~There is no analysis on the
quality of the reported optimal road segment, either theoretically or
empirically.  (d)~It is not shown how does the reported road segment performs
for other established metrics, such as number of new users covered, distance
traveled by the users to avail the service, etc.


Facility location problems \cite{drezner1995facility, FacilityLocation}
typically consider a set  of users, and a set of candidate sites.  The goal is
to identify a set of $k$ candidate sites that optimize certain metrics such as
covering maximum number of users, or minimizing the average distance between a
user and its nearest facility, etc. Almost all of these problems are NP-hard.
While early works assumed that the users are static, mobile users are now
considered.  The \emph{flow based} facility location works
\cite{berman1992optimal, berman1995locatingDiscretionary,
berman1995locatingMain,berman1995locating, berman2002generalized,berman1998flow}
assume a flow model to characterize human mobility, instead of using real
trajectories.
A fairly comprehensive literature survey is available in \cite{FIFLPSurvey}.  We
briefly outline the major works related to the different  versions of \tops
queries that have been discussed in Sec.~\ref{sec:variants}.
In \cite{berman1992optimal}, few exact and approximate algorithms for \topsone
and \topsfour, were presented, under the restriction that a customer would stay
on the path, i.e., $\tau=0$.  Later, in \cite{berman1995locatingMain}, few
generalizations of the model were proposed, where the customers were allowed to
deviate.  These include \topsone, \topstwo and \topsthree problems.  Further
generalizations of \topsone were examined in \cite{berman1995locating}, with
probabilistic customer flows, single and multiple flow interceptions, and fixed
and variable installation costs of the services. Several existing flow-based facility location models were generalized in \cite{zeng2010generalized}.
 Few  flow refueling location models have been proposed in \cite{lim2010heuristic,kuby2009optimization,mirhassani2012flexible}
for sitingalternative- fuel stations. We have already discussed how our work differs from these works in
Sec.~\ref{sec:Intro}. 

Recently, a greedy heuristic was proposed for the \topsone problem in  \cite{li2016mining}. under the assumption that $\tau=0$, i.e., no customer deviations are allowed. We present a detailed study of the \topsone problem in \cite{tops_icde}, with arbitrary value of $\tau$.

\vspace{-0.10in}
\section{Conclusions}
\label{sec:concl}

In this paper, we have proposed a generic \tops framework to solve the problem
of finding facility locations for trajectory-aware services. We showed that the problem is NP-hard, and proposed a greedy heuristic with constant factor approximation bound. However, this heuristic does not scale for large datasets. Thus, we  developed
an index structure, \nc, to make it practical for city-scale road networks.
Extensive experiments over real datasets showed that \nc yields solutions that
are comparable with those of the greedy heuristic, while being significantly faster, and low in memory overhead. The proposed framework can handle a wide class of objectives, and additional constraints, thus making it highly generic and practical.


{
\bibliographystyle{abbrv}
\balance
\bibliography{papers}
}

\pagebreak

\appendix

\section{Algorithms for \tops}

\subsection{Linearization for Optimal Algorithm}
\label{app:ip}

Each inequality in Ineq.~\eqref{eq:MaxUtilityConstraint} can be expressed as
$$U_j \leq \max\{U_{j1}, U_{j2}\}$$
where
$$U_{j1} \leq \max\{\pref(T_j,s_i)x_i|1 \leq i \leq \lfloor n/2 \rfloor\}$$
and
$$U_{j2} \leq \max\{\pref(T_j,s_i).x_i|\lfloor n/2 \rfloor+1 \leq i \leq n\}$$

The terms $U_{j1}$ and $U_{j2}$ can be recursively expressed in the same manner
as in case of $U_j$.

Finally, the constraint $U_j \leq \max\{U_{j1},U_{j2}\}$ can be linearized as
follows:
\begin{align*}
U_{j1} &\leq U_{j2} + M.y_j \\
U_{j2} &\leq U_{j1} + M.(1-y_j) \\
U_j &\leq U_{j2} + M.y_j \\
U_{j} &\leq U_{j1} + M.(1-y_j) \\
y_j &\in \{0,1\}
\end{align*}
where $M$ is a sufficiently large number.

\subsection{Inc-Greedy Algorithm}
\label{app:inc-greedy}

\begin{algorithm}[t]
	\caption{\incg}
	\label{algo:incg}
\begin{algorithmic}[1]
{
\scriptsize
\Procedure{\incg}{$k,\tau,\pref$}
\State Compute the sets $\tc,\stc$ and the site-weights.
\State $\mathcal{Q}_0 \leftarrow \varnothing$
\ForAll{$ s_i \in \mathcal{S}$} 
\State $U_0(s_i) \leftarrow w_i$
\State $\forall T_j \in \tc(s_i), \alpha_{ji} \leftarrow \pref(T_j,s_i)$
\EndFor
\State $\forall T_j \in \mathcal{T}, U_j \leftarrow 0$
\ForAll{$\theta=1,\dots,k$} \label{line:for loop of iterations}
\State $s_\theta\leftarrow\arg\max_{\forall s_i\in\mathcal{S} \setminus \mathcal{Q}_{\theta-1} }\{ U(\mathcal{Q}_{\theta-1}\cup\{s_i\} )-U(\mathcal{Q}_{\theta-1} )\}$
\State $\mathcal{Q}_\theta \leftarrow \mathcal{Q}_{\theta-1} \cup \{s_\theta\}$
\ForAll{trajectory $T_j \in \tc(s_\theta)$} \label{line:for loop of trajectories}
\State $U_j \leftarrow \max(U_j,\pref(T_j,s_\theta))$ \label{line:new Utility of trajectory j}
\If{$U_j$ changes}
\ForAll{site $s_i \in \stc(T_j), s_i \notin \mathcal{Q}_{\theta}$} \label{line:for loop of sites}
\State $\alpha_{ji}^\prime \leftarrow \max(0,\pref(T_j,s_i)-U_j)$ \label{line:delta utility}
\State $U_\theta(s_i)\leftarrow U_{\theta-1}(s_i)-(\alpha_{ji}-\alpha_{ji}^\prime)$
\State $\alpha_{ji}\leftarrow \alpha_{ji}^\prime$
\EndFor
\EndIf
\EndFor
\EndFor
\EndProcedure
}
\end{algorithmic} 
\end{algorithm}

The \incg procedure is outlined in Algorithm~\ref{algo:incg}.  The algorithm
begins by computing the sets $\tc,\stc$, and the site-weights. It then
initializes the marginal utilities of each site $s_i \in \mathcal{S}$ to its
site weight, i.e., $U_0(s_i) = w_i$.  Further, for each trajectory $T_j \in
\tc(s_i)$, it initializes $\alpha_{ji} = \pref(T_j,s_i)$.

In iteration $\theta=1,\dots,k$, it selects the site $s_{\theta}$ with
\emph{maximal} marginal utility, i.e., $\forall s_i \notin
\mathcal{Q}_{\theta-1}, U_\theta(s_\theta) \geq U_\theta(s_i)$.  If multiple
candidate sites yield the same maximal marginal gain we choose the one with
maximal weight. Still, if there are ties, then without loss of generality, the
site with the highest index is selected.

For each trajectory $T_j \in \tc(s_\theta)$ (line~\ref{line:for loop of
trajectories}), first the utility $U_j$ is updated (line~\ref{line:new Utility
of trajectory j}).  In case there is a gain in $U_j$, each site $s_i \in
\stc(T_j)$ is processed as indicated in line~\ref{line:for loop of sites}.  In
line~\ref{line:delta utility}, the new marginal utility of trajectory $T_j$
w.r.t. site $s_i$ is computed.  Using this value, the marginal utility of site
$s_i$ and $\alpha_{ji}$ are updated.

\section{NetClus}
\label{app:nc}

\subsection{Clustering based on Jaccard Similarity}
\label{app:Jaccard}

\begin{table}[t]
\scriptsize
	\begin{center}
	\begin{tabular}{c|c|c}
	\hline
	$\tau$ (Km) & Running Time (s) & Memory (in GB)\\
	\hline
0	&	2890	&	3.88	\\
0.2	&	12473	&	8.93	\\
0.4	&	26693	&	9.98	\\
0.8	&	28391	&	14.23	\\
1.2	&	29058	&	16.57	\\
1.6	&	32647	&	20.85	\\
2.0	&	34618	&	26.74	\\
2.4 & \multicolumn{2}{c}{Out of memory}\\
\hline 
	\end{tabular}
	\tabcaption{Performance of Jaccard similarity based clustering for Beijing road network using a Jaccard distance threshold $\jd=0.8$.}
	\label{tab:jaccard}
\end{center}
\end{table}

Given two nodes $s_i,s_j \in \s$, the Jaccard similarity between their trajectory covers is
\begin{align*}
J_s(s_i,s_j) &= \frac{|\tc(s_i) \cap \tc(s_j)|}{|\tc(s_i) \cup \tc(s_j)|}
\end{align*}
The Jaccard distance is
\begin{align*}
	J_d(s_i,s_j)=1-J_s(s_i,s_j)
\end{align*}

The nodes in the road network $V$ are clustered using the above measure as follows.

The node $v_1$ with the highest weight, i.e., the one whose sum of preference
scores over the covered trajectories as defined in Sec.~\ref{sec:approx}, is
chosen as the first cluster center of the cluster $g_1$. All nodes that are
within a threshold of $\jd$ Jaccard distance from $v_1$ are added to $g_1$.
From the remaining set of unclustered nodes, the node with the highest weight
is chosen as a cluster center, and the process is repeated until each node is
clustered.



The results of clustering on the \bl dataset are shown in
Table~\ref{tab:jaccard}.
The limitations of Jaccard similarity based clustering are explained in Sec.~\ref{sec:offline}.

\subsection{Complexity Analysis of Greedy-GDSP}
\label{app:gdsp}

\begin{thm}
	Greedy-GDSP runs in $O(|V| . (\nu \log \nu + \eta))$ time where $\nu$ is the
	maximum number of vertices that are reachable within the largest round-trip
	distance $R_{max}$ from any vertex $v$, and $\eta$ is the number of clusters
	returned by the algorithm.
\end{thm}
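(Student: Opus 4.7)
My plan is to split the running time into two clearly separable phases and bound each in turn, then show that they add up to the stated bound.

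\textbf{Phase 1 (precomputing the dominating sets).} For every vertex $v \in V$, I would run a single-source shortest-path procedure (a standard Dijkstra with a binary heap) from $v$, pruning the search as soon as the tentative distance exceeds $2R_{max}$. Because the search is capped at round-trip distance $R_{max}$, the frontier only ever contains vertices reachable within that radius, so by definition it touches at most $\nu$ vertices and $O(\nu)$ incident edges (the road network is effectively planar, so edges are linear in vertices). A heap-based Dijkstra on such a restricted subgraph costs $O(\nu \log \nu)$. Summing over all $|V|$ sources gives $O(|V| \cdot \nu \log \nu)$, which is the first term. The dominance relation is symmetric, as noted in the paper, so $\Lambda(v)$ is immediately available for every vertex after this pass.

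\textbf{Phase 2 (the greedy loop).} I would appeal to the FM-sketch trick used in Sec.~\ref{sec:fm} and reused by Greedy-GDSP as described in Sec.~\ref{sec:greedy algo for GDSP}. The crucial point is that we never explicitly delete already-dominated vertices from the stored sets $\Lambda(\cdot)$; instead we maintain a running FM sketch $F$ of all vertices already covered by chosen cluster centers, and for each candidate $v$ we estimate $|\Lambda(v) \cup \text{covered}|$ by a single bitwise OR of the sketch of $\Lambda(v)$ with $F$, followed by the FM cardinality estimator. Each such estimate is $O(1)$ because the FM sketch length is a constant ($32$ bits in the implementation). Picking the vertex maximizing incremental coverage therefore requires one scan over the at most $|V|$ not-yet-clustered vertices, i.e.\ $O(|V|)$ per iteration. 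After the winner is chosen, updating $F$ is again $O(1)$ (a bitwise OR). Since the algorithm terminates after $\eta$ iterations, Phase 2 contributes $O(|V| \cdot \eta)$, which is the second term.

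\textbf{Combining and the main obstacle.} Adding the two phases gives the target bound $O(|V|(\nu \log \nu + \eta))$. The part I expect to require the most care in writing cleanly is Phase 2: one has to argue that the FM-based incremental-coverage query truly runs in amortized $O(1)$ time (justifying why no explicit deletion of covered vertices from every other $\Lambda(w)$ is needed, despite the algorithm being phrased that way in Sec.~\ref{sec:greedy algo for GDSP}) and that the single scan per iteration is indeed over at most $|V|$ candidates rather than over the possibly much larger union of all remaining dominating sets. Once that equivalence between the ``set-update'' description and the ``sketch-union'' implementation is made explicit, the additive bound follows immediately.
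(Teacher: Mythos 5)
Your proposal is correct and follows essentially the same route as the paper's proof: the first term comes from the distance-bounded shortest-path computation at every vertex, costing $O(\nu \log \nu)$ per source under the near-planarity assumption, and the second term comes from the FM-sketch-based greedy selection, with $O(|V|)$ constant-time sketch unions per iteration over $\eta$ iterations. Your closing remark about replacing explicit deletions from the $\Lambda(\cdot)$ sets with sketch unions is exactly the implementation the paper's complexity argument relies on (via its reference to the \incg-style FM machinery), so there is no substantive difference in the argument.
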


\begin{proof} 
As the underlying graph
models a road network which is roughly planar, we assume that the number of
edges (i.e., road segments) incident on any set of $\nu$ vertices is $O(\nu)$.
Since the cost of running the shortest path algorithm from a given node in a graph $G = (V,E)$ is
$O(|E| \log |V|)$ \cite{algorithmsbook}, therefore, the initial distance
computation for a given source vertex requires $O(\nu \log \nu)$ time. Thus, the distance computation across all the vertices in $V$ takes $O(|V|.\nu \log \nu)$. 
The neighbors of each node $v$ can be maintained as a list sorted by the round-trip
distance from $v$.  Therefore, computing the dominating set for a particular $R$
requires at most $\nu$ time for each vertex.  The total time for the construction of dominating sets, hence, is $O(|V|.\nu)$.

Choosing the vertex with the largest \emph{incremental} dominating set requires
bitwise OR operations of FM sketches.  For $|V|$ vertices, there are at most $|V|-1$
such operations, each requiring $O(f)$ time (since there are $f$ copies of FM
sketches each with a constant size of $32$ bits).  As the number of clusters
produced is $\eta$, the running time is $O(|V|. \eta)$.

The total running time is, therefore, $O(|V| . (\nu \log \nu + \eta))$.
\hfill{}
\end{proof}

\subsection{Complexity of NetClus}
\label{app:netclus-complexity}

\begin{thm}
	\label{appthm:complexity}
	The time and space complexities of \nc are $O(k.\eta_p.\xi_p)$ and
	$O(\sum_{p=1}^t (\eta_p(\xi_p + \lambda_p)))$ respectively.
\end{thm}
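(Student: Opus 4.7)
My plan is to reduce the analysis of \nc to the already-established complexity of \incg (Theorem on \incg complexity) after showing that in the clustered space the two ``size'' parameters $m$ and $n$ that appear there shrink to $\xi_p$ and $\eta_p$ respectively. I will handle time and space separately, and for each I will first charge an ``offline-to-online handover'' cost (computing $\widehat{\tc}$) and then the iterative cost of greedy selection on the cluster representatives.

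For the time bound, I would proceed as follows. First, once the query threshold $\tau$ is given, identifying the instance $\inst_p$ by $p=\lfloor \log_{\epsinc}(\tau/\tau_{min})\rfloor$ is $O(\log t)$ and can be absorbed. Second, I need to bound the cost of constructing the approximate covering sets $\widehat{\tc}(r_i)$ from Eq.~\eqref{eq:approx tc}. By the argument in Sec.~\ref{sec:topsc problem}, it is enough to scan, for each cluster $g_i$, the trajectory lists $\tl(g_j)$ of its neighbours $g_j\in \cl(g_i)$; each such list has size at most $\xi_p$, so the cost is $O(\xi_p)$ per cluster (treating the bounded neighbourhood factor inside $\xi_p$), giving $O(\eta_p\xi_p)$ in total. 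Third, I apply \incg on the reduced instance with $\eta_p$ sites and at most $\xi_p$ relevant trajectories per site; transcribing the per-iteration analysis of the \incg complexity theorem (selecting the maximal site in $O(\eta_p)$, scanning $\tc(s_\theta)$ of size $O(\xi_p)$, and for each updated trajectory scanning $\stc$ of size $O(\eta_p)$) yields $O(\eta_p\xi_p)$ per iteration and hence $O(k\eta_p\xi_p)$ over $k$ iterations, which dominates the handover cost.

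For the space bound, I would enumerate, for one index instance $\inst_p$, exactly the five pieces of information stored per cluster in Sec.~\ref{sec:information}: the center $c_i$ and representative $r_i$ ($O(1)$), the trajectory list $\tl(g_i)$ of size $O(\xi_p)$, the neighbour list $\cl(g_i)$ (absorbed into the $\xi_p$ term since the number of neighbours is bounded by $\eta_p$ but in practice much smaller, and in any case no larger than what \tl already contributes), and the set of nodes with their round-trip distances, of size $O(\lambda_p)$. Summing over the $\eta_p$ clusters gives $O(\eta_p(\xi_p+\lambda_p))$ per instance, and summing over the $t$ instances yields $O(\sum_{p=1}^t \eta_p(\xi_p+\lambda_p))$. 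The working space required by \incg at query time on a single instance is subsumed by the storage of the index.

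The main subtlety I expect is justifying that the size of $\widehat{\tc}(r_i)$ is indeed $O(\xi_p)$, because in principle it is the union of $\tl$ over all clusters in $\cl(g_i)$. I would address this by noting that for the usable range $\tau\in[4R_p,4R_p\epsinc)$ the diameter of the neighbourhood is bounded in units of $R_p$, so by the minimality of the clustering produced by Greedy-GDSP the number of neighbouring clusters is effectively a constant (as also borne out empirically in Table~\ref{tab:cluster}), letting me fold the constant into $\xi_p$. Once this is in hand, the rest of the proof is bookkeeping, and the two bounds follow by the argument above.
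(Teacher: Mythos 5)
Your proposal is correct and follows essentially the same route as the paper's proof: bound the cost of building the approximate covering sets $\widehat{\tc}(r_i)$ by $O(\eta_p\xi_p)$ under the assumption $|\cl(g_i)|=O(1)$, invoke the \incg complexity on the reduced instance to get $O(k\eta_p\xi_p)$, and obtain the space bound by enumerating the per-cluster storage ($\tl$ plus node distances) and summing over clusters and index instances. The only cosmetic difference is that the paper simply states the constant-neighborhood assumption with empirical support from Table~\ref{tab:cluster}, whereas you sketch a heuristic justification for it; both treatments are equivalent in substance.
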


\begin{proof}
	The number of cluster representatives, $|\widehat{\mathcal{S}}|$ is at most
	the number of clusters, $\eta_p$. For any cluster $g_i \in \inst_p$,
	$|\tl(g_i)| = O(\xi_p)$.  We assume that the number of neighboring
	clusters for any cluster is bounded by a constant, i.e., $|\cl(g_i)| =
	O(1)$.
	(Table~\ref{tab:cluster} that lists typical mean values of
	$|\cl(g_i)|$ empirically supports the assumption.) 
	Hence, computing the set $\widehat{\tc}(r_i)$
	requires at most $O(\xi_p)$ time.  The total time across all the $\eta_p$
	clusters, therefore, is $O(\eta_p. \xi_p)$.  The inverse maps
	$\widehat{\cc}(T_j)$ for the trajectories can be computed along with
	$\widehat{\tc}(r_i)$.  Hence, the total time remains $O(\eta_p. \xi_p)$.
	Using Th.~\ref{thm:complexity of incg}, the subsequent \incg phase requires
	$O(k.\eta_p.\xi_p)$ time.  Therefore, the overall time complexity of the
	algorithm is $O(k.\eta_p.\xi_p)$.
  
	We next analyze the space complexity for a particular index instance
	$\inst_p$.  Storing the sets $\tl(g_i)$, each of size at most $\xi_p$,
	across all the $\eta_p$ clusters require $O(\eta_p.\xi_p)$ space. Storing
	the inverse maps $\cc(T_j)$ for all the trajectories, thus, also requires
	$O(\eta_p.\xi_p)$ space.  As there are at most $\lambda_p$ vertices in any
	cluster, storing their ids and distances from the cluster center requires
	$O(\lambda_p)$ space for each cluster.  Assuming $|\cl(g_i)| = O(1)$, the
	total storage cost for a cluster is $O(\lambda_p)$.  Therefore, the total
	storage cost is $O(\eta_p.(\xi_p + \lambda_p))$.  Summing across all the
	index instances, the entire space complexity is $O(\sum_{p=1}^t
	(\eta_p(\xi_p + \lambda_p)))$.
	\hfill{}
\end{proof}

\end{document}